\theoremstyle{plain}
\newtheorem{theorem}{Theorem}
\newtheorem{proposition}[theorem]{Proposition}
\newtheorem{lemma}[theorem]{Lemma}
\newtheorem{example}[theorem]{Example}
\theoremstyle{definition}
\newtheorem{definition}[theorem]{Definition}
\newtheorem{note}[theorem]{Note}
\numberwithin{equation}{section}
\numberwithin{theorem}{section}
\numberwithin{equation}{section}
\numberwithin{theorem}{section}
\definecolor{RED}{rgb}{1,0,0}\definecolor{BLUE}{rgb}{0,0,1} %DIF PREAMBLE
\journal{J. Differential Equations}
\begin{document}

\begin{frontmatter}

%% Title, authors and addresses

%% use the tnoteref command within \title for footnotes;
%% use the tnotetext command for the associated footnote;
%% use the fnref command within \author or \address for footnotes;
%% use the fntext command for the associated footnote;
%% use the corref command within \author for corresponding author footnotes;
%% use the cortext command for the associated footnote;
%% use the ead command for the email address,
%% and the form \ead[url] for the home page:
%%
%% \title{Title\tnoteref{label1}}
%% \tnotetext[label1]{}
%% \author{Name\corref{cor1}\fnref{label2}}
%% \ead{email address}
%% \ead[url]{home page}
%% \fntext[label2]{}
%% \cortext[cor1]{}
%% \address{Address\fnref{label3}}
%% \fntext[label3]{}

\title{A Lie systems approach to the Riccati hierarchy \\ and partial differential equations}

%% use optional labels to link authors explicitly to addresses:
%% \author[label1,label2]{<author name>}
%% \address[label1]{<address>}
%% \address[label2]{<address>}

\author{J. de Lucas}
\address{Department of Mathematical Methods in Physics, University of Warsaw, \\ ul. Pasteura 5, 02-093, Warszawa, Poland.}
\author{A.M. Grundland}
\address{Centre de Recherches Math\'ematiques, Universit\'e de Montr\'eal,\\
C.P. 6128, Succ. Centre-Ville, Montr\'eal (QC) H3C 3J7, Canada \\
Department of Mathematics and Computer Science, Universit\'e du Qu\'ebec \`a Trois-Rivi\`eres, \\
Trois-Rivi\`eres, CP 500, G9A 5H7, Qu\'ebec, Canada}

\begin{abstract}It is proved that the members of the Riccati hierarchy, the so-called {\it  Riccati chain equations}, can be considered as particular cases of projective Riccati equations, which greatly simplifies the study of the Riccati hierarchy. This also allows us to characterize Riccati chain equations geometrically in terms of the projective vector fields of a flat Riemannian metric and to easily derive their associated superposition rules. Next, we establish necessary and sufficient conditions under which it is possible to map second-order Riccati chain equations into conformal Riccati equations through a local  diffeomorphism. This fact can be used to determine superposition rules for particular higher-order Riccati chain equations which depend on fewer particular solutions than in the general case. Therefore, we analyze the properties of Euclidean, hyperbolic and projective vector fields on the plane in detail. Finally, the use of contact transformations enables us to apply the derived results to the study of certain integrable partial differential equations, such as the Kaup--Kupershmidt and Sawada--Kotera equations. 
\end{abstract}

\begin{keyword}
 conformal Riccati equation \sep Euclidean vector field \sep hyperbolic vector field \sep Lie system \sep projective Riccati equation \sep projective vector field  \sep Riccati hierarchy \sep superposition rule \sep third-order scattering problem \sep Vessiot--Guldberg Lie algebra
%% keywords here, in the form: keyword \sep keyword
	
\MSC 34A26 (primary) \sep 34A05 \sep 34A34 (secondary)
%% MSC codes here, in the form: \MSC code \sep code
%% or \MSC[2008] code \sep code (2000 is the default)

\end{keyword}

\end{frontmatter}

\section{Introduction}

The main objective of this work is to show that the members of the Riccati hierarchy can be understood as projective Riccati equations and to use this fact to analyze their geometric properties, superposition rules and related partial differential equations (PDEs). This approach allows us to obtain results that would be difficult to obtain by analyzing the Riccati hierarchy straightforwardly.

The Riccati hierarchy is of primary importance in the field of integrable systems (see e.g. \cite{GL99}). The first element of this hierarchy is, up to a change of the independent variable, the Riccati equation, namely
\begin{equation}\label{GenRic}
\frac{{\rm d}u}{{\rm d}x}=a_0(x)+a_1(x)u+a_2(x)u^2,\qquad u,x\in \mathbb{R},
\end{equation}
where $a_0(x),a_1(x),a_2(x)$ are arbitrary real-valued functions \cite{BLW91}. Riccati equations
frequently appear in physics, mathematics, control theory, astronomy and many other subjects (see \cite{Re72} and references therein). Mathematically, a Riccati equation can be understood as the differential equation describing the integral curves of a non-autonomous vector field taking values in a Lie algebra of vector fields on $\mathbb{R}$. This algebra is isomorphic to $\mathfrak{sl}(2,\mathbb{R})$, as noted by Lie \cite{1880} and Vessiot \cite{Ve83}. A modern framework for these facts has been developed by Cari\~nena, Grabowski and Marmo \cite{CGM07,Dissertations}. 

Despite its apparent simplicity, there is no general method for obtaining the general solution of a generic non-autonomous Riccati equation \cite{In44}. Nevertheless, the general solution  of a Riccati equation can be brought into the form
$$
u(x)=\frac{u_{(1)}(x)(u_{(3)}(x)-u_{(2)}(x))+ku_{(2)}(x)(u_{(1)}(x)-u_{(3)}(x))}{u_{(3)}(x)-u_{(2)}(x)+k(u_{(1)}(x)-u_{(3)}(x))},
$$
where $u_{(1)}(x),u_{(2)}(x),u_{(3)}(x)$ are different particular solutions and $k$ is an arbitrary real constant. This property of the Riccati equation is called a {\it superposition principle} \cite{Dissertations,CGM07,PW}. 

More generally, a non-autonomous first-order system of ordinary differential equations whose general solution can be described as an autonomous function of a generic set of particular solutions and some constants, a so-called {\it superposition rule}, is called a {\it Lie system} \cite{Dissertations,CIMM15,CGM07,PW}. Sophus Lie proved that each Lie system is related to a finite-dimensional Lie algebra of vector fields, called a {\it Vessiot--Guldberg Lie algebra}, which describes many of  its properties \cite{LS,PW}.

The second-order Riccati chain equation is a generalization of the Painlev\'e-Ince equation  \cite{In44,KL09,Ve95} and takes the form
$$
\frac{{\rm d}^2u}{{\rm d}x^2}+(\alpha_2(x)+3cu)\frac{{\rm d}u}{{\rm d}x}+c^2u^3+c\alpha_2(x)u^2+\alpha_1(x)u+\alpha_0(x)=0,
$$
where $c\in\mathbb{R}^*:=\mathbb{R}\backslash\{0\}$ and $\alpha_0(x),\alpha_1(x),\alpha_2(x)$ are arbitrary $x$-dependent functions.   
This differential equation appears, for instance, in the study of B\"acklund transformations \cite{GL99} and it  has recently been studied in \cite{CL10SecOrd2,CRS05,GL13}.
Second-order Riccati chain equations become Lie systems when written as a first-order system by adding a new variable, $v:={\rm d}u/{\rm d}x$ \cite{CL10SecOrd2}, which can be used to obtain a superposition rule for this particular case \cite{CL10SecOrd2}. 

%In particular, Lie systems has been used to study several particular cases of this equation in  \cite{Re72}. In any case, we will now give an alternative prove of this fact that is more appropriate to prove the remaining results of our work.

Each member of the Riccati hierarchy is called an {\it $s$-order Riccati chain equation}, where $s\in \mathbb{N}$. It was proved in \cite{GL13} that each $s$-order Riccati chain equation is a Lie system related to a Vessiot--Guldberg Lie algebra isomorphic to $\mathfrak{sl}(s+1,\mathbb{R})$. %It remains for us to derive a superposition principle for all $s$-order Riccati chain equations with $s>2$.

As a first new result, Theorem \ref{MT1} provides a family of diffeomorphisms $\{\phi_{c,s}\}_{(c,s)\in \mathbb{R}^*\!\times\mathbb{N}}$ mapping each member of the Riccati hierarchy, considered as a first-order system in the standard way, into a projective Riccati equation. The diffeomorphisms under consideration are globally defined, which allows us to establish that Riccati chain equations can be studied as particular types of 
projective Riccati ones. This allows us to recover, as a particular example, the results of \cite{CGR15} concerning first- and second-order Riccati equations. 
%As a consequence, we show that $s$-order Riccati chain equations are related to a Vessiot--Guldberg Lie algebra isomorphic to $\mathfrak{sl}(s+1,\mathbb{R})$.
%This gives a new easier proof to this known fact.

The diffeomorphisms $\{\phi_{c,s}\}_{(c,s)\in \mathbb{R}^*\!\times{ \mathbb{N}}}$ possess several advantages which are absent in the previous literature on the Riccati hierarchy. First, they transform the complicated form of the $s$-order Riccati chain equations into simpler projective Riccati equations (see e.g. Table \ref{table1} and equations (\ref{ProRicc})). Second, the diffeomorphisms allow us to prove that Riccati chain equations are, essentially, the  Lie systems determined by a finite-dimensional Vessiot--Guldberg Lie algebra of projective vector fields relative to a flat Riemannian metric. This easily allows us to determine when a system of differential equations can be mapped into a projective Riccati equation. The diffeomorphisms $\{\phi_{c,s}\}_{(c,s)\in \mathbb{R}^*\!\times{ \mathbb{N}}}$ can be understood as changes of variables mapping the flat Riemannian metrics associated with the Riccati chain equations into diagonal forms. This extends in a very simple way  the relations of 
\cite{CGR15} between the very lowest members of the Riccati hierarchy and projective vector fields to the whole hierarchy.
Third, since superposition rules for projective Riccati equations are known \cite{AW80}, the family $\{\phi_{c,s}\}_{(c,s)\in \mathbb{R}^*\!\times{ \mathbb{N}}}$ enables us to obtain a superposition rule for all $s$-order Riccati chain equations as first-order systems. This is a much more powerful approach than the one provided in \cite{CL10SecOrd2}, where only second-order Riccati chain equations were considered.  

Our characterization of the Riccati hierarchy in terms of projective Riccati equations constitutes a new way of characterizing and studying second-order Riccati chain equations that can be mapped through a diffeomorphism $\phi:{\rm T}\mathbb{R}\rightarrow \mathbb{R}^2$ into conformal Euclidean and/or hyperbolic Riccati equations, respectively. This characterization is described in Theorems \ref{MT2} and \ref{MT3}.

As a consequence of the technique described above, we obtain new results concerning the structure of conformal and projective Lie algebras of vector fields on the plane. In particular,  Table \ref{table5} summarizes all new results on projective and conformal Lie algebras of vector fields on $\mathbb{R}^2$ given in Propositions \ref{ConLie} to \ref{HypPro}. Table \ref{table5} also includes all the relation inclusions of these Lie algebras, which can be obtained after a straightforward but lengthy calculation. We derive the so-called invariant distributions for all finite-dimensional Lie algebras of vector fields on the plane (see the last column of Table \ref{table3}). %This latter new result can be obtained after a straightforward calculation and therefore we will skip the description of its obtention. 

It is also proved that second-order Riccati chain equations which are not autonomous cannot be described through a Lie system related to a Vessiot--Guldberg Lie algebra of Hamiltonian vector fields with respect to a symplectic structure, namely a Lie--Hamilton system \cite{CLS13,Gul,Ve83}. For the so-called {\it second-order affine Riccati chain equations}, the necessary and sufficient conditions which ensure that these equations can be described through Lie--Hamilton systems are determined. 

Next, we show that certain B\"acklund transformations for partial differential equations can be studied through projective Riccati equations and we prove that second-order Riccati chain equations can be mapped through a contact transformation into equations of the Gambier family. This makes it possible to study Gambier equations G25 and G27 and the related partial differential equations via Lie systems. As a particular instance, we apply our methods to the Sawada--Kotera  and Kaup-Kupershmidt equations. Finally, the relation between Gambier equation G25 and  the Sturm-Liouville problem is analyzed. 

The plan of the paper is as follows. In Section 2 we describe the fundamental geometric properties of Lie systems and related notions. The basic properties of the Riccati hierarchy are discussed in detail in Section 3. We prove in Section 4 that every member of the Riccati hierarchy can be mapped onto a projective equation through an autonomous diffeomorphism.  Section 5 is devoted to proving that Riccati chain equations are Lie systems with a Vessiot--Guldberg Lie algebra of projective vector fields relative to a flat Riemannian metric. The results found in Section 4 are used in Section 6 to obtain superposition rules for the whole Riccati hierarchy. Section 7 is concerned with proving certain new results on the structure and relations of Lie algebras of conformal and projective vector fields on the plane. The results of Section 7 are used in Section 8 to classify all second-order Riccati chain equations related to Vessiot--Guldberg Lie algebras of conformal vector fields.
 We then show how to use our results in the study of partial differential equations such as the Kaup--Kupershmidt and Sawada--Kotera equations in Section 9. A contact transformation is used to relate certain differential equations to members of the Riccati hierarchy in Section 10. A last application of Gambier equations related to Lie systems is analyzed in Section 11. The last section summarizes the obtained results and contains some suggestions regarding possible further developments.
  
\section{Fundamentals}\label{LSLS}
The methodological approach used in this work is based on the study of non-autonomous systems of first-order differential equations by means of vector fields along projections. If not otherwise stated, all structures are assumed to be smooth. To simplify the notation and to avoid unnecessary technical problems, diffeomorphisms between structures are considered to be local and defined at generic points unless explicitly expressed the contrary.

A {\it vector field} on $N$ along a projection $\pi_N:P\rightarrow N$ is a map $X:p\in P\mapsto X_p\in TN$ 
for which $\tau_N\circ X=\pi_N$, where $\tau_N:TN\rightarrow N$ is the tangent bundle projection onto $N$. If we assume that $P=\mathbb{R}\times N$, and we call $x$ the canonical variable on $\mathbb{R}$, then $X$ is called a {\it non-autonomous} or {\it $x$-dependent vector field}. An $x$-dependent vector field amounts to
a family of vector fields $\{X_x\}_{x\in\mathbb{R}}$ with $X_x:u\in N\mapsto
X(x,u)\in TN$ for all $x\in\mathbb{R}$ and vice versa \cite{Dissertations}.  We assume hereafter  that $X$  represents a non-autonomous vector field.

We call {\it integral curves} of $X$ the integral curves 
$\gamma:\mathbb{R}\mapsto \mathbb{R}\times N$ of the {\it suspension} of $X$,
i.e. the vector field $X(x,u)+\partial/\partial x$ on $\mathbb{R}\times N$ \cite{FM}. Every
integral curve $\gamma$ admits a reparametrization $
x=x(t)$ such that $\gamma(x)=( x, u(x))$ and
$$
\frac{{\rm d}(\pi_N \circ \gamma)}{{\rm d}x}(x)=(X\circ \gamma)( x).
$$
This system is referred to as the {\it associated system} of $X$. Conversely,
every non-autonomous system of first-order ordinary differential equations in normal form describes the
integral curves of a unique non-autonomous vector field. This establishes a
bijection between non-autonomous vector fields and systems of first-order ordinary
differential equations in normal form, which
 justifies the use of $X$ to denote both a non-autonomous vector field and its
associated system.

\begin{definition}  The  {\it irreducible Lie algebra} of an $x$-dependent vector
field $X$ on $N$ is the smallest (in the sense of inclusion) real Lie algebra, 
$V^X$, containing the vector fields $\{X_x\}_{x\in\mathbb{R}}$.  
\end{definition}

\begin{definition} Given a finite-dimensional Lie algebra of vector fields $V$ on $N$, its {\it associated distribution} is 
the generalized distribution 
$
\mathcal{D}^V_p:=\{X_p:X\in V\}\subset T_pN,\,\, \forall p\in N.
$
A Lie algebra $V$ of vector fields on $\mathbb{R}^2$ is called {\it primitive} when its elements do not leave any one-dimensional distribution on $\mathbb{R}^2$ invariant (when acted on through 
Lie brackets). Otherwise, we say that $V$ is {\it imprimitive}. If $V$ admits one or more invariant distributions, we say that $V$ is {\it mono-imprimitive} or {\it multi-imprimitive}, respectively.
\end{definition}

For instance, the {\it conformal vector fields} relative to a pseudo-Riemannian metric $g$ on $N$, i.e. the vector fields $X$ on $N$ satisfying  $\mathcal{L}_Xg=f_Xg$ for a certain function $f_X\in C^\infty(N)$ called the {\it potential function} of $X$, form a Lie algebra of vector fields. Given the Lie algebras of vector fields P$_7$ and I$_{11}$ (see Table \ref{table3}), it is known that P$_7\simeq\mathfrak{so}(3,1)$ is a maximal finite-dimensional Lie algebra of conformal polynomial vector fields on $\mathbb{R}^2$ relative to a Euclidean metric and I$_{11}\simeq \mathfrak{so}(2,2)$ is a maximal Lie algebra of conformal polynomial vector fields on $\mathbb{R}^2$ relative a hyperbolic metric (cf. \cite{BL00,GKP92}).
%It is known that P$_7$ is primitive, while I$_{11}$ is imprimitive \cite{GKP92}.

Although Lie classified Vessiot--Guldberg Lie algebras on $\mathbb{R}^2$, the result was not clarified until the work of Olver, Artemio and Kamran \cite{GKP92}. Table \ref{table2}, the so-called GKO (Gonz\'alez-Kamran-Olver) classification, details all Vessiot--Guldberg Lie algebras described in  \cite{GKP92}. To analyze Riccati chain equations, it is convenient to obtain the invariant distributions of all such Lie algebras. These invariant distributions can be obtained algorithmically after a long but straightforward calculation. Therefore, we detail in Table \ref{table2} the invariant distributions for Vessiot--Guldberg Lie algebras on $\mathbb{R}^2$ with no further details.

 Let us now turn to some fundamental notions appearing in the theory of Lie
systems. 

\begin{definition} A {\it superposition rule} depending on $m$ particular
solutions for a system $X$ on $N$ 
 is a function $\Phi:N^{m}\times N\rightarrow
N$, $u=\Phi(u_{(1)}, \ldots,u_{(m)};\lambda)$ such that the general
 solution $u(x)$ of $X$ can be brought into the form
  $u(x)=\Phi(u_{(1)}(x), \ldots,u_{(m)}(x);\lambda),$
where $u_{(1)}(x),\ldots,u_{(m)}(x)$ is any generic family of
particular solutions and $\lambda$ is an arbitrary  element of $N$. 
 \end{definition}

The conditions ensuring that a system $X$ possesses a superposition rule are
given by the {\it Lie--Scheffers Theorem} \cite[Theorem 44]{LS} (for a modern geometric description see \cite[Theorem 1]{CGM07} and \cite{Dissertations,OG00}). 

\begin{theorem} {\bf (Lie--Scheffers Theorem)} A system $X$ on $N$ admits a superposition rule if and only if $X={{\sum_{\alpha=1}^r}}b_\alpha(x)X_\alpha$ 
for a certain family $b_1(x),\ldots,b_r(x)$  of $x$-dependent functions and a
collection  $X_1,\ldots,X_r$ of vector fields on $N$  spanning 
an $r$-dimensional real Lie algebra.
\end{theorem}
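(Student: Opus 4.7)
The plan is to establish each implication separately, using the standard Lie-theoretic machinery already invoked in the paper.

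For the sufficiency direction (Lie algebra structure implies superposition rule), the strategy is to integrate the $x$-dependent vector field through a Lie group action. Given $X=\sum_{\alpha=1}^rb_\alpha(x)X_\alpha$ with $\{X_\alpha\}$ spanning an $r$-dimensional Lie algebra $V$, I would realize $V$ as the fundamental vector fields of a (local) Lie group action $\Psi:G\times N\rightarrow N$, where $G$ has Lie algebra isomorphic to $V$. The time-dependent equation on $N$ lifts to a right-invariant equation on $G$, which integrates to a curve $g(x)\in G$ starting at the identity, and every solution of $X$ takes the form $u(x)=\Psi(g(x),u_0)$. Choosing $m$ large enough so that the diagonal action of $G$ on $N^m$ has discrete stabilizer at some point, the map $g\mapsto(\Psi(g,u_{(1)}(0)),\ldots,\Psi(g,u_{(m)}(0)))$ is locally invertible; composing with $\Psi(\cdot,\lambda)$ produces the desired superposition rule $\Phi$.

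For the necessity direction (superposition rule implies Lie algebra structure), the key construction is the diagonal prolongation. From $\Phi:N^{m+1}\rightarrow N$, locally solving $u=\Phi(u_{(1)},\ldots,u_{(m)};\lambda)$ for the coordinates of $\lambda$ produces $n=\dim N$ functions $F_1,\ldots,F_n$ on $N^{m+1}$ that are common first integrals for the diagonal prolongations $\widetilde{X}_x:=X_x^{(0)}+\cdots+X_x^{(m)}$ at every $x\in\mathbb{R}$. Since $\widetilde{X}_x(F_i)=0$ for all $x$ and $i$, every iterated Lie bracket of the $\widetilde{X}_x$ still annihilates the $F_i$. Exploiting the special "diagonal" structure of the $\widetilde{X}_x$ and the rank condition on the Jacobian of $(F_1,\ldots,F_n)$, one shows that the linear span of all such iterated brackets is finite-dimensional. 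Projecting onto the final factor $N$ of $N^{m+1}$ yields a finite-dimensional Lie algebra $V^X$ containing $\{X_x\}_{x\in\mathbb{R}}$; selecting any basis $\{X_\alpha\}$ of $V^X$ then gives the decomposition $X=\sum_\alpha b_\alpha(x)X_\alpha$.

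The main obstacle is the necessity direction, specifically passing from "all iterated brackets of $\widetilde{X}_x$ annihilate $F_1,\ldots,F_n$" to "these iterated brackets span a finite-dimensional space." Annihilating $n$ functionally independent functions on $N^{m+1}$ leaves a $C^\infty$-module of vector fields that is infinite-dimensional as a real vector space, so finite-dimensionality comes not from the annihilation condition alone but from the diagonal form of the prolongations combined with it: each $\widetilde{X}_x$ is determined by a single vector field on $N$ replicated on each factor, and this rigidity forces a finite-dimensional bound. Making this bound precise requires a careful rank analysis of the map associated with $\Phi$, which is the technical heart of the argument; once established, the projection to the last factor and the reconstruction of the basis decomposition are routine.
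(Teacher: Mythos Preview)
The paper does not actually prove this theorem: it is stated as a classical result and attributed to \cite[Theorem 44]{LS}, with modern geometric treatments referenced in \cite[Theorem 1]{CGM07} and \cite{Dissertations,OG00}. There is therefore no proof in the paper to compare against.

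That said, your sketch follows the standard modern approach found in the cited references, particularly \cite{CGM07} and \cite{Dissertations}. The sufficiency direction via the Lie group action and free diagonal action on $N^m$ is exactly how it is done there. For necessity, your identification of the diagonal prolongation $\widetilde{X}_x$ and the common first integrals $F_1,\ldots,F_n$ is also the right starting point, and you correctly flag the crux: annihilating $n$ functions on $N^{m+1}$ alone does not force finite-dimensionality. The argument in \cite{CGM07} closes this by observing that the $F_i$ have full rank in the last-factor variables, so at a generic point the values of a diagonal vector field on the first $m$ copies determine its value on the last copy; iterating this over the copies bounds the dimension of the span of the $\widetilde{X}_x$ (and their brackets, which are again diagonal) by $nm$. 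Your proposal gestures at this but does not state the bound or the rank mechanism explicitly; if you were writing this out in full, that is the step to make precise.
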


Non-autonomous systems of first-order ordinary differential equations possessing a superposition rule
are called {\it Lie systems}. The Lie--Scheffers 
Theorem states that every Lie system $X$ is related to (at least) one
finite-dimensional real Lie algebra of vector fields 
$V$, a so-called {\it Vessiot--Guldberg Lie algebra}, satisfying
$\{X_x\}_{x\in\mathbb{R}}\subset V$. The irreducible Lie algebra of $X$ allows us to rewrite more intrinsically the Lie--Scheffers Theorem as follows
\cite{Dissertations}. 

\begin{theorem}\label{ALST}{\bf (Abbreviated Lie--Scheffers Theorem)} A
system $X$ admits a superposition rule if and only if $V^X$ is
finite-dimensional.
\end{theorem}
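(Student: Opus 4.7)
The plan is to reduce Theorem \ref{ALST} to the classical Lie--Scheffers Theorem stated immediately above, by showing that the decomposition hypothesis ``$X_x = \sum_{\alpha=1}^r b_\alpha(x) X_\alpha$ with $X_1,\ldots,X_r$ spanning a finite-dimensional real Lie algebra'' is equivalent to the intrinsic condition $\dim V^X < \infty$. Both implications should hinge on the universal property of $V^X$ as the smallest real Lie algebra containing the family $\{X_x\}_{x\in\mathbb{R}}$.

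The forward implication is essentially tautological. If $X$ admits a superposition rule, the classical Lie--Scheffers Theorem produces a finite-dimensional real Lie algebra $V$ of vector fields together with functions $b_1(x),\ldots,b_r(x)$ such that $X_x=\sum_{\alpha=1}^r b_\alpha(x)X_\alpha \in V$ for every $x\in\mathbb{R}$. Since $V^X$ is by definition the smallest Lie algebra containing $\{X_x\}_{x\in\mathbb{R}}$, one gets $V^X\subset V$ and hence $\dim V^X\leq r<\infty$. For the converse I would fix a real basis $Y_1,\ldots,Y_r$ of $V^X$, write the (unique) expansion $X_x=\sum_{\alpha=1}^r b_\alpha(x)Y_\alpha$, and invoke the Lie--Scheffers Theorem in the reverse direction to obtain a superposition rule for $X$.

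The only step where I anticipate real work is in the converse, namely checking that the coefficients $b_\alpha(x)$ produced by such a decomposition are genuine smooth functions of $x$, as required by the classical theorem. This is not automatic from the inclusion $X_x\in V^X$ alone: one must combine the joint smoothness of the non-autonomous vector field $X$ in $(x,u)$, tacitly assumed at the beginning of Section \ref{LSLS}, with the $\mathbb{R}$-linear independence of $Y_1,\ldots,Y_r$ as vector fields on $N$. I would discharge this by selecting finitely many points $p_1,\ldots,p_k\in N$ and tangent covectors at them such that the resulting evaluation map $V^X\to\mathbb{R}^r$ is a linear isomorphism; then each $b_\alpha(x)$ is obtained from $X_x$ by inverting a single fixed real linear system and therefore inherits the regularity of $X$ in $x$. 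Once this smoothness has been established, the classical Lie--Scheffers Theorem applies to the decomposition and yields the desired superposition rule, closing the equivalence.
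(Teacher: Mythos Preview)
Your argument is correct and is precisely the standard reduction of the abbreviated statement to the classical Lie--Scheffers Theorem. The paper itself does not give a proof of Theorem~\ref{ALST}: it merely states the result as an intrinsic reformulation of the preceding theorem and cites \cite{Dissertations} for the details. Your two implications, including the care you take over the smoothness of the coefficients $b_\alpha(x)$ via an evaluation-at-points argument, constitute exactly the kind of verification that the cited reference carries out, so there is nothing to compare against here beyond noting that your write-up supplies what the paper leaves implicit.
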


Finally, let us describe some types of Lie systems relevant to the present work. Strictly speaking (see \cite{AHW80}), projective Riccati equations on $\mathbb{R}^n$ take the form
\begin{equation*}
\frac{{\rm d}\xi}{{\rm d}x}=b_0(x)+A(x)\xi-\gamma(x)\xi +\langle \xi,b_2(x)\rangle \xi,\qquad \xi\in \mathbb{R}^n,
\end{equation*}
where $A(x)$ is an $n\times n$ matrix with real coefficients, $b_0(x),b_2(x)\in \mathbb{R}^n$, we assume that $\gamma(x)$ is an $x$-dependent scalar function such that ${\rm Tr} [A(x)]+\gamma(x)=0$, and $\langle\cdot,\cdot\rangle$ is the canonical Euclidean metric on $\mathbb{R}^n$. In general, projective Riccati equations can be rewritten as
$$
\frac{{\rm d}\xi}{{\rm d}x}=b_0(x)+[A(x)+p(x) {\rm Id}_n]\xi-(p(x)+\gamma(x))\xi+\langle \xi,b_2(x)\rangle \xi,\qquad 
$$
for $n\,p(x):=\gamma(x),$ which ensures that $A(x)+p(x){\rm Id}_n$ is a traceless matrix. Hence, a projective Riccati  equation can be written as a differential equation of the form
\begin{equation}\label{ProRicc}
\frac{{\rm d}\xi}{{\rm d}x}=b_0(x)+C(x)\xi+\langle \xi,b_2(x)\rangle \xi,\qquad \xi\in \mathbb{R}^n,
\end{equation}
where $C(x)$ is an $n\times n$ matrix with real entries, and vice versa. This allows us to simplify the expression of the projective Riccati equations. Each projective Riccati equation on $\mathbb{R}^n$ is a  Lie system associated with a Vessiot--Guldberg Lie algebra $V_n^{\rm Pr}\simeq \mathfrak{sl}(n+1,\mathbb{R})$ \cite{AHW80}.

Meanwhile, a conformal Riccati equation on the plane takes the form \cite{AW80}
$$
\frac{{\rm d}\xi}{{\rm d}x}=b_0(x)+A(x)\xi+\gamma(x)\xi +b_2(x)\langle \xi,\xi\rangle-2\langle \xi,b_2(x)\rangle \xi,\qquad \xi\in \mathbb{R}^n,
$$
where $\langle A(x)\xi_1,\xi_2\rangle+\langle \xi_1,A(x)\xi_2\rangle=0$ for every $\xi_1,\xi_2\in \mathbb{R}^n$, the function $\gamma(x)$ is  an arbitrary $x$-dependent scalar function and $\langle\cdot,\cdot\rangle$ is a non-degenerate metric of signature $(p,q)$ with $p+q=n$. Conformal Riccati equations are Lie systems related to a Vessiot--Guldberg Lie algebra $V^{(p,q)}$ of conformal vector fields relative to a flat metric $g$ of signature $(p,q)$ and therefore isomorphic to $\mathfrak{so}(p+1,q+1)$ \cite{AHW81}.  

This work is mainly concerned with two types of conformal Riccati equations. The first one is the conformal Riccati equation on $\mathbb{R}^2$ related to the hyperbolic metric $\langle \xi,\bar \xi\rangle=\xi_1\bar \xi_2+\xi_2\bar \xi_1$ with $\xi:=(\xi_1,\xi_2)^T,\bar\xi:=(\bar \xi_1,\bar \xi_2)^T\in\mathbb{R}^2$. The corresponding conformal Riccati equation takes the form
\begin{equation}
 \begin{array}{l}
\frac{{\rm d}\xi_1}{{\rm d}x}=b^u_0(x)+B_{uu}(x)\xi_1-2\xi^2_1b_2^v(x),\\
\frac{{\rm d}\xi_2}{{\rm d}x}=b^v_0(x)+B_{vv}(x)\xi_2-2\xi^2_2b_2^u(x).
 \end{array}\label{system}
\end{equation}
for arbitrary real functions $b^u_0(x),b^v_0(x),B_{uu}(x),B_{vv}(x),b_2^u(x),b_2^v(x)$. It can be seen that equations (\ref{system}) are related to  non-autonomous vector fields taking values in the Lie algebra I$_{11}$ given in Table \ref{table3}, which consists of conformal vector fields relative to the metric ${\rm d}\xi_1\otimes {\rm d}\xi_2+{\rm d}\xi_2\otimes {\rm d}\xi_1$. For the sake of brevity, we will hereafter call (\ref{system}) a {\it hyperbolic Riccati equation}.

Meanwhile, a conformal Riccati equation related to the Euclidean metric $\langle \xi,\bar \xi\rangle=\xi_1\bar \xi_1+\xi_2\bar \xi_2$ with $\xi:=(\xi_1,\xi_2)^T,\bar\xi:=(\bar \xi_1,\bar \xi_2)^T\in\mathbb{R}^2$ takes the form
\begin{equation}\label{EuRic}
\begin{aligned}
\frac{{\rm d}\xi_1}{{\rm d}x}&=b^u_0(x)+B_{uu}(x)\xi_1-B_{uv}(x)\xi_2+(\xi^2_2-\xi_2^1)b_2^u(x)-b_2^v(x)2\xi_1\xi_2,\\
\frac{{\rm d}\xi_2}{{\rm d}x}&=b^v_0(x)+B_{uv}(x)\xi_1+B_{uu}(x)\xi_2+(\xi^2_1-\xi_2^2)b_2^v(x)-b_2^u(x)2\xi_1\xi_2,
\end{aligned}
\end{equation}
for certain functions $b^u_0(x),b^u_0(x),B_{uu}(x),B_{uv}(x),b_2^u(x),b_2^v(x)$. 
This system of differential equations  is related to a non-autonomous vector field taking values in the Lie algebra P$_{7}$ of Table \ref{table3}, which is made of conformal vector fields relative to the metric ${\rm d}\xi_1\otimes {\rm d}\xi_1+{\rm d}\xi_2\otimes {\rm d}\xi_2$. For conciseness, we will refer to equations of the form (\ref{EuRic}) as {\it Euclidean Riccati equations}.

Since there exists no finite-dimensional Lie algebra of vector fields containing the Vessiot--Guldberg Lie algebra $V^{(p,q)}$ \cite{BL00,GKP92}, there exists no diffeomorphism $\phi:\mathbb{R}^n\rightarrow \mathbb{R}^n$ mapping all conformal Riccati equations into projective ones. Nevertheless, particular conformal Riccati equations, e.g. autonomous ones, can be mapped into cases of projective Riccati equations.

\section{Introduction to the Riccati hierarchy}\label{Appl}

An {\it $s$-order Riccati chain equation} \cite{GL99} is a differential equation of the form
\begin{equation}\label{sRicc}
L_c^su+\sum_{j=1}^s\alpha_j(x)L_c^{j-1}u+\alpha_0(x)=0,\qquad u,x\in\mathbb{R},\qquad c\in \mathbb{R}^*,\qquad s\in \mathbb{N},
\end{equation}
where $\alpha_0(x),\ldots,\alpha_s(x)$ are arbitrary $x$-dependent real functions, $L^s:=L\circ \cdots \circ L (s-{\rm times})$, $L_c^0u:=u$,  and $L_c$ is the differential operator on the real line given by
\begin{equation}\label{L}
L_c:=\frac{\rm d}{{\rm d}x}+c\,u,\qquad c\in\mathbb{R}.
\end{equation}
There exists a more general definition of $s$-order Riccati chain equations, but it is equivalent to ours  through a simple change of the independent variable \cite{CRS05}. For instance, the first element of the most general Riccati hierarchy is (\ref{GenRic}), while in our case the first element is given in Table \ref{table1} for $s=1$. A trivial $x$-dependent change of variables maps one into the other. In view of these remarks, we can restrict ourselves to (\ref{sRicc}). For the sake of completeness, we will also consider the hierarchy referred to as the {\it $s$-order affine Riccati chain equations}, which is  given by (\ref{sRicc}) for $c=0$.

Expressions (\ref{sRicc}) and (\ref{L}) show that each $s$-order affine Riccati chain equation is affine, which motivates the term. Otherwise,  (\ref{sRicc}) can be linearized through {\it the Cole--Hopf transformation} \cite{Ho50}
$$
u(x):=\frac{1}{c\Psi}\frac{{\rm d}\Psi}{{\rm d}x},\qquad \Psi:x\in\mathbb{R}\mapsto \Psi(x)\in\mathbb{R},\qquad c\in \mathbb{R}^*,
$$
giving rise to the $(s+1)$-order linear differential equation
$$
\begin{gathered}
\sum_{j=0}^s\alpha_j(x)\frac{{\rm d}^j\Psi}{{\rm d}x^j}+\frac{{\rm d}^{s+1}\Psi}{{\rm d}x^{s+1}}=0,
\end{gathered}
$$
with ${\rm d}^0\Psi/{\rm d}x^0:=\Psi$.
In Table \ref{table1} we find the first members of the Riccati hierarchy. Let us analyze them to illustrate some of their properties.

%%%%%%%%%%%%%%%%%%%% table1 %%%%%%%%%%%%%%%%%%%%%%
\begin{table}[t] {\footnotesize
 \noindent
\caption{{\small First elements of the Riccati hierarchy. The Lie algebra $\mathfrak{g}$ is isomorphic to the Vessiot--Guldberg Lie algebra associated with the Lie system obtained by writing the $s$-order Riccati chain equation as a first-order system in the standard way (see \cite{GL13} for details).}}
\label{table1}
\medskip
\noindent\hfill
 \begin{tabular}{ |m{1cm}| m{1.2cm}  |  p{10.5cm} |}
\hline
&  &\\[-1.9ex]
$s$&$\mathfrak{g}$ & Riccati chain equation\\[+1.0ex]
\hline
 &  &\\[-1.9ex]
$1$&$\mathfrak{sl}(2,\mathbb{R})$ & $\frac{{\rm d}u}{{\rm d}x}+cu^2+\alpha_1(x)u+\alpha_0(x)=0$ \\[+1.0ex]
$2$&$\mathfrak{sl}(3)$ & $\frac{{\rm d}^2u}{{\rm d}x^2}+(\alpha_2(x)+3cu)\frac{{\rm d}u}{{\rm d}x}+c^2u^3+c\alpha_2(x)u^2+\alpha_1(x)u+\alpha_0(x)=0$\\[+1.0ex]
$3$&$\mathfrak{sl}(4,\mathbb{R})$ &$\frac{{\rm d}^3u}{{\rm d}x^3}+(\alpha_3(x)+4cu)\frac{{\rm d}^2u}{{\rm d}x^2}+3c\left(\frac{{\rm d}u}{{\rm d}x}\right)^2+[6c^2u^2+3c\alpha_3(x)u+\alpha_2(x)]\frac{{\rm d}u}{{\rm d}x}+c^3u^4$\\
&&$+c^2\alpha_3(x)u^3+c\alpha_2(x)u^2+\alpha_1(x)u+\alpha_0(x)=0$\\[+1.0ex]

\hline
 \end{tabular}
\hfill}
\end{table}

The first differential equation in Table \ref{table1} is a particular type of {Riccati equation}. It is associated with the $x$-dependent vector field
$$
X^{\rm 1R}=-\alpha_0(x)X^{\rm 1R}_0-\alpha_1(x)X^{\rm 1R}_1-cX^{\rm 1R}_2,
$$
where $X^{\rm 1R}_0:=\partial/\partial u$, $X^{\rm 1R}_1:=u\partial/\partial u$ and $X^{\rm 1R}_2:=u^2\partial/\partial u$. These vector fields satisfy the commutation relations
$$
[X^{\rm 1R}_0,X^{\rm 1R}_1]=X^{\rm 1R}_0,\qquad [X^{\rm 1R}_0,X^{\rm 1R}_2]=2X^{\rm 1R}_1,\qquad [X^{\rm 1R}_1,X^{\rm 1R}_2]=X^{\rm 1R}_2.
$$
Hence, $V_{\rm 1}^{\rm RC}:=\langle X^{\rm 1R}_0,X^{\rm 1R}_1,X^{\rm 1R}_2\rangle$ becomes a Lie algebra of vector fields isomorphic to $\mathfrak{sl}(2,\mathbb{R})$. It is worth noting that every Lie algebra of vector fields on $\mathbb{R}$ is locally diffeomorphic around a generic point to a Lie subalgebra of $V_1^{\rm RC}$ \cite{GKP92,1880}.

The second member of the Riccati hierarchy is the second-order Riccati chain equation, which can be written as a first-order system by adding a new variable $v:={\rm d}u/{\rm d}x$. This gives rise to the system on  ${\rm T}\mathbb{R}$ given by
\begin{equation}\label{RicSOr}
\left\{\begin{aligned}
\frac{{\rm d}u}{{\rm d}x}&=v,\\
\dfrac{{\rm d}v}{{\rm d}x}&=-3cuv-c^2u^3-\alpha_0(x)-\alpha_1(x)u-\alpha_2(x)\left(cu^2+v\right). 
\end{aligned}\right.
\end{equation}
The $x$-independent change of variables $\bar u:=cu$ and $\bar v:=cv$, with $c\in\mathbb{R}^*$, maps this system into the Lie system studied in \cite{CL10SecOrd2,GL13}, which is related to a Vessiot--Guldberg Lie algebra isomorphic to $\mathfrak{sl}(3)$. As a consequence, (\ref{RicSOr}) is a Lie system associated with a Vessiot--Guldberg Lie algebra $V^{\rm RC}_{2}\simeq\mathfrak{sl}(3)$. There exists no finite-dimensional Lie algebra of vector fields on $\mathbb{R}^2$ containing $V^{\rm RC}_{2}$ different from $V^{\rm RC}_{2}$ and every Lie algebra of vector fields on $\mathbb{R}^2$ isomorphic to $\mathfrak{sl}(3)$ is locally diffeomorphic to $V^{\rm RC}_{2}$ (cf. \cite{GKP92,1880}).

The system (\ref{RicSOr}) is related to the $x$-dependent vector field
$$
X^{\rm RC}_2=X^{\rm 2R}_3-\alpha_0(x)X^{\rm 2R}_0-\alpha_1(x)X^{\rm 2R}_1-\alpha_2(x)X^{\rm 2R}_2,
$$
where the vector fields $X^{\rm 2R}_0,X^{\rm 2R}_1,X^{\rm 2R}_2,X^{\rm 2R}_3$ belong to the following family of vector fields on ${\rm T}\mathbb{R}$.
\begin{equation}\label{VF}
\begin{aligned}
X^{\rm 2R}_0&:=\frac{\partial}{\partial  v},\,\,&X^{\rm 2R}_1&:=u\frac{\partial}{\partial v},\,\,\\
X^{\rm 2R}_2&:=(cu^2+v)\frac{\partial}{\partial v}\,\,&X^{\rm 2R}_3&:=v\frac{\partial}{\partial u}-(3cuv+c^2u^3)\frac{\partial}{\partial v},\,\, \\
X^{\rm 2R}_4&:=cu^2\frac{\partial}{\partial u}+cu(v-cu^2)\frac{\partial}{\partial v},\,\, &X^{\rm 2R}_5&:= c u(v+cu^2)\frac{\partial}{\partial  u}+ c (v^2-c^2u^4)\frac{\partial}{\partial v},\\
X^{\rm 2R}_6&:=u\frac{\partial}{\partial u}+2v\frac{\partial}{\partial v},\,\,
&X^{\rm 2R}_7&:=\frac{\partial}{\partial u}.
\end{aligned}
\end{equation}
Previous vector fields span a Lie algebra isomorphic to $\mathfrak{sl}(3)$. Indeed, we find
{\small\begin{equation}\label{Rel}
\begin{array}{llll}
\left[X^{\rm 2R}_0,X^{\rm 2R}_1\right]=0, &[X^{\rm 2R}_0,X^{\rm 2R}_2]=X^{\rm 2R}_0,&\left[X^{\rm 2R}_0,X^{\rm 2R}_3\right]=X^{\rm 2R}_7-3cX^{\rm 2R}_1,\\
\left[X^{\rm 2R}_0,X^{\rm 2R}_4\right]=cX^{\rm 2R}_1,&
[X^{\rm 2R}_0,X^{\rm 2R}_5]=cX^{\rm 2R}_6, &[X^{\rm 2R}_0,X^{\rm 2R}_6]=2X^{\rm 2R}_0,  \\
\left[X^{\rm 2R}_0,X^{\rm 2R}_7\right]=0,            &[X^{\rm 2R}_1,X^{\rm 2R}_2]=X^{\rm 2R}_1,&
\left[X^{\rm 2R}_1,X^{\rm 2R}_3\right]=X^{\rm 2R}_6-3X^{\rm 2R}_2,\\
\left[X^{\rm 2R}_1,X^{\rm 2R}_4\right]=0,&[X^{\rm 2R}_1,X^{\rm 2R}_5]=X^{\rm 2R}_4,&[X^{\rm 2R}_1,X^{\rm 2R}_6]=X^{\rm 2R}_1,\\
\left[X^{\rm 2R}_1,X^{\rm 2R}_7\right]=-X^{\rm 2R}_0,&[X^{\rm 2R}_2,X^{\rm 2R}_3]=X^{\rm 2R}_3+X^{\rm 2R}_4,         &[X^{\rm 2R}_2,X^{\rm 2R}_4]=0,     \\
\left[X^{\rm 2R}_2,X^{\rm 2R}_5\right]=X^{\rm 2R}_5,&
\left[X^{\rm 2R}_2,X^{\rm 2R}_6\right]=0,   &[X^{\rm 2R}_2,X^{\rm 2R}_7]=-2cX^{\rm 2R}_1,\\
\left[X^{\rm 2R}_3,X^{\rm 2R}_4\right]=X^{\rm 2R}_5,&[X^{\rm 2R}_3,X^{\rm 2R}_5]=0,&
\left[X^{\rm 2R}_3,X^{\rm 2R}_6\right]=-X^{\rm 2R}_3,\\
\left[X^{\rm 2R}_3,X^{\rm 2R}_7\right]=3cX^{\rm 2R}_2,&\left[X^{\rm 2R}_4,X^{\rm 2R}_5\right]=0,&\left[X^{\rm 2R}_4,X^{\rm 2R}_6\right]=-X^{\rm 2R}_4,\\
\left[X^{\rm 2R}_4,X^{\rm 2R}_7\right]=3c X^{\rm 2R}_2-2cX^{\rm 2R}_6,&[X^{\rm 2R}_5,X^{\rm 2R}_6]=-2X^{\rm 2R}_5,&\left[X^{\rm 2R}_5,X^{\rm 2R}_7\right]=-cX^{\rm 2R}_3-3cX^{\rm 2R}_4,\\
&\left[X^{\rm 2R}_6,X^{\rm 2R}_7\right]=-X^{\rm 2R}_7.\\
\end{array}
\end{equation}} 

Each $s$-order Riccati chain equation, when written as a first-order system on the $(s-1)$-order tangent bundle ${\rm T}^{s-1}\mathbb{R}$ by considering the derivatives of the dependent variables as new coordinates \cite{LR88}, is related to a Vessiot--Guldberg Lie algebra of vector fields isomorphic to $\mathfrak{sl}(s+1,\mathbb{R})$ \cite{GL13}. As in previous cases, we can obtain a Vessiot--Guldberg Lie algebra of vector fields for each $s$-order Riccati chain equation. Nevertheless, the expressions for the vector fields are nonlinear and become increasingly complicated. Meanwhile, the expression (\ref{RicSOr}) shows that the second-order affine Riccati chain equation becomes an affine system associated with a Vessiot--Guldberg Lie algebra isomorphic to ${\rm Aff}(\mathbb{R}^2)$, i.e. the Lie algebra of affine transformations on $\mathbb{R}^2$.

\section{The Riccati hierarchy and projective Riccati equations}

Let us proceed to prove one of the key new results of the paper: the $s$-order Riccati chain equation, written as a first-order system, is globally diffeomorphic to a projective Riccati equation on $\mathbb{R}^s$. Hence, Riccati chain equations are nothing but projective Riccati equations. This new approach simplifies the study of the Riccati hierarchy allowing its study through new techniques. In particular, we study in detail second- and third-order Riccati chain equations.

Consider the diffeomorphism 
\begin{equation}\label{phi2}
\phi_{2,c}:(u,v)\in {\rm T}\mathbb{R}\mapsto (y_1:=u,y_2:=cu^2+v)^T\in\mathbb{R}^2.
\end{equation}
 This map transforms the first-order system (\ref{RicSOr}), related to second-order Riccati chain equations, into 
\begin{equation}
\left\{\begin{aligned}
\frac{{\rm d}y_1}{{\rm d}x}&=y_2-cy_1^2,\\
\dfrac{{\rm d}y_2}{{\rm d}x}&=-\alpha_0(x)-\alpha_1(x)y_1-\alpha_2(x)y_2-cy_1y_2. \label{RicSOr2}
\end{aligned}\right.
\end{equation}
If we set $\xi:=(y_1,y_2)^T$, $b_0(x):=(0,-\alpha_0(x))^T$ and $b_2(x):=(-c,0)^T$, the latter system becomes a projective Riccati equation (\ref{ProRicc}) for $n=2$ and 
$$
C(x):=\left(\begin{array}{cc}0&1\\-\alpha_1(x)&-\alpha_2(x)\end{array}\right),
$$
In other words, the diffeomorphism $\phi_{2,c}$ maps second-order Riccati chain equations in first-order form into projective Riccati equations on $\mathbb{R}^2$.

A natural question arises: is there a general procedure to map any $s$-order Riccati chain equation, considered as a first-order system in the usual way, into a projective Riccati equation? 
The following theorem provides a global diffeomorphism mapping $s$-order Riccati chain equations onto projective Riccati equations on $\mathbb{R}^s$.

\begin{theorem}\label{MT1} An $s$-order, possibly affine, Riccati chain equation (\ref{sRicc}), when written as a first-order system on ${\rm T}^{s-1}\mathbb{R}$, can be mapped onto the projective Riccati equation (\ref{ProRicc}) on $\mathbb{R}^s$ %of the form
%\begin{equation}\label{nProRic}
%\frac{{\rm d}\xi}{{\rm d}x}=b_0(x)+B(x)\xi+\langle \xi,b_2(x)\rangle \xi,\qquad \xi\in \mathbb{R}^s,\quad s\in \mathbb{N},
%\end{equation}
with 
{\footnotesize
$$
b_0(x):=\left(\begin{array}{c}0\\0\\ \cdots\\0\\-\alpha_0(x)\end{array}\right),\quad
C(x):=\left(\begin{array}{ccccc}0&1&0&\ldots&0\\
0&0&1&\ldots&0\\\ldots&\ldots&\ldots&\ldots&\ldots\\0&0&0&\ldots&1\\-\alpha_1(x)&-\alpha_2(x)&-\alpha_3(x)&\ldots&-\alpha_n(x)\end{array}\right),\quad b_2(x):=\left(\begin{array}{c}-c\\0\\ \cdots\\0\\0\end{array}\right),
$$}via a global diffeomorphism $\phi_{s,c}:(u^{0)},\ldots,u^{s-1)})\in {\rm T}^{s-1}\mathbb{R}\mapsto (y_1,\ldots,y_s)^T\in \mathbb{R}^s$, with ${\rm T}^0\mathbb{R}:=\mathbb{R}$ and %mapping the $s$-order Riccati chain equation in first-order form into (\ref{nProRic}) 
%is defined by
\begin{equation}\label{change}
y_{k}(x):=L_c^{k-1}u(x),\qquad k=1,\ldots,s.
\end{equation}
\end{theorem}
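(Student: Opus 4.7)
The plan is to verify the theorem by (i) establishing that $\phi_{s,c}$ is a global diffeomorphism using the triangular structure of the coordinate change, and (ii) a direct computation showing that the transported system has the prescribed form of a projective Riccati equation.

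First, I would show that $\phi_{s,c}$ is a global diffeomorphism. Writing $u^{k)}:={\rm d}^ku/{\rm d}x^k$ as the natural coordinates on ${\rm T}^{s-1}\mathbb{R}$, I would prove by induction on $k$ that
$$L_c^{k-1}u = u^{k-1)} + P_k\bigl(u^{0)},u^{1)},\ldots,u^{k-2)}\bigr),$$
where $P_k$ is a polynomial depending only on the lower-order derivatives. The inductive step is immediate from the definition $L_c=d/dx+cu$: applying $L_c$ to such an expression raises the derivative order by exactly one while keeping the leading coefficient equal to $1$, and the nonlinear term $cu\cdot L_c^{k-1}u$ does not introduce $u^{k)}$. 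Consequently the Jacobian of $\phi_{s,c}$ is lower unitriangular, hence everywhere invertible, and the inverse is obtained explicitly by the recursion $u^{0)}=y_1$, $u^{k)}=y_{k+1}-P_{k+1}(u^{0)},\ldots,u^{k-1)})$, which is polynomial and therefore globally smooth.

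Second, I would compute the transported equations. By the defining identity $L_c^{k}u=L_c(L_c^{k-1}u)=\tfrac{d}{dx}(L_c^{k-1}u)+cu\cdot L_c^{k-1}u$, for $k=1,\ldots,s-1$ along any integral curve we get
$$\frac{{\rm d}y_k}{{\rm d}x}=\frac{{\rm d}}{{\rm d}x}(L_c^{k-1}u)=L_c^ku-cu\cdot L_c^{k-1}u=y_{k+1}-cy_1 y_k.$$
For the last component, the Riccati chain equation (\ref{sRicc}) yields $L_c^s u = -\alpha_0(x)-\sum_{j=1}^s\alpha_j(x)y_j$, hence
$$\frac{{\rm d}y_s}{{\rm d}x}=L_c^su-cy_1 y_s = -\alpha_0(x)-\sum_{j=1}^s\alpha_j(x)y_j-cy_1 y_s.$$

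Third, I would check that these $s$ equations coincide with (\ref{ProRicc}) for the prescribed $b_0(x)$, $C(x)$ and $b_2(x)$. With $\xi=(y_1,\ldots,y_s)^T$ one has $\langle\xi,b_2(x)\rangle=-cy_1$, so $\langle\xi,b_2(x)\rangle\xi=-cy_1\xi$, while $C(x)\xi=(y_2,y_3,\ldots,y_s,-\sum_{j=1}^s\alpha_j(x)y_j)^T$ and $b_0(x)=(0,\ldots,0,-\alpha_0(x))^T$. Summing these three vectors reproduces exactly the right-hand sides computed above. The affine case $c=0$ is recovered by noting that then $\langle\xi,b_2(x)\rangle=0$ and $\phi_{s,0}$ reduces to the identification $y_k=u^{k-1)}$.

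The only non-routine point is step one: verifying that $\phi_{s,c}$ is not merely a local but a global diffeomorphism. The triangular-with-unit-diagonal Jacobian guarantees local invertibility, and the explicit polynomial inverse constructed by recursion removes any possible obstruction arising from the nonlinearity of $L_c$; thus the whole map is a polynomial diffeomorphism of $\mathbb{R}^s$ onto itself, and the simplification of the Riccati hierarchy claimed in the theorem follows.
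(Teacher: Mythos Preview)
Your proposal is correct and follows essentially the same approach as the paper's proof: both establish the triangular structure $y_k=u^{k-1)}+(\text{lower-order terms})$ to show $\phi_{s,c}$ is a global diffeomorphism, and both compute $\tfrac{dy_k}{dx}=y_{k+1}-cy_1y_k$ (with the last equation coming from the Riccati chain relation) to identify the transported system with the stated projective Riccati equation. Your treatment of the global-diffeomorphism step is marginally cleaner---you note the lower-order terms are polynomial and read off a global polynomial inverse directly, whereas the paper argues local invertibility via the Jacobian and then checks bijectivity by the same recursion---but the underlying idea is identical.
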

\begin{proof} Let us prove that (\ref{change}) gives rise to a global diffeomorphism $\phi_{s,c}:{\rm T}^{s-1}\mathbb{R}\rightarrow \mathbb{R}^s$. We have that
\begin{equation}\label{recurrence}
y_k(x)=L_c^{k-1}u(x)=u^{k-1)}(x)+F_{k,c}(u(x),u^{1)}(x),\ldots, u^{k-2)}(x)),\qquad k=1,\ldots,s,
\end{equation}
where $L_c^0u(x):=u(x)$ and $u^{k)}$ stands for the variable corresponding to the $k$-th derivative of $u$ in terms of $x$ and $u^{0)}:=u$. Here, $F_{k,c}:{\rm T}^{k-2}\mathbb{R}\rightarrow \mathbb{R}$ is such that $F_{1,c}=0$ for every $c\in \mathbb{R}$. Hence, 
$$
\frac{\partial y_k}{\partial u^{j-1)}}=0,\qquad s\geq j>k\geq 1,\qquad \qquad \frac{\partial y_k}{\partial u^{k-1)}}=1,\qquad k=1,\ldots,s.
$$
Thus, the Jacobian matrix $J_s$ corresponding to the transformation (\ref{change}) takes the form $J_s={\rm Id}_s+T_{s,c}$, where  ${\rm Id}_s$ is an $s\times s$ identity matrix and $T_{s,c}$ is an $s\times s$ lower triangular matrix with zeros in the main diagonal.
Hence, $\det J_s\neq 0$ and the inverse function theorem ensures that $\phi_{s,c}$ is locally invertible with a locally diffeomorphic inverse.

%First, we show that $\phi_{s,c}$ is injective by induction. 
%In view of (\ref{change}), we have that $\phi_{s,c}(u^{0)},\ldots,u^{s-1)})=\phi_{s,c}(\bar {u}^{0)},\ldots,\bar {u}^{s-1)})$, implies $u^{0)}=\bar u^{0)}$. Let us assume that the first $k$ coordinates of $(u^{0)},\ldots,u^{s-1)})=(\bar {u}^{0)},\ldots,\bar {u}^{s-1)})$ are equal. Hence, $\phi_{s,c}(u^{0)},\ldots,u^{s-1)})=\phi_{s,c}(\bar {u}^{0)},\ldots,\bar {u}^{s-1)})$ implies that
%$$
%y_{k+1}=u^{k)}+F_{k+1,c}(u,u^{1)},\ldots,u^{k-1)})=\bar {u}^{k)}+F_{k+1,c}(\bar u,\bar u^{1)},\ldots,\bar u^{k-1)}). 
%$$
%By the induction assumption, we have $u^{l)}=\bar u^{l)}$ for $0\leq l\leq k-1$. Using this and the above equality, we get that $u^{k)}=\bar u^{k)}$. It follows by induction that $\phi_{s,c}$ is injective.

Let us prove that $\phi_{s,c}$ is surjective, i.e.  the algebraic equation $\phi_{s,c}(u^{0)},\ldots, u^{s-1)})=(y_1,\ldots, y_s)$ always has a solution for any $(y_1,\ldots,y_s)\in \mathbb{R}^s$. In view of the definition of $\phi_{s,c}$, this equation implies that $u^{0)}=y_1$. From (\ref{recurrence}), we have 
$$
u^{k-1)}=y_k-F_{k,c}(u,u^{1)},\ldots, u^{k-2)}),\qquad s\geq k>1,
$$ 
and every $u^{k-1)}$, with $s\geq k>1$, can be recursively and uniquely determined from the value of the $y_k$ and the lower derivatives to ensure that $\phi_{s,c}(u^{0)},\ldots, u^{s-1)})=(y_1,\ldots, y_s)$. Since $u^{0)}$ is established, we can obtain a unique solution of the algebraic equation $\phi_{s,c}(u^{0)},\ldots, u^{s-1)})=(y_1,\ldots, y_s)$, and therefore $\phi_{s,c}$ is surjective. As a consequence, 
$$\phi_{s,c}(u^{0)},\ldots, u^{s-1)})=\phi_{s,c}(\bar{u}^{0)},\ldots, \bar{u}^{s-1)})\quad \Rightarrow (u^{0)},\ldots, u^{s-1)})=(\bar u^{0)},\ldots, \bar u^{s-1)})$$
 and $\phi_{s,c}$ is also injective. Since each $\phi_{s,c}$ is a bijection with a locally differentiable inverse, it follows that  $\phi_{s,c}$ is a diffeomorphism.

Finally, we prove that the diffeomorphism given by (\ref{change}) maps $s$-order Riccati chain equations in first-order form onto projective Riccati equations. Substituting  (\ref{change}) in the definition of the $s$-order Riccati chain equation, i.e.
$$
L_c^su+\sum_{j=1}^s\alpha_j(x)(L_c^{j-1}u)+\alpha_0(x)=0,
$$
we obtain
$$
\frac{{\rm d}y_s}{{\rm d}x}=-cy_1y_{s}-\sum_{j=1}^s\alpha_j(x)y_{j}-\alpha_0(x).
$$
Meanwhile,
$$
y_k:=L^s_cy_{k-1}=\frac{{\rm d}y_{k-1}}{{\rm d}x}+cy_1y_{k-1}\Rightarrow \frac{{\rm d}y_{k-1}}{{\rm d}x}=y_k-cy_1y_{k-1}.
$$
Defining $\xi:=(y_1,\ldots,y_s)^T$, since $\langle \xi,b_2(x)\rangle=-cy_1$ and in view of the above expressions, the system (\ref{ProRicc}) is obtained.

\end{proof}

\begin{example}{\bf (Third-order Riccati chain equations)}
When written as a first-order system on the second-order tangent manifold ${\rm T}^2\mathbb{R}$, the third-order Riccati chain equation (see e.g. Table \ref{table1}) takes the form
{\small
\begin{equation}\label{ThirdOr}
%\left\{
\begin{aligned}
\frac{{\rm d}u}{{\rm d}x}&=v,\\
\frac{{\rm d}v}{{\rm d}x}&=a,\\
\dfrac{{\rm d}a}{{\rm d}x}&=-\alpha_3(x)(a+3cuv+c^2u^3)-\alpha_2(x)(cu^2+v)-\alpha_1(x)u-\alpha_0(x)-c(4ua+3v^2+c^2u^4+6cu^2v). 
\end{aligned}%\right.
\end{equation}}

The diffeomorphism induced by (\ref{change}) for $s=3$, i.e.
\begin{equation}\label{phi3}
\phi_{3,c}:(u,v,a)\in {\rm T^2}\mathbb{R}\mapsto (y_1:=u,y_2:=v+cu^2,y_3:=a+3cuv+c^2u^3)^T\in\mathbb{R}^3
\end{equation}
 transforms the above system into

\begin{equation}\label{ThirdOrConf}
\left\{\begin{aligned}
\frac{{\rm d}y_1}{{\rm d}x}&=y_2-cy_1^2,\\
\frac{{\rm d}y_2}{{\rm d}x}&=y_3-cy_1y_2,\\
\dfrac{{\rm d}y_3}{{\rm d}x}&=-\alpha_0(x)-\alpha_1(x)y_1-\alpha_2(x)y_2-\alpha_3(x)y_3-cy_3y_1. 
\end{aligned}\right.
\end{equation}

If we set $\xi:=(y_1,y_2,y_3)^T$, $b_0(x):=(0,0,-\alpha_0(x))^T$ and $b_2(x):=(-c,0,0)^T$, then (\ref{ThirdOr}) can be written intrinsically as a projective Riccati equation (\ref{ProRicc}) with
$$
C(x):=\left(\begin{array}{ccc}0&1&0\\0&0&1\\-\alpha_1(x)&-\alpha_2(x)&-\alpha_3(x)\end{array}\right).
$$
This is the system described in Theorem \ref{MT1}.
\end{example}
\section{Projective vector fields and Riccati chain equations} 
Let us show that the functions $\{\phi_{c,s}\}_{(c,s)\in\mathbb{R}\times \mathbb{N}}$ mapping Riccati chain equations into projective Riccati equations  allow us to prove that Riccati chain equations are first-order systems of differential equations taking values in the finite-dimensional projective Lie algebra of vector fields relative to a flat Riemannian metric. Therefore, the mappings $\{\phi_{c,s}\}_{(c,s)\in\mathbb{R}\times {\mathbb{N}}}$  can be understood as changes of variables mapping the flat Riemannian metrics into diagonal forms.

Given a pseudo-Riemannian manifold $(N,g)$, a {\it projective vector field} $Z$ on $N$ is a vector field satisfying the condition that there exists a one-form $\mu_Z$ on $N$ such that
\begin{equation}\label{sym1}
(\mathcal{L}_Z\nabla)(Z_1,Z_2)=\mu_Z(Z_1)Z_2+\mu_Z(Z_2)Z_1,\qquad \forall Z_1,Z_2 \in \mathfrak{X}(N),
\end{equation}
where $\mathfrak{X}(N)$ is the space of vector fields on $N$, the operator $\nabla$ is the covariant derivative induced by $g$ and  $\mathcal{L}_Z\nabla$ is the Lie derivative of $\nabla$ \cite{Ha04,Yo83}. The one-form $\mu_Z$ is called the {\it projective one-form} of $Z$ (relative to $g$). 
More conveniently for our purposes and assuming that our metric is flat, equation (\ref{sym1}) can be rewritten as \cite{Ha04}
\begin{equation}\label{sym}
\nabla_{Z_1}\nabla_{Z_2}Z-\nabla_{\nabla_{Z_1}{Z_2}}Z=\mu_Z({Z_1}){Z_2}+\mu_Z({Z_2}){Z_1}, \qquad \forall Z_1,Z_2\in \mathfrak{X}(N).
\end{equation}
The main property of projective vector fields is that their flows map geodesics of the metric $g$ into new geodesics (without necessarily preserving the affine parameter) \cite{RS02,Yo83}. Projective vector fields span a Lie algebra of vector fields.

We now show an interesting fact about Riccati equations and second-order Riccati chain equations.  The Vessiot--Guldberg Lie algebra of vector fields for Riccati equations, namely $\langle \partial_u,u\partial_u,u^2\partial_u\rangle$, consists of projective vector fields relative to the flat Riemannian metric $g_1:={\rm d}u\otimes {\rm d}u$.  We now turn to second-order Riccati chain equations in the first-order form (\ref{RicSOr}).
This system is related to the $x$-dependent vector field 
$
X^{\rm RC}_2=X^{\rm 2R}_3-\sum_{j=0}^2\alpha_j(x)X^{\rm 2R}_j,
$
where $X_\alpha^{\rm 2R}$ for $\alpha=0,1,2,3$ are given by (\ref{VF}).
Consider the Riemannian metric 
$$g_2:={\rm d}u\otimes {\rm d}u+{\rm d}(cu^2+v)\otimes{\rm d}(cu^2+v)=(1+4c^2u^2){\rm d}u\otimes {\rm d}u+{\rm d}v\otimes {\rm d}v+2cu({\rm d}u\otimes {\rm d}v+{\rm d}v\otimes {\rm d}u).
$$
A straightforward calculation shows that all Christoffel symbols for $g_2$ vanish, the only exception being $\Gamma^{v}_{uu}=2c$. Therefore, the Riemann tensor associated with $g_2$ also vanishes and $g_2$ becomes flat.  

Since (\ref{sym}) for $g_2$ is $C^\infty({\rm T}\mathbb{R})$-linear relative to $Z_1$ and $Z_2$, it is enough to check that it is satisfied  for a generator system of the $C^\infty({\rm T}\mathbb{R})$-module of vector fields on ${\rm T}\mathbb{R}$ to prove that it holds for any pair of vector fields on ${\rm T}\mathbb{R}$. Consider a generator system given by $\partial_1:=\partial_u,\partial_2:=\partial_v$. It is straightforward to verify that
 $\nabla_{\partial_\alpha}\nabla_{\partial_\beta } X^{\rm 2R}_i-\nabla_{\nabla_{\partial_\alpha} \partial_\beta}X^{\rm 2R}_i=0$ for $\alpha,\beta=1,2$ and $i=0,1,2$ and $\nabla_{\partial_\alpha}\nabla_{\partial_\beta } X^{\rm 2R}_3-\nabla_{\nabla_{\partial_\alpha} \partial_\beta}X^{\rm 2R}_3=-c{\rm d}u(\partial_\alpha)\partial_\beta-c{\rm d}u(\partial_\beta)\partial_\alpha$.  Hence, $X^{\rm 2R}_0,\ldots,X^{\rm 2R}_3$ are projective vector fields relative to $g_2$. In view of the commutation relations in (\ref{Rel}), these vector fields generate $V_2^{\rm RC}$. Hence, all elements of $V_2^{\rm RC}$ are projective vector fields relative to $g_2$. A natural question arises: is this only a property of first- and second-order Riccati chain equations or it is a general property of Riccati chain equations? The following theorem answers this question.

\begin{theorem}\label{ProjRicc} Every $s$-order Riccati chain equation admits a Vessiot--Guldberg Lie algebra $V_s^{\rm RC}$ of projective vector fields relative to the flat Riemannian metric
\begin{equation}\label{FC}
g^{\rm RC}_{s}:=\sum_{i=0}^{s-1}{\rm d}(L^iu)\otimes {\rm d}(L^iu).
\end{equation}
Additionally, $V_s^{\rm RC}\simeq \mathfrak{sl}(s+1,\mathbb{R})$ and $V_2^{\rm RC}$ is diffeomorphic to ${\rm P}_8$.
\end{theorem}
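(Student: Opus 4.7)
The plan is to exploit the global diffeomorphism $\phi_{s,c}:{\rm T}^{s-1}\mathbb{R}\to\mathbb{R}^s$ constructed in Theorem \ref{MT1} and transport the already-known projective structure on $\mathbb{R}^s$ back to ${\rm T}^{s-1}\mathbb{R}$. The point is that by definition $y_k\circ\phi_{s,c}=L_c^{k-1}u$, so the pullback under $\phi_{s,c}$ of the canonical Euclidean metric $g^{\rm E}_s:=\sum_{k=1}^{s}{\rm d}y_k\otimes {\rm d}y_k$ on $\mathbb{R}^s$ is exactly $g^{\rm RC}_s$. Since $g^{\rm E}_s$ is flat and flatness of a pseudo-Riemannian metric is preserved under diffeomorphism (the Riemann tensor being a natural object), $g^{\rm RC}_s$ is flat. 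The $s=2$ computation with the single nonzero Christoffel symbol $\Gamma^{v}_{uu}=2c$ done in the preceding paragraph serves as a confirming special case.

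Next I would invoke the fact (recalled in Section \ref{LSLS}) that a projective Riccati equation on $\mathbb{R}^s$ is a Lie system associated with a Vessiot--Guldberg Lie algebra $V_s^{\rm Pr}\simeq\mathfrak{sl}(s+1,\mathbb{R})$, whose generators are projective vector fields relative to $g^{\rm E}_s$. By Theorem \ref{MT1}, the $s$-order Riccati chain equation is $\phi_{s,c}$-related to such a projective Riccati equation. Define $V_s^{\rm RC}:=\{\phi_{s,c}^*Z:Z\in V_s^{\rm Pr}\}$. Because $\phi_{s,c}$ is a global diffeomorphism, this is a Lie algebra of vector fields on ${\rm T}^{s-1}\mathbb{R}$ isomorphic to $V_s^{\rm Pr}\simeq \mathfrak{sl}(s+1,\mathbb{R})$, and it contains the vector field associated with the $s$-order Riccati chain equation, so it is a Vessiot--Guldberg Lie algebra for that equation.

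To show that the pulled-back vector fields are projective relative to $g^{\rm RC}_s$, I would argue intrinsically rather than by calculation. The defining identity (\ref{sym1}), $(\mathcal{L}_Z\nabla)(Z_1,Z_2)=\mu_Z(Z_1)Z_2+\mu_Z(Z_2)Z_1$, is covariant under diffeomorphism: if $\phi:N_1\to N_2$ is a diffeomorphism, $g_2$ is a pseudo-Riemannian metric on $N_2$ with Levi-Civita connection $\nabla^{g_2}$, and $g_1:=\phi^*g_2$, then $\nabla^{g_1}=\phi^*\nabla^{g_2}$ and $\mathcal{L}_{\phi^*Z}\phi^*\nabla^{g_2}=\phi^*(\mathcal{L}_Z\nabla^{g_2})$. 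Consequently, if $Z$ is projective relative to $g_2$ with projective one-form $\mu_Z$, then $\phi^*Z$ is projective relative to $g_1$ with projective one-form $\phi^*\mu_Z$. Applying this to $\phi_{s,c}$ transforms the projective vector fields of $V_s^{\rm Pr}$ into projective vector fields of $V_s^{\rm RC}$ relative to $g^{\rm RC}_s$.

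For the last assertion, $s=2$ gives a Lie algebra $V_2^{\rm RC}\simeq\mathfrak{sl}(3,\mathbb{R})$ of vector fields on ${\rm T}\mathbb{R}\cong\mathbb{R}^2$. In the GKO classification of Table \ref{table3}, ${\rm P}_8$ is the primitive realization of $\mathfrak{sl}(3,\mathbb{R})$ on $\mathbb{R}^2$ as projective vector fields of a flat Euclidean metric, and the uniqueness statement recalled after (\ref{RicSOr}) states that every Lie algebra of vector fields on $\mathbb{R}^2$ isomorphic to $\mathfrak{sl}(3,\mathbb{R})$ is locally diffeomorphic to a single model; hence $V_2^{\rm RC}$ is locally diffeomorphic to ${\rm P}_8$. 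The main obstacle in this whole plan is not any calculation but the careful naturality statement about covariant derivatives and Lie derivatives under pullback; once that functorial fact is in place the result is immediate from Theorem \ref{MT1}.
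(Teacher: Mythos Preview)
Your overall strategy coincides with the paper's: pull back the standard Euclidean metric on $\mathbb{R}^s$ through $\phi_{s,c}$, obtain $g^{\rm RC}_s$, and transport the projective-vector-field property along the diffeomorphism. The naturality argument you spell out for the identity $(\mathcal{L}_Z\nabla)(Z_1,Z_2)=\mu_Z(Z_1)Z_2+\mu_Z(Z_2)Z_1$ under pullback is exactly what the paper uses implicitly in its final paragraph.

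There is, however, one genuine gap. You write that Section~\ref{LSLS} recalls that the generators of $V_s^{\rm Pr}$ are projective vector fields relative to $g^{\rm E}_s$. It does not: Section~\ref{LSLS} only states $V_s^{\rm Pr}\simeq\mathfrak{sl}(s+1,\mathbb{R})$. The word ``projective'' in ``projective Riccati equation'' refers to its origin in projective space, not to the Riemannian notion of projective vector field in (\ref{sym1}). Establishing that the two notions coincide for the flat Euclidean metric is precisely the computational core of the paper's proof: it solves (\ref{sym}) for $g^s_P=\sum_i{\rm d}y^i\otimes{\rm d}y^i$ explicitly, shows the projective one-form must be constant when $s>1$, and integrates to obtain the general form $Z=\sum_i\bigl(a^i+\sum_kb^i_ky^k+y^i\sum_kc_ky^k\bigr)\partial_{y^i}$, which is exactly the Vessiot--Guldberg Lie algebra of projective Riccati equations. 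You either need to reproduce this computation or cite it as the classical description of projective vector fields on flat Euclidean space; you cannot source it to Section~\ref{LSLS}. Once that point is supplied, your argument is complete and essentially identical to the paper's.
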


\begin{proof} As Riccati equations were shown to fulfill trivially the statement of the present theorem, we hereafter assume that $s>1$.
Let us sketch the outline of our proof. We first endow projective Riccati equations on $\mathbb{R}^{s}$ with a trivial flat Riemannian metric $g^s_P$ turning its Vessiot--Guldberg Lie algebra, $V^{\rm Pr}_s$, into projective vector fields. Afterwards, the transformation (\ref{change}) will allow us to map $s$-order Riccati chain equations onto Riccati projective equations on $\mathbb{R}^{s}$ while satisfying $\phi_{s,c}^*g^s_P=g^{\rm RC}_s$ and $\phi_{s,c*}V_s^{\rm RC}=V^{\rm Pr}_s$, where  $V^{\rm Pr}_s$ is the Vessiot--Guldberg Lie algebra for projective Riccati equations and $V_s^{\rm RC}$ is the Vessiot--Guldberg Lie algebra for $s$-order Riccati chain equations. Since the elements of $V^{\rm Pr}_s$ are projective relative to $g^s_P$, it follows that the elements of $V^{\rm RC}_s$ become projective relative to $g^{\rm RC}_s$ and  every $s$-order Riccati chain equation becomes, as a first-order system, a Lie system related to a $t$-dependent vector field taking values in the projective Lie algebra on ${\rm T}^{s-1}\mathbb{R}$ relative to $g^{\rm RC}_s$.

Consider the projective Riccati equation on $\mathbb{R}^{s}$. Let us solve (\ref{sym1}) for the flat Riemannian metric $g^s_P:=\sum_{i=1}^s{\rm d}y^i\otimes {\rm d}y^i$. Since (\ref{sym})  is $C^\infty({\rm T}^s\mathbb{R})$-linear with respect to the vector fields $Z_1,Z_2$, we can prove this expression by analyzing  it for $Z_1=\partial_i:=\partial/\partial y^i$ and $Z_2=\partial_j:=\partial/\partial y^j$ with $i,j=1,\ldots,s$. Christoffel symbols for $g_P^s$ are identically zero in the coordinate system $\{y^1,\ldots,y^s\}$. Hence, $\nabla_{Z_1}Z_2=D_{Z_1}Z_2=0$ where $D_{X_1}X_2$ is the directional derivative of $X_2$ relative to $X_1$. Moreover, (\ref{sym1}) becomes
$$
D_{Z_1}D_{Z_2}Z=\mu_Z(Z_1)Z_2+\mu_Z(Z_2)Z_1,
$$
 In our given system of coordinates, we can write $Z=\sum_{i=1}^sZ^i\partial_i$, $\mu_Z=\sum_{i=1}^s\mu^Z_idy^i$ and
\begin{equation}\label{ConEq}
\frac{\partial^2 Z^i}{\partial y^j\partial y^k}=\mu^Z_j\delta^i_k+\mu^Z_k\delta_j^i,\qquad i,j,k=1,\ldots,s.
\end{equation}
Hence, we obtain that \cite{Ud}
$$
\frac{\partial \mu^Z_j}{\partial y^l}\delta^i_k+\frac{\partial \mu^Z_k}{\partial y^l}\delta_j^i=\frac{\partial^3 Z^i}{\partial y^l\partial y^j\partial y^k}=\frac{\partial^3 Z^i}{\partial y^j\partial y^l\partial y^k}=\frac{\partial \mu^Z_l}{\partial y^j}\delta^i_k+\frac{\partial \mu^Z_k}{\partial y^j}\delta_l^i\Rightarrow \frac{\partial \mu^Z_j}{\partial y^l}=\frac{\partial \mu^Z_l}{\partial y^j},
$$
for all $l,j,k=1,\ldots,s.$ Moreover, we have
$$
0=\sum_{i=j=1}^s\left(\frac{\partial \mu^Z_j}{\partial y^l}\delta^i_k+\frac{\partial \mu^Z_k}{\partial y^l}\delta_j^i-\frac{\partial \mu^Z_l}{\partial y^j}\delta^i_k-\frac{\partial \mu^Z_k}{\partial y^j}\delta_l^i\right)=(s-1)\frac{\partial \mu^Z_k}{\partial y^l}=0, \qquad \forall k,l=1,\ldots,s.
$$
Since $s>1$,we get  $\mu^Z_k=c_k\in \mathbb{R}$ for $k=1,\ldots,s$. Integrating (\ref{ConEq}) we obtain that
\begin{equation}\label{MRE}
Z=\sum_{i=1}^s\left(a^i+\sum_{k=1}^sb^i_ky^k+y^i\sum_{k=1}^sc_{k}y^k\right)\frac{\partial}{\partial y^i},
\end{equation}
for arbitrary real numbers $a^i,b^i_k,c_k$ with $i,k=1,\ldots,s$.
This shows that projective Riccati equations admit a Vessiot--Guldberg Lie algebra of projective vector fields relative to a flat Riemannian metric spanned by the vector fields (\ref{MRE}). It is well known that these vector fields span a Lie algebra $V^{\rm Pr}_s\simeq\mathfrak{sl}(s+1,\mathbb{R})$ \cite{PW}.

The diffeomorphism $\phi_{s,c}:{\rm T}^{s-1}\mathbb{R}\rightarrow \mathbb{R}^{s}$ induces the flat Riemannian metric 
$
g^{\rm RC}_s:=\phi_{s,c}^*g^s_P=\sum_{i=1}^s({\rm d}L^iu)\otimes ({\rm d}L^iu)
$
on ${\rm T}^{s-1}\mathbb{R}$. The mapping $\phi_{s,c}$ maps $s$-order Riccati chain equations into projective Riccati equations. Hence, it also maps the Vessiot--Guldberg Lie algebra $V^{\rm RC}_s$ for  $s$-order Riccati chain equations into the Vessiot--Guldberg Lie algebra $V^{\rm Pr}_s$ for projective Riccati equations. As a consequence, the vector fields of $V^{\rm RC}_s$ become projective relative to $g^{\rm RC}_s$. 
\end{proof}

\begin{example}{\bf (Third-order Riccati chain equations)}
 Let us construct the flat Riemannian structure associated with third-order Riccati chain equations.
When written as a first-order system on the second-order tangent manifold ${\rm T}^2\mathbb{R}$, the third-order Riccati chain equation (\ref{ThirdOr}) is related to the $x$-dependent vector field $X^{\rm RC}_3:=X_4^{\rm 3R}-\sum_{j=0}^3\alpha_j(x)X^{\rm 3R}_j$ for
$$\begin{gathered}
X^{\rm 3R}_4:=v\frac{\partial}{\partial u}+a\frac{\partial}{\partial v}-c(4ua+3v^2+c^2u^4+6cu^2v)\frac{\partial}{\partial a},\quad X^{\rm 3R}_3:=(a+3cvu+c^2u^3)\frac{\partial}{\partial a}\\
 X^{\rm 3R}_2:=(cu^2+v)\frac{\partial}{\partial a},\quad X^{\rm 3R}_1:=u\frac{\partial}{\partial a},\quad X^{\rm 3R}_0:=\frac{\partial}{\partial a}.
\end{gathered}
$$
Recall that the diffeomorphism induced by (\ref{change}) for $s=3$ reads
\begin{equation}\label{phi3}
\phi_{3,c}:(u,v,a)\in {\rm T^2}\mathbb{R}\mapsto (y_1:=u,y_2:=v+cu^2,y_3:=a+3cuv+c^2u^3)\in\mathbb{R}^3.
\end{equation}

In accordance with our previous theorem, we define the metric $g^{\rm RC}_3=\sum_{j=1}^3 {\rm d}y^j\otimes {\rm d}y^j$, namely
\begin{multline}
g^{\rm RC}_3=(1+4c^2u^2+9(vc+c^2u^2)^2){\rm d}u\otimes {\rm d}u+(2cu+9c^2uv+9c^3u^3)({\rm d}u\otimes {\rm d}v+{\rm d}v\otimes {\rm d}u)\\
3cu({\rm d}a\otimes {\rm d}v+{\rm d}v\otimes{\rm d}a) +(1+9u^2c^2){\rm d}v\otimes {\rm d}v+{\rm d}a\otimes {\rm d}a+
(3cv+3c^2u^2)({\rm d}a\otimes {\rm d}u+{\rm d}u\otimes {\rm d}a).
\end{multline}
A simple calculation shows that $g^{\rm RC}_3$ is non-degenerate, and therefore a Riemannian metric. The nonvanishing Christoffel symbols for this metric read
$
\Gamma^v_{uu}=2c,\,\, \Gamma^a_{uv}=\Gamma^{a}_{vu}=3c
$ 
and its Riemann tensor vanishes.  Hence, $g^{\rm RC}_3$ is flat. Using these results, we can easily prove that all vector fields related to the decomposition $X^{\rm RC}_3=X^{\rm 3R} _4-\sum_{j=0}^3\alpha_j(x)X^{\rm 3R}_j$ are projective vector fields: $X^{\rm 3R}_0,\ldots,X_3^{\rm 3R}$ have zero potential, while $X^{\rm 3R}_4$ have potential $-c{\rm d}u$. Therefore, all vector fields generated by $X^{\rm 3R}_0,\ldots,X_4^{\rm 3R}$ and their successive Lie brackets are projective vector fields.
\end{example}

\section{Superposition rules for the Riccati hierarchy}

Let us show that Theorem \ref{MT1} allows us to obtain a superposition rule for the members of the Riccati hierarchy by applying the superposition rule provided by Winternitz to projective Riccati equations. Winternitz {\it et al.} \cite{AHW80,AHW81} proved that a projective Riccati equation (\ref{ProRicc}) on $\mathbb{R}^n$ admits a superposition rule in terms of $n+2$ generic particular solutions of the form
$$
\Psi:\mathbb{R}^{n(n+2)}\times \mathbb{R}^n\ni (\xi_{(1)},\ldots,\xi_{(n+2)};\chi) \mapsto  \xi:=\frac{B\chi +\rho }{\langle \sigma ,\chi \rangle+b}\in \mathbb{R}^{n}\, ,
$$
where $B$ is an $n\times n$ matrix with entries $B^\mu_k:=\xi^\mu_{(k)}\sigma_k$, where no sum on $k$ is considered,
$$
\xi_{(k)}:=(\xi^1_{(k)},\ldots,\xi^n_{(k)})^T,\,\, \sigma_k:=\det(\xi_{(1)}-\xi_{(n+1)},\ldots,\stackrel{k-{\rm term}}{\overbrace{\xi_{(n+2)}-\xi_{(n+1)}}},\xi_{(n)}-\xi_{(n+1)}), \,\, k=1,\ldots,n,
$$
$B\chi$ is the matrix multiplication of the matrix $B$ with the vector $\chi:=(\chi_1,\ldots,\chi_n)^T$ that accounts for the parameters of the superposition rule, $\sigma:=(\sigma_1,\ldots,\sigma_n)^T$ and
$$
\begin{gathered}
b:=\left(1-\sum_{k=1}^n\chi_k\right)\det(\xi_{(1)}-\xi_{(n+1)},\ldots,\xi_{(n)}-\xi_{(n+1)}),\qquad \rho:=b\,\xi_{(n+2)},\\
\end{gathered}
$$
for $k,\mu=1,\ldots,n$. As usual $\langle \sigma,\chi\rangle$ is the standard inner product of $\sigma$ with $\chi$ in $\mathbb{R}^n$.

Recalling  that every $s$-order Riccati chain equation can be mapped into a projective Riccati equation on $\mathbb{R}^s$, we can immediately prove the following theorem.

\begin{theorem} Every $s$-order Riccati chain equation, when considered as a non-autonomous first-order system, admits a superposition rule depending on $s+2$ particular solutions of the form
$$
\Psi_s:[{\rm T}^{s-1}\mathbb{R}]^{s+2}\times \mathbb{R}^s\ni ({\rm t}^{s-1}u_{(1)},\ldots,{\rm t}^{s-1}u_{(s+2)},\chi) \mapsto  \phi^{-1}_{s,c}\left(\frac{B\chi +b\phi_{s,c}({\rm t}^{s-1}u_{(s+2)}) }{\langle \sigma ,\chi \rangle+b}\right)\in {\rm T}^{s-1}\mathbb{R}\, ,
$$
where $B^\mu_k:=[\phi_{s,c}({\rm t}^{s-1}u_{(k)})]^\mu\sigma_k ({\rm no\,\, sum})$ and
\begin{equation}\label{coeff}
\begin{gathered}
{b}:=\left(1-\sum_{k=1}^s\chi_k\right)\det[\phi_{s,c}({\rm t}^{s-1}u_{(1)})-\phi_{s,c}({\rm t}^{s-1}u_{(n+1)}),\ldots,\phi_{s,c}({\rm t}^{s-1}u_{(n)})-\phi_{s,c}({\rm t}^{s-1}u_{(n+1)})],\\
\sigma_k:=\det[\phi_{s,c}({\rm t}^{s-1}u_{(1)})-\phi_{s,c}({\rm t}^{s-1}u_{(n+1)}),\ldots,\stackrel{k-{\rm term}}{\overbrace{\phi_{s,c}({\rm t}^{s-1}u_{(n+2)})-\phi_{s,c}({\rm t}^{s-1}u_{(n+1)})}},\qquad\qquad\qquad\qquad\\\qquad\qquad\qquad\qquad\qquad\qquad\qquad\qquad\qquad\ldots,\phi_{s,c}({\rm t}^{s-1}u_{(n)})-\phi_{s,c}({\rm t}^{s-1}u_{(n+1)})],
\end{gathered}
\end{equation}
for  $k,\mu=1,\ldots,s$.
\end{theorem}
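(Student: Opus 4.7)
The plan is to derive the superposition rule directly from Theorem \ref{MT1} combined with the Winternitz superposition rule for projective Riccati equations recalled just above the statement.

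First, I would use Theorem \ref{MT1}, which provides a global diffeomorphism $\phi_{s,c}\colon{\rm T}^{s-1}\mathbb{R}\to\mathbb{R}^s$ mapping the $s$-order Riccati chain equation, viewed as a first-order system on ${\rm T}^{s-1}\mathbb{R}$, onto the projective Riccati equation (\ref{ProRicc}) with the specific matrices $C(x)$, $b_0(x)$, $b_2(x)$ prescribed there. By construction, $\phi_{s,c}$ sends integral curves of one system onto integral curves of the other. Consequently, given any family $u_{(1)}(x),\ldots,u_{(s+2)}(x)$ of particular solutions of the Riccati chain equation, the curves $\xi_{(k)}(x):=\phi_{s,c}({\rm t}^{s-1}u_{(k)}(x))$ for $k=1,\ldots,s+2$ are particular solutions of the associated projective Riccati equation on $\mathbb{R}^s$, and this correspondence between $(s+2)$-tuples of solutions is a bijection.

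Next, I would apply the Winternitz superposition rule $\Psi$ to the projective Riccati equation, which expresses its general solution as
\[
\xi(x)=\frac{B\chi+b\,\xi_{(s+2)}(x)}{\langle \sigma,\chi\rangle+b},
\]
where $B$, $\sigma$ and $b$ are built from $\xi_{(1)}(x),\ldots,\xi_{(s+2)}(x)$ through the explicit determinantal expressions of Winternitz recalled above, and $\chi\in\mathbb{R}^s$ is the parameter. Since $\phi_{s,c}$ is a diffeomorphism, the general solution of the first-order system equivalent to the $s$-order Riccati chain equation is obtained simply as ${\rm t}^{s-1}u(x)=\phi_{s,c}^{-1}(\xi(x))$. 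Substituting $\xi_{(k)}=\phi_{s,c}({\rm t}^{s-1}u_{(k)})$ into Winternitz's formulas for $B$, $\sigma$ and $b$ then reproduces verbatim the expressions (\ref{coeff}) and the map $\Psi_s$ stated in the theorem.

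The main obstacle is tracking the genericity condition underlying Winternitz's formula: the rule requires the $s+2$ points $\xi_{(k)}(x)$ to lie in sufficiently general position so that $b$ together with the $\sigma_k$ defines a non-degenerate denominator. Since $\phi_{s,c}$ is a global diffeomorphism, this open condition on $(\xi_{(1)},\ldots,\xi_{(s+2)})$ pulls back to an open, generic condition on the tuples $({\rm t}^{s-1}u_{(1)},\ldots,{\rm t}^{s-1}u_{(s+2)})$ in $[{\rm T}^{s-1}\mathbb{R}]^{s+2}$, which is precisely the notion of ``generic family of particular solutions'' appearing in the definition of a superposition rule. Once genericity is secured, the rest of the argument is a direct substitution via $\phi_{s,c}^{-1}$ and requires no further calculation.
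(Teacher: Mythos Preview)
Your proposal is correct and matches the paper's approach exactly: the paper simply remarks that ``recalling that every $s$-order Riccati chain equation can be mapped into a projective Riccati equation on $\mathbb{R}^s$, we can immediately prove the following theorem,'' and states the result without further argument. Your write-up in fact supplies more detail than the paper does, including the explicit tracking of the genericity condition through $\phi_{s,c}$.
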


As an application, let us use our previous methods to obtain a superposition rule for second- and third-order Riccati chain equations. 

\begin{example} {\bf (Second-order Riccati chain equations)} The superposition rule for second-order Riccati chain equations depends on a generic family  
$$
{\rm t}^1u_{(i)}(x):=(u_{(i)}(x),v_{(i)}(x))\in {\rm T}\mathbb{R}, \qquad i=1,\ldots,4
$$
of particular solutions. The superposition rule takes the specific form
$$
\Psi_2:[{\rm T}\mathbb{R}]^{4}\times \mathbb{R}^2\ni ({\rm t}^{1}u_{(1)},\ldots,{\rm t}^{1}u_{(4)},\chi) \mapsto  {\rm t}^{1}u:=\phi^{-1}_{2,c}\left(\frac{B\chi +b\phi_{2,c}({\rm t}^{1}u_{(4)}) }{\langle \sigma ,\chi \rangle+b}\right)\in {\rm T}\mathbb{R}\, ,
$$
where $\phi_{2,c}$ is given by (\ref{phi2}) and the above-mentioned coefficients (\ref{coeff}) read 
{\small
$$
\begin{gathered}
b=\left(1-\chi_1-\chi_2\right)[(u_{(1)}-u_{(3)})(v_{(2)}-v_{(3)}+c(u^2_{(2)}-u^2_{(3)}))-(c(u^2_{(1)}-u^2_{(3)})+v_{(1)}-v_{(3)})(u_{(2)}-u_{(3)})],\\
\sigma_1=[(u_{(4)}-u_{(3)})(v_{(2)}-v_{(3)}+c(u^2_{(2)}-u^2_{(3)}))-(u_{(2)}-u_{(3)})(v_{(4)}-v_{(3)}+c(u^2_{(4)}-u^2_{(3)}))], \\
\sigma_2=[(u_{(1)}-u_{(3)})(v_{(4)}-v_{(3)}+c(u^2_{(4)}-u^2_{(3)}))-(u_{(4)}-u_{(3)})(v_{(1)}-v_{(3)}+c(u^2_{(1)}-u^2_{(3)}))], \\\qquad 
B=\left(\begin{array}{cc}u_{(1)}\sigma_1&(v_{(1)}+cu^2_{(1)})\sigma_1\\
u_{(2)}\sigma_2&(v_{(2)}+cu^2_{(2)})\sigma_2\\
\end{array}\right).\\
\end{gathered}
$$}
\end{example}

\begin{example}{\bf (Third-order Riccati chain equations)}
 The superposition rule depends on a generic family  
$$
{\rm t}^2u_{(i)}(x):=(u_{(i)}(x),v_{(i)}(x),a_{(i)}(x))\in {\rm T}^2\mathbb{R}, \qquad i=1,\ldots,5
$$ of particular solutions. The superposition rule takes the specific form
$$
\Psi_3:[{\rm T}^2\mathbb{R}]^{5}\times \mathbb{R}^3\ni ({\rm t}^{2}u_{(1)},\ldots,{\rm t}^{2}u_{(5)},\chi) \mapsto  {\rm t}^{2}u:=\phi^{-1}_{3,c}\left(\frac{B\chi +b\phi_{3,c}({\rm t}^{2}u_{(5)}) }{\langle \sigma ,\chi \rangle+b}\right)\in {\rm T}^2\mathbb{R}\, ,
$$
where the diffeomorphism $\phi_{3,c}$ is given in (\ref{phi3}) and the above-mentioned coefficients read

$$
\begin{gathered}
b=\left(1-\sum_{k=1}^3\chi_k\right)\left|\begin{array}{ccc}u_{(1)}-u_{(4)}&u_{(2)}-u_{(4)}&u_{(3)}-u_{(4)}\\
\Upsilon_{14}&\Upsilon_{24}&\Upsilon_{34}\\
\Delta_{14}&\Delta_{24}&\Delta_{34}\\
\end{array}\right|,\\
\end{gathered}
$$
where $\Upsilon_{ij}=v_{(i)}-v_{(j)}+c(u_{(i)}^2-u_{(j)}^2)$, $\Delta_{ij}:=a_{(i)}-a_{(j)}+3c(u_{(i)}v_{(i)}-u_{(j)}v_{(j)})+c^2(u_{(i)}^3-u_{(j)}^3)$ and
$$
\begin{gathered}
\sigma_1=\left|\begin{array}{ccc}u_{(5)}-u_{(4)}&u_{(2)}-u_{(4)}&u_{(3)}-u_{(4)}\\
\Upsilon_{54}&\Upsilon_{24}&\Upsilon_{34}\\
\Delta_{54}&\Delta_{24}&\Delta_{34}\\
\end{array}\right|,\quad
\sigma_2=\left|\begin{array}{ccc}u_{(1)}-u_{(4)}&u_{(5)}-u_{(4)}&u_{(3)}-u_{(4)}\\
\Upsilon_{14}&\Upsilon_{54}&\Upsilon_{34}\\
\Delta_{14}&\Delta_{54}&\Delta_{34}\\
\end{array}\right|,\\
\sigma_3=\left|\begin{array}{ccc}u_{(1)}-u_{(4)}&u_{(2)}-u_{(4)}&u_{(5)}-u_{(4)}\\
\Upsilon_{14}&\Upsilon_{24}&\Upsilon_{54}\\
\Delta_{14}&\Delta_{24}&\Delta_{54}\\
\end{array}\right|,\\
B=\left(\begin{array}{ccc}u_{(1)}\sigma_1&(v_{(1)}+cu^2_{(1)})\sigma_1&(a_{(1)}+3cu_{(1)}v_{(1)}+c^2u^3_{(1)})\sigma_1\\
u_{(2)}\sigma_2&(v_{(2)}+cu^2_{(2)})\sigma_2&(a_{(2)}+3cu_{(2)}v_{(2)}+c^2u^3_{(2)})\sigma_2\\
u_{(3)}\sigma_3&(v_{(3)}+cu^2_{(3)})\sigma_3&(a_{(3)}+3cu_{(3)}v_{(3)}+c^2u^3_{(3)})\sigma_3\\
\end{array}\right).\\
\end{gathered}
$$
\end{example}
\section{On Lie subalgebras of projective, Euclidean and hyperbolic vector fields}
Riccati chain equations were related to Vessiot--Guldberg Lie algebras of projective vector fields relative to flat Riemannian metrics in Section \ref{MT1}. These Lie algebras admit many relevant Lie subalgebras which can be additionally understood as symmetries of certain geometric structures, e.g. affine vector fields. In particular, we are interested in studying when second-order Riccati chain equations are related to Vessiot--Guldberg Lie algebras of conformal vector fields associated with Euclidean and hyperbolic metrics on the plane, which are called Euclidean and hyperbolic vector fields. To perform this study, it is necessary to analyze all Lie algebras of conformal and projective vector fields on the plane. This analysis is based on the calculus of the one-dimensional invariant distributions for all classes of finite-dimensional Lie algebras of vector fields on the plane. Although this result is new, the calculus is rather straightforward and our results have been summarized in Table \ref{table3} without further details.
% fulfilling 
%previous results given in \cite{BBHLS15}.

\begin{lemma}\label{con} There exist no linearly independent (over $\mathbb{R}$) Euclidean vector fields $X_1, X_2$ on $\mathbb{R}^n$, for $n>1$, such that $X_1\wedge X_2=0$.
\end{lemma}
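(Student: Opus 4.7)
The plan is to argue by contradiction: assume $X_1,X_2$ are $\mathbb{R}$-linearly independent conformal vector fields of $(\mathbb{R}^n,g)$, with $g$ the Euclidean metric, and that $X_1\wedge X_2 = 0$ pointwise. Because $X_1\not\equiv 0$, there is a connected nonempty open set $U\subset\mathbb{R}^n$ on which $X_1$ is nowhere zero, and on $U$ the pointwise proportionality supplies a unique smooth $f\colon U\to\mathbb{R}$ with $X_2 = fX_1$. The strategy is first to prove that $f$ is constant on $U$, and then to propagate the equality $X_2 = cX_1$ globally using the real-analyticity of Euclidean vector fields.

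The key computation combines the Leibniz identity
\[
\mathcal{L}_{fX_1}g \;=\; f\,\mathcal{L}_{X_1}g + df\otimes X_1^\flat + X_1^\flat\otimes df
\]
with the conformal conditions $\mathcal{L}_{X_i}g = \lambda_i g$ to obtain, on $U$,
\[
(\lambda_2 - f\lambda_1)\,g \;=\; df\otimes X_1^\flat + X_1^\flat\otimes df. \qquad (\ast)
\]
The right-hand side is a symmetric $(0,2)$-tensor of pointwise rank at most $2$, while the left-hand side is a scalar multiple of the rank-$n$ tensor $g$. For $n\geq 3$ this rank mismatch forces both sides of $(\ast)$ to vanish, and since $X_1^\flat\neq 0$ on $U$, a short linear-algebra argument on $df\otimes X_1^\flat + X_1^\flat\otimes df = 0$ yields $df = 0$.

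The case $n=2$ is the main technical obstacle, since the rank comparison no longer distinguishes the two sides of $(\ast)$. I would handle it by expanding $(\ast)$ in Cartesian coordinates: writing $X_1^\flat = \alpha_1\,dx_1 + \alpha_2\,dx_2$ and $df = \beta_1\,dx_1 + \beta_2\,dx_2$, the three scalar equations $2\alpha_i\beta_i = \mu$ and $\alpha_1\beta_2 + \alpha_2\beta_1 = 0$ amount to the single complex identity $(\alpha_1+i\alpha_2)(\beta_1+i\beta_2) = 0$; since $X_1^\flat\neq 0$ on $U$ the first factor is nonzero, so $\beta_1 = \beta_2 = 0$ and again $df = 0$.

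In every dimension $n>1$, then, $f$ is constant on the connected set $U$, so $X_2 - cX_1$ is a conformal vector field vanishing on $U$. Euclidean vector fields are real-analytic — they are polynomials of degree at most two for $n\geq 3$, and for $n=2$ the equation $\mathcal{L}_Xg = \lambda g$ reduces to the Cauchy–Riemann system for $X^1 + iX^2$ — so a Euclidean vector field vanishing on a nonempty open subset of the connected manifold $\mathbb{R}^n$ vanishes identically. Hence $X_2 = cX_1$ globally, contradicting the $\mathbb{R}$-linear independence of $X_1,X_2$.
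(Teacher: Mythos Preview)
Your proof is correct and shares the paper's starting point---the Leibniz identity $\mathcal{L}_{fX_1}g = f\,\mathcal{L}_{X_1}g + df\otimes X_1^\flat + X_1^\flat\otimes df$---but the two arguments diverge in how they extract the contradiction from $(\ast)$. The paper works uniformly in $n>1$ by evaluating $(\ast)$ on judiciously chosen test pairs $(X,Y)$: taking $X=Y$ with $Xf=0$ (possible since $n>1$) and using positive-definiteness forces $\lambda_2=f\lambda_1$; taking $Xf=0$, $Yf\neq 0$ then shows $X_1^\flat$ is a multiple $h\,df$; finally $X=Y$ with $Xf\neq 0$ gives $2h(Xf)^2=0$, a contradiction. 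Your route instead reads $(\ast)$ as a tensor identity and splits by dimension: rank counting disposes of $n\geq 3$, and for $n=2$ the three scalar equations collapse to the complex factorization $(\alpha_1+i\alpha_2)(\beta_1+i\beta_2)=0$. Your version has the merit of making explicit the passage from $df=0$ on $U$ to global $\mathbb{R}$-dependence via real-analyticity (holomorphicity in dimension two, the quadratic polynomial form of conformal Killing fields for $n\geq 3$)---a step the paper leaves implicit when it simply asserts that $f$ must be non-constant. The paper's version, in turn, avoids your dimension split and the appeal to analyticity by arguing directly at a point where $df\neq 0$.
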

\begin{proof}  Let us prove our claim by contradiction. Assume that there exist linearly independent  (over $\mathbb{R}$) vector fields $X_1,X_2$ on $\mathbb{R}^n$ satisfying  
$$
X_1\wedge X_2=0,\qquad \mathcal{L}_{X_1}g=f_1g,\qquad \mathcal{L}_{X_2}g=f_2g,
$$
for some $f_1,f_2\in C^\infty(\mathbb{R}^n)$ and a Euclidean metric $g$. Since $X_1\wedge X_2=0$ and $X_1,X_2$ are linearly independent over $\mathbb{R}$, there exists, at least locally around any point in $\mathbb{R}^n$, a non-constant function $f$  such that $X_2=fX_1$. Therefore, we have
$$
f_2g(X,Y)=[\mathcal{L}_{X_2}g](X,Y)=[\mathcal{L}_{fX_1}g](X,Y)=f[\mathcal{L}_{X_1}g](X,Y)+g(X_1,(Xf)Y+(Yf)X),
$$
for all $X,Y\in \mathfrak{X}(\mathbb{R}^n).$ Therefore,
\begin{equation}\label{eqa}
(f_2-ff_1)g(X,Y)=g(X_1,(Xf)Y+(Yf)X),\qquad \forall X,Y\in \mathfrak{X}(\mathbb{R}^n).
\end{equation}
Since $n>1$ we can choose $X=Y\neq 0$ and $Xf=0$. Substituting these values into the above equations, we obtain $(f_2-ff_1)g(X,X)=0$.  As $g$ is Euclidean and $X\neq 0$, it follows that $g(X,X)\neq 0$ and $f_2=ff_1$. Substituting the latter in (\ref{eqa}), we obtain
\begin{equation}\label{eq11p}
0=g(X_1,(Xf)Y+(Yf)X),\qquad \forall X,Y\in \mathfrak{X}(\mathbb{R}^n).
\end{equation}
If $Xf=0$  but $Yf\neq 0$, which may occur only for $n>1$, then the above expression becomes
$$
0=g(X_1,(Yf)X)\quad \Longrightarrow \quad g(X_1,X)=0.
$$
As the above happens for every $X$ such that $Xf=0$, it is seen that $\exists h\in C^\infty(\mathbb{R}^n)\backslash\{0\}$ such that $ g(X_1,\cdot)=h{\rm d}f.$
Setting $X=Y$ and $Xf\neq 0$ in (\ref{eq11p}), we obtain 
$$
0=2(Xf)g(X_1,X)=2(Xf)^2h.
$$
This completes the proof since $h$ and $Xf$ do not vanish, the above is a contradiction and $X_1\wedge X_2\neq 0$ if $X_1,X_2$ are conformal vector fields relative to a Euclidean metric.
\end{proof}

The previous theorem does not hold on $\mathbb{R}$, e.g. $X_1:=\partial/\partial u$ and $X_2:=u\partial/\partial u$ are Euclidean linearly independent  vector fields relative to ${\rm d}u\otimes {\rm d}u$ and $X_1\wedge X_2=0$.

\begin{lemma}\label{one} Let $V$ be a Lie algebra of conformal vector fields on $N$ relative to a  metric $g$. Then,
\begin{enumerate}
 \item If the elements of $V$ leave invariant a distribution $\mathcal{D}$ on $N$, then they also leave invariant its orthogonal distribution 
$$
\mathcal{D}^\perp_\xi:=\{X_\xi\in T_\xi N:g_\xi(X_\xi,\bar X_\xi)=0,\forall \bar X_\xi\in \mathcal{D}_\xi\},\qquad \forall \xi\in N.
$$
\item If $V$ is a Lie algebra of Euclidean vector fields on $N=\mathbb{R}^2$, then it is either primitive or multi-imprimitive.
\item If $V$ is a Lie algebra of hyperbolic  vector fields on $N=\mathbb{R}^2$, then it admits two one-dimensional invariant distributions generated by two commuting vector fields $Y_1,Y_2$ and every $Z\in V$ can be brought into the form $Z=f^1_ZY_1+f^2_ZY_2$ for functions $f_Z^1,f_Z^2\in C^\infty(\mathbb{R}^2)$ satisfying $Y_1f^2_Z=Y_2f^1_Z=0$. 
\end{enumerate}

\end{lemma}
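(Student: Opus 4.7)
My plan is to handle the three parts in sequence, with parts (1) and (2) falling out immediately from the conformal identity and most of the real work concentrated in part (3).

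For part (1), I would plug vectors from $\mathcal{D}$ and $\mathcal{D}^\perp$ directly into the Leibniz expansion of the Lie derivative. Given $X\in V$ preserving $\mathcal{D}$, any $Y\in\mathcal{D}$, and any $Z\in\mathcal{D}^\perp$, the conformal identity $\mathcal{L}_X g=f_X g$ evaluated on $(Y,Z)$ reads
\[f_X\, g(Y,Z)=Xg(Y,Z)-g([X,Y],Z)-g(Y,[X,Z]).\]
The left-hand side vanishes because $g(Y,Z)=0$, the first term on the right vanishes for the same reason, and $g([X,Y],Z)=0$ follows from the hypothesis $[X,Y]\in\mathcal{D}$. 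Hence $g(Y,[X,Z])=0$ for every $Y\in\mathcal{D}$, which forces $[X,Z]\in\mathcal{D}^\perp$.

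Part (2) then reduces to a quick dichotomy. If $V$ were imprimitive it would leave invariant at least one one-dimensional distribution $\mathcal{D}$, and part (1) would produce $\mathcal{D}^\perp$ as a second invariant distribution. Positive-definiteness of a Euclidean metric guarantees $\mathcal{D}\cap\mathcal{D}^\perp=\{0\}$ pointwise, so the two distributions are distinct; consequently $V$ is multi-imprimitive.

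For part (3) I plan to pass to null coordinates. Because every Lorentzian metric on $\mathbb{R}^2$ is locally conformal to $\tilde g:=\mathrm{d}u\otimes\mathrm{d}v+\mathrm{d}v\otimes\mathrm{d}u$, and because the Lie algebra of conformal vector fields is invariant under conformal rescalings of the metric (if $\tilde g=e^\phi g$ then $\mathcal{L}_X\tilde g=(X\phi+f_X)\tilde g$ whenever $\mathcal{L}_X g=f_X g$), I may assume $g=\tilde g$ and take $Y_1:=\partial/\partial u$, $Y_2:=\partial/\partial v$; these generate the two null line fields and commute. Writing $Z=f_Z^1 Y_1+f_Z^2 Y_2$, I would evaluate $\mathcal{L}_Z g$ on $(Y_1,Y_1)$ and on $(Y_2,Y_2)$, obtaining $2Y_1 f_Z^2$ and $2Y_2 f_Z^1$ respectively; since $g(Y_i,Y_i)=0$ for $i=1,2$, the conformal identity forces both quantities to vanish. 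This yields simultaneously $Y_1 f_Z^2=Y_2 f_Z^1=0$ together with $[Z,Y_i]\in\langle Y_i\rangle$, establishing the invariance of both null distributions under $V$ as well as the claimed form of $Z$. The main (and essentially only) subtlety of the whole lemma is isolated at this point: the reduction to canonical null form via conformal rescaling. Once this is in place, the vanishing of the diagonal components of $g$ in the null frame decouples the conformal equation into the two scalar constraints above, and parts (1) and (2) become algebraic consequences of the same Lie derivative identity.
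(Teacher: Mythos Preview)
Your arguments for parts (1) and (2) coincide with the paper's: both unwind the Leibniz rule for the Lie derivative of $g$ and then use positive-definiteness to separate $\mathcal{D}$ from $\mathcal{D}^\perp$.

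For part (3) you take a genuinely different route. The paper never invokes conformal flatness of two-dimensional Lorentzian metrics. Instead it argues intrinsically: a null line field $\mathcal{D}_X$ satisfies $\mathcal{D}_X^\perp=\mathcal{D}_X$, so the identity from part (1) with $X=X^\perp$ shows each null distribution is $V$-invariant; then, starting from arbitrary null generators $X_1,X_2$ with $[X_1,X_2]=f_1X_1+f_2X_2$, the paper rescales by solving the first-order linear ODEs $X_2h_1=h_1f_1$ and $X_1h_2=-h_2f_2$ to produce commuting $Y_1=h_1X_1$, $Y_2=h_2X_2$; finally the invariance of $\langle Y_i\rangle$ under $Z$ forces $Y_2f_Z^1=Y_1f_Z^2=0$. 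Your approach compresses all of this by passing to null coordinates for the conformally flat model, where the commuting $Y_i$ are simply $\partial_u,\partial_v$ and the constraints drop out of the diagonal components of $\mathcal{L}_Zg$. Your argument is shorter and cleaner, at the cost of importing the (standard) local conformal flatness of two-dimensional metrics; the paper's argument is more self-contained and makes explicit how the commuting frame is built from an arbitrary pair of null generators.
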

\begin{proof} Point 1): Let $X$ and $X^\perp$ be arbitrary vector fields taking values in $\mathcal{D}$ and $\mathcal{D}^\perp$, respectively. 
Hence,
\begin{equation}\label{invarDis}
0=\mathcal{L}_{Y}[g(X,X^\perp)]=f_Yg(X,X^\perp)+g(\mathcal{L}_YX,X^\perp)+g(X,\mathcal{L}_YX^\perp)=g(X,\mathcal{L}_YX^\perp),\qquad \forall Y\in V,
\end{equation}
where $f_Y$ is the potential function of $Y$. 
Then, $\mathcal{L}_{Y}X^\perp$ takes values in $\mathcal{D}^\perp$ for every $Y\in V$ and $\mathcal{D}^\perp$ is invariant under $V$. 

Point 2). If $V$ consists of Euclidean vector fields on $\mathbb{R}^2$, then any one-dimensional invariant distribution of rank one relative to $V$ satisfies $\mathcal{D}_\xi\cap \mathcal{D}_\xi^\perp=\{0\}$ for every $\xi\in N$ due to the absence of vector fields of module zero relative to $g$. The distribution $\mathcal{D}^\perp$ has rank one due to the fact that $g$ is non-degenerate. Hence, the Lie algebra has at least two different invariant distribution of rank one. Hence, $V$ is imprimitive or multiprimitive. 

Point 3). Let $\mathcal{D}_X$ be the distribution generated by a non-vanishing vector field $X$ of module zero relative to a hyperbolic metric $g$ on the plane. Since the rank of $\mathcal{D}_X^\perp$ is equal to the codimension of $\mathcal{D}_X$, namely $\dim \mathcal{D}_X^\perp=1$, and $g(X,X)=0$ by assumption, we get $\mathcal{D}^\perp_X=\mathcal{D}_X$. 
Assume that $V$ is a Lie algebra of hyperbolic vector fields and $Y\in V$. Setting $X=X^\perp$ in (\ref{invarDis}), we obtain that $\mathcal{L}_YX$ is perpendicular to $X$, hence $\mathcal{L}_YX$ takes values in $\mathcal{D}_X^\perp=\mathcal{D}_X$ and $\mathcal{D}_X$ becomes invariant under the action of $V$. 
At a fixed point, there always exist linearly independent tangent vectors with module zero relative to $g$. It is simple to prove that such tangent vectors can be extended to two well-defined vector fields $X_1,X_2$ of module zero on a neighborhood of the point spanning different distributions invariant under the action of $V$. Hence, $V$ is multi-primitive. 

Let us prove that previous invariant distributions admit two-commuting generators. Since $X_1\wedge X_2\neq 0$ and $X_1,X_2\in \mathfrak{X}(\mathbb{R}^2)$, then $[X_1,X_2]=f_1X_1+f_2X_2$ for certain functions $f_1,f_2\in C^\infty(\mathbb{R}^2)$. For arbitrary functions $h_1,h_2$, we have
$[h_1X_1,h_2X_2]=h_2(h_1f_1-X_2h_1)X_1+h_1(h_2f_2+X_1h_2)X_2$. It is trivial to show that there exist local non-vanishing solutions of  $X_2h_1=h_1f_1$ and $X_1h_2=-h_2f_2$ on an open interval of a point, e.g. as $X_1,X_2$ are non-vanishing vector fields at each point we can consider local coordinates rectifying $X_1,X_2$. Hence $Y_1:=h_1X_1,Y_2:=h_2X_2$ commute and generate the distributions $\mathcal{D}_{X_1}$ and $\mathcal{D}_{X_2}$, respectively.

Consider the distributions generated by $Y_1,Y_2$. Every vector field $X\in V$ must leave these distributions invariant. Requiring that $\mathcal{L}_{X}Y_i$ belong to the distribution $\mathcal{D}_i$ spanned by $Y_i$  and recalling that $Y_1\wedge Y_2\neq0 $, we obtain that, $X=f_X^1Y_1+f_X^2Y_2$ with $Y_2f^1_X=Y_1f^2_X=0$. 

If $V$ consists of Euclidean  conformal vector fields and it is imprimitive, then it leaves invariant the distribution perpendicular to the invariant one. So, $V$ is either multi-imprimitive or primitive.
\end{proof}

\begin{proposition}\label{ConLie} The classes ${\rm I}_1$,${\rm P}_1, {\rm P}_2
,{\rm P}_3,{\rm P}_4,{\rm P}_7,{\rm I}^{\alpha=1}_8,{\rm I}^{r=1}_{14}$ are the only classes of Lie algebras of Euclidean vector fields on $\mathbb{R}^2$. They are, up to diffeomorphism, the Lie subalgebras of {\rm P}$_7$.
\end{proposition}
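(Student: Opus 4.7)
The plan is to combine the maximality of ${\rm P}_7\simeq \mathfrak{so}(3,1)$ as the Lie algebra of conformal vector fields of the Euclidean plane (recalled just before Table \ref{table2}) with the structural constraints from Lemmas \ref{con} and \ref{one}, and then to traverse the GKO classification of Table \ref{table3}.

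First, the maximality of ${\rm P}_7$ immediately implies the last sentence of the statement: every Lie algebra of Euclidean vector fields is, up to a local diffeomorphism, contained in ${\rm P}_7$. Hence the task reduces to singling out which GKO classes admit a realization as a Lie subalgebra of ${\rm P}_7$.

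Second, by Lemma \ref{one}(2), every such class must be primitive or multi-imprimitive, which discards every mono-imprimitive row of Table \ref{table3}. Two further filters then apply: the dimension cannot exceed $\dim \mathfrak{so}(3,1)=6$, and by Lemma \ref{con} no two linearly independent generators can be pointwise proportional on an open set. The primitive classes that are not mentioned in the statement are ruled out on these grounds, since each of them either exceeds dimension $6$ or is isomorphic to $\mathfrak{sl}(3,\mathbb{R})$ (or to another simple Lie algebra that does not embed into $\mathfrak{so}(3,1)$). The surviving primitive candidates are therefore ${\rm P}_1,{\rm P}_2,{\rm P}_3,{\rm P}_4,{\rm P}_7$, and each admits an explicit realization as a subalgebra of the conformal Killing algebra of ${\rm d}x\otimes {\rm d}x+{\rm d}y\otimes {\rm d}y$ through suitable combinations of translations, rotations, dilations, and inversions.

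Third, for the multi-imprimitive classes I would walk down Table \ref{table3} and use the invariant distributions listed in its last column. Two commuting linearly independent vector fields immediately provide a realization of ${\rm I}_1$, which is Euclidean by inspection. For the remaining multi-imprimitive classes carrying continuous parameters, most notably ${\rm I}_8$ and ${\rm I}_{14}$, imposing the conformal Killing equation $\mathcal{L}_X g = f_X g$ on each generator produces algebraic conditions that isolate $\alpha=1$ in ${\rm I}_8$ and $r=1$ in ${\rm I}_{14}$ as the only values compatible with a Euclidean metric. The remaining multi-imprimitive classes are then excluded either because their invariant distributions cannot be realized as the pair of invariant distributions arising from a Euclidean conformal algebra, or because their bracket structure is incompatible with any subalgebra of $\mathfrak{so}(3,1)$.

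The main obstacle will be the multi-imprimitive step: each candidate class must be tested against the conformal Killing equation and its admissible parameter values pinned down. Although algorithmic, this requires a careful inspection of every generator in Table \ref{table3}, and the exact identification of the distinguished values $\alpha=1$ and $r=1$ in ${\rm I}_8$ and ${\rm I}_{14}$ is the delicate combinatorial point of the argument.
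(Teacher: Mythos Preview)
Your overall strategy --- use Lemmas \ref{con} and \ref{one} as filters and then walk through the GKO table --- is exactly what the paper does. However, the execution has two concrete gaps.

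\textbf{Primitive classes.} You list Lemma \ref{con} as a filter but then do not actually use it: you claim that every primitive class not in the statement ``either exceeds dimension $6$ or is isomorphic to $\mathfrak{sl}(3,\mathbb{R})$ (or to another simple Lie algebra that does not embed into $\mathfrak{so}(3,1)$)''. This does not cover ${\rm P}_5\simeq\mathfrak{sl}(2)\ltimes\mathbb{R}^2$ and ${\rm P}_6\simeq\mathfrak{gl}(2)\ltimes\mathbb{R}^2$, which have dimensions $5$ and $6$ and are not simple. The paper instead kills ${\rm P}_5,{\rm P}_6,{\rm P}_8$ in one stroke with Lemma \ref{con}: each contains the pair $\partial_x$, $y\partial_x$ (see Table \ref{table3}), which are linearly independent over $\mathbb{R}$ but satisfy $\partial_x\wedge(y\partial_x)=0$. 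No abstract embedding argument into $\mathfrak{so}(3,1)$ is needed.

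\textbf{Multi-imprimitive classes.} After applying both lemmas, the paper's shortlist still contains ${\rm I}_4$ and ${\rm I}_8^{\alpha\neq 1}$, and these are the nontrivial cases. Your two exclusion criteria --- incompatible invariant distributions, or bracket structure not embedding into $\mathfrak{so}(3,1)$ --- do not eliminate ${\rm I}_4$: its invariant distributions $\langle\partial_x\rangle$, $\langle\partial_y\rangle$ could perfectly well be orthogonal for a Euclidean metric, and ${\rm I}_4\simeq\mathfrak{sl}(2)$ certainly embeds abstractly into $\mathfrak{so}(3,1)$. The paper handles ${\rm I}_4$ by the direct computation you only allude to: assuming ${\rm I}_4$ is Euclidean forces the metric into the form $g_{11}\,{\rm d}x\otimes{\rm d}x+g_{22}\,{\rm d}y\otimes{\rm d}y$ with $\langle\partial_x\rangle\perp\langle\partial_y\rangle$, and then imposing the conformal Killing equation on the basis of ${\rm I}_4$ forces $g_{11}/g_{22}=0$, a contradiction. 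The same style of argument disposes of ${\rm I}_8^{\alpha\neq 1}$. You should single out ${\rm I}_4$ explicitly; it is the case where your sketch, as written, would fail.
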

\begin{proof}Lemma \ref{one} ensures that every Lie algebra of Euclidean vector fields on $\mathbb{R}^2$ must be primitive or multi-imprimitive. Moreover, Lemma \ref{con} states that there are no two linearly independent Euclidean vector fields proportional at each point. In view of Table \ref{table3}, these conditions restrict the possible classes of Lie algebras of Euclidean vector fields on the plane to
$$
{\rm I}_1,{\rm P}_1,{\rm P}_2,{\rm P}_3,{\rm P}_4,{\rm P}_7,{\rm I}_4,{\rm I}_8,{\rm I}^{r=1}_{14A},{\rm I}^{r=1}_{14B}.
$$
The classes {\rm I}$_1$, P$_1$, P$_2$, P$_3$, P$_4$, P$_7$, I$_8^{\alpha=1}$ are obviously contained in P$_7$ as is easily seen from Table \ref{table3} (see also \cite{BBHLS15}). Every Lie algebra of the class I$^{r=1}_{14}$, namely the classes I$_{14A}^{r=1}$ or I$_{14B}^{r=1}$,  can be mapped through a change of variables into a Lie subalgebra of P$_7$. Since P$_7$ is a Lie algebra of Euclidean  vector fields, all previously mentioned Lie algebras are also. Therefore, it remains only to analyze the other options: I$_4$ and I$_8$ for $\alpha\neq 1$. 

Let us prove by contradiction that I$_4$ is not a Lie algebra of Euclidean vector fields. Assume the opposite. In view of Table \ref{table3}, the Lie algebra I$_4$ leaves invariant only two one-dimensional invariant distributions $\mathcal{D}_1=\langle \partial_x\rangle$ and $\mathcal{D}_2=\langle \partial_y\rangle$. In view of  Lemma \ref{one} and the previous remark, the vector fields taking values in $\mathcal{D}_1$ and $\mathcal{D}_2$ must be orthogonal between  themselves and $g$ must take the form
$
g=g_{11}{\rm d}x\otimes {\rm d}x+g_{22}{\rm d}y\otimes {\rm d}y 
$ 
for nowhere vanishing functions $g_{xx},g_{yy}$. Since I$_4$ is by assumption a Lie algebra of Euclidean vector fields relative to $g$, it follows that it must be a Lie algebra of conformal vector fields relative to a metric
$
g_C:={\rm d}x\otimes {\rm d}x+(g_{22}/g_{11}){\rm d}y\otimes {\rm d}y. 
$
Imposing $\mathcal{L}_Xg_C=f_Xg$ for every $X\in {\rm I}_4$, it is easily seen that $g_{11}/g_{22}=0$ and $g_C$ is not a metric. Hence, I$_4$ is not a Lie algebra of Euclidean vector fields. Similarly, it can be proved that I$^{\alpha\neq 1}_8$ is not a Lie algebra of Euclidean vector fields.

All Lie subalgebras of vector fields of P$_7$ consist of Euclidean vector fields. Therefore, they all must be among the described in our proposition. As mentioned above, all previous vector fields can be mapped through a change of variables into Lie subalgebras of P$_7$. Hence, P$_7$ contains all finite-dimensional Lie algebras of conformal vector fields on the plane with respect to a Euclidean metric.
\end{proof}
\begin{proposition}\label{ProjEuc} Every Lie algebra of Euclidean and projective vector fields on $\mathbb{R}^2$ is diffeomorphic to a Lie subalgebra of {\rm P}$_{2}$, {\rm P}$_3$ or {\rm P}$_4$.
\end{proposition}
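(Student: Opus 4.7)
The plan is to intersect the classification of Euclidean Lie algebras obtained in Proposition \ref{ConLie} with the classification of projective Lie algebras that emerges from the proof of Theorem \ref{ProjRicc}. For $n=2$, that proof shows that any flat Riemannian metric on $\mathbb{R}^{2}$ can be rectified via a diffeomorphism to $\sum_{i=1}^{2}\mathrm{d}y^{i}\otimes \mathrm{d}y^{i}$, and in these coordinates the projective vector fields are precisely those of the form (\ref{MRE}). They therefore span, up to diffeomorphism, the Lie algebra ${\rm P}_{8}\simeq \mathfrak{sl}(3,\mathbb{R})$, so every Lie algebra of projective vector fields on $\mathbb{R}^{2}$ relative to a flat Riemannian metric is, up to diffeomorphism, a Lie subalgebra of ${\rm P}_{8}$.

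Combining this with Proposition \ref{ConLie}, a Lie algebra $V$ that is simultaneously Euclidean and projective belongs, up to diffeomorphism, to one of the classes ${\rm I}_{1}$, ${\rm P}_{1}$, ${\rm P}_{2}$, ${\rm P}_{3}$, ${\rm P}_{4}$, ${\rm P}_{7}$, ${\rm I}^{\alpha=1}_{8}$ or ${\rm I}^{r=1}_{14}$, and it must simultaneously sit inside ${\rm P}_{8}$. The inclusions ${\rm I}_{1}\subset {\rm P}_{1}\subset {\rm P}_{2}\subset {\rm P}_{3}\subset {\rm P}_{4}$ can be read off directly from the explicit generators in Table \ref{table3}, as can the embedding of ${\rm I}^{\alpha=1}_{8}$, the similitude algebra with equal scaling, into ${\rm P}_{3}$. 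For either variant of ${\rm I}^{r=1}_{14}$ I would exhibit a concrete change of coordinates sending its generators to a subset of those of ${\rm P}_{4}$; the relevant inclusion is already recorded among the inclusion relations of Table \ref{table5}. Each of these checks is routine.

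The main obstacle is excluding the class ${\rm P}_{7}$, that is, showing that no Lie algebra diffeomorphic to ${\rm P}_{7}$ can consist of projective vector fields relative to any flat Riemannian metric on $\mathbb{R}^{2}$. By the previous paragraph this reduces to showing that ${\rm P}_{7}\simeq \mathfrak{so}(3,1)$ is not isomorphic to any subalgebra of ${\rm P}_{8}\simeq \mathfrak{sl}(3,\mathbb{R})$. The cleanest route is algebraic: an embedding $\mathfrak{so}(3,1)\hookrightarrow \mathfrak{sl}(3,\mathbb{R})$ would, upon complexification, yield $\mathfrak{sl}(2,\mathbb{C})\oplus \mathfrak{sl}(2,\mathbb{C})\hookrightarrow \mathfrak{sl}(3,\mathbb{C})$, but both sides have rank two and the root systems $A_{1}\oplus A_{1}$ and $A_{2}$ are incompatible, ruling out such an embedding. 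A more geometric alternative, closer in spirit to the rest of Section 7, is to take a special-conformal generator of ${\rm P}_{7}$ produced by an inversion, substitute it into equation (\ref{sym}), and verify that no one-form $\mu_{Z}$ can solve the resulting identity for any flat metric on $\mathbb{R}^{2}$, since its flow sends straight lines to circles and therefore fails to preserve the class of geodesics. Once ${\rm P}_{7}$ is discarded, every remaining candidate is a Lie subalgebra of ${\rm P}_{2}$, ${\rm P}_{3}$ or ${\rm P}_{4}$, which establishes the proposition.
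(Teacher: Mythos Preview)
Your overall strategy---intersect the Euclidean list from Proposition~\ref{ConLie} with the subalgebras of ${\rm P}_8$ and then organise the survivors under ${\rm P}_2$, ${\rm P}_3$ or ${\rm P}_4$---matches the paper's, and your exclusion of ${\rm P}_7$ via the complexified root-system argument is fine. But the middle of your argument contains genuine errors and a real gap.

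First, the chain ${\rm I}_1\subset {\rm P}_1\subset {\rm P}_2\subset {\rm P}_3\subset {\rm P}_4$ is false. From Table~\ref{table3}, ${\rm P}_1$, ${\rm P}_2$ and ${\rm P}_3$ are all three-dimensional, with underlying abstract Lie algebras $\mathbb{R}\ltimes\mathbb{R}^2$, $\mathfrak{sl}(2)$ and $\mathfrak{so}(3)$ respectively, so none of them sits inside another. Moreover ${\rm P}_3\simeq\mathfrak{so}(3)$ is simple and cannot embed in the solvable algebra ${\rm P}_4\simeq\mathbb{R}^2\ltimes\mathbb{R}^2$; for the same reason neither can ${\rm P}_2$. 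Likewise, ${\rm I}_8^{\alpha=1}\simeq\mathbb{R}\ltimes\mathbb{R}^2$ cannot embed in the three-dimensional ${\rm P}_3\simeq\mathfrak{so}(3)$. The disjunction ``${\rm P}_2$, ${\rm P}_3$ \emph{or} ${\rm P}_4$'' in the statement is there precisely because no single one of these three contains the other two. The paper proceeds by showing that ${\rm I}_1$, ${\rm P}_1$, ${\rm I}_8^{\alpha=1}$ and both variants of ${\rm I}_{14}^{r=1}$ all lie (up to diffeomorphism) in ${\rm P}_4$, while ${\rm P}_2$ and ${\rm P}_3$ are separate maximal cases.

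Second, you never verify that ${\rm P}_2$ and ${\rm P}_3$ are themselves projective, i.e.\ diffeomorphic to Lie subalgebras of ${\rm P}_8$. This is not automatic: the bases displayed in Table~\ref{table3} for ${\rm P}_2$ and ${\rm P}_3$ are \emph{not} among the generators of ${\rm P}_8$, so one must produce a diffeomorphism realising the embedding. The paper handles ${\rm P}_3$ by noting that $\mathfrak{sl}(3)$ contains a copy of $\mathfrak{so}(3)$ and that ${\rm P}_3$ is the unique class on the plane with that isomorphism type. For ${\rm P}_2$ the paper writes down an explicit $\mathfrak{sl}(2)$-triple $Y_1,Y_2,Y_3$ inside ${\rm P}_8$ and then uses the Casimir-tensor criterion of \cite{BHLS15} (the sign of $\det\mathcal{R}$) to identify the region where this triple is locally diffeomorphic to ${\rm P}_2$ rather than to ${\rm I}_4$ or ${\rm I}_5$. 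Without this step, ${\rm P}_2$ and ${\rm P}_3$ could a priori drop out of the final list just as ${\rm P}_7$ does, and the proposition would collapse to ``every Euclidean projective Lie algebra lies in ${\rm P}_4$'', which is false.
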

\begin{proof} Lemma  \ref{ConLie} characterizes  all classes of finite-dimensional Lie algebras of Euclidean vector fields on $\mathbb{R}^2$. To prove our result, it is necessary to determine which of them are also diffeomorphic to Lie subalgebras of P$_8$. Finally, we show that all Lie algebras that are projective and Euclidean are diffeomorphic to Lie subalgebras of {\rm P}$_{2}$, {\rm P}$_3$ or {\rm P}$_4$.

Let us check all the Lie algebras of Euclidean vector fields given in Proposition \ref{ConLie}.
Table \ref{table3} shows that I$_1$, P$_1$, P$_4$, I$_{8}^{\alpha=1}$, I$^{r=1}_{14B}$, I$^{r=1}_{14A}$ are Lie subalgebras of P$_8$ and therefore they consist of projective vector fields. Since P$_7\simeq \mathfrak{so}(3,1)$ and P$_8\simeq \mathfrak{sl}(3)$, it follows that P$_7$ is not isomorphic to any Lie subalgebra of $\mathfrak{sl}(3)$. It remains to verify whether P$_2$ and P$_3$ can be mapped into a Lie subalgebra of P$_8$. 

Let us prove that P$_3$ is diffeomorphic to a Lie subalgebra of P$_8$. Observe that P$_8\simeq\mathfrak{sl}(3)$ admits a Lie subalgebra isomorphic to $\mathfrak{so}(3)$. In view of Table \ref{table3}, there exists only one finite-dimensional Lie algebra of vector fields on the plane isomorphic to $\mathfrak{so}(3)$, namely P$_3$. Then P$_3$ must be diffeomorphic to a Lie subalgebra of P$_8$ and it becomes a Lie algebra of Euclidean projective vector fields.

Now, we show that P$_8$ admits a Lie subalgebra diffeomorphic to P$_2$. Consider the Lie subalgebra of P$_8$ spanned by the vector fields (cf. Table \ref{table3})
$$
Y_1=-xy\partial_x+(2x-y^2)\partial_y,\qquad Y_2=-2x\partial_x-y\partial_y,\qquad
Y_3=-y\partial_x-2\partial_y.
$$
Since $[Y_1,Y_2]=Y_1,[Y_1,Y_3]=2Y_2,[Y_2,Y_3]=Y_3$, we have $\langle Y_1,Y_2,Y_3\rangle \simeq \mathfrak{sl}(2)$. Using the formalism developed in \cite{BHLS15}, the above Lie algebra can be locally mapped into P$_2$ around a point of $\mathbb{R}^2$ if and only if the sign of the determinant of the coefficients of the tensor field $\mathcal{R}=Y_1\otimes Y_3+Y_3\otimes Y_1-2Y_2\otimes Y_2$ is positive around such a point. % If the determinant is negative or zero, then the Lie algebra can be mapped into I$_4$ or I$_5$, respectively. 
 In our case, we have that
$$
\mathcal{R}=2x(y^2-4x)\partial_x\otimes\partial_x+2(y^2-4x)\partial_y\otimes\partial_y+y(y^2-4x)(\partial_x\otimes\partial_y+\partial_y\otimes\partial_x)\Rightarrow \det \mathcal{R}=(4x-y^2)^3.
$$
Hence, the above Lie algebra is locally diffeomorphic to P$_2$ if and only if $4x>y^2$. Hence, P$_2$ can be considered as a Lie algebra of Euclidean and projective vector fields. % and locally diffeomorphic to I$_4$ for $y^2>4x$. Indeed, for $y^2\geq4x$ the Lie algebra leaves invariant the distribution spanned by the vector field
%$
%Y=\partial_x+2(\pm \sqrt{y^2-4x}+v)^{-1}\partial_v.
%$

Let us finally show that previous Lie algebras are diffeomorphic to Lie subalgebras of P$_2$, P$_3$ and P$_4$. It was proved in \cite{BBHLS15} that I$_{14A}$, I$_{14B}$, I$^{\alpha=1}_8$ and  P$_1$ are contained in P$_4$.
Since P$_2$ and P$_3$ are simple and P$_4$ is solvable, it follows that P$_2$ and P$_3$ are not isomorphic to any Lie subalgebra of P$_4$. Hence, any Lie algebra of Euclidean projective vector fields on the plane is diffeomorphic to a Lie subalgebra of P$_2$, P$_3$ or P$_4$. 
\end{proof}

\begin{proposition}\label{Hyp} A Vessiot--Guldberg Lie algebra on $\mathbb{R}^2$ consists of hyperbolic vector fields if and only if it is diffeomorphic to a Lie subalgebra of {\rm I}$_{11}$, namely ${\rm I}_1-{\rm I}_4,{\rm I}_6,{\rm I}_8,{\rm I}_9-{\rm I}_{11}, {\rm I}^{r=1}_{14B}, {\rm I}^{r=1}_{15B}.$
\end{proposition}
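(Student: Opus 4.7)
The plan is to derive both implications from Lemma \ref{one}(3) together with the observation, recalled in Section \ref{LSLS}, that $\mathrm{I}_{11}\simeq\mathfrak{so}(2,2)$ is a maximal finite-dimensional Lie algebra of conformal vector fields on $\mathbb{R}^2$ with respect to the hyperbolic metric $\mathrm{d}\xi_1\otimes\mathrm{d}\xi_2+\mathrm{d}\xi_2\otimes\mathrm{d}\xi_1$.

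The sufficiency direction is immediate: if $V$ is diffeomorphic to a Lie subalgebra of $\mathrm{I}_{11}$, then $V$ is made of conformal vector fields with respect to a hyperbolic metric (the pull-back of the above one through the diffeomorphism), hence of hyperbolic vector fields. The nontrivial direction is the necessity. I would start by applying Lemma \ref{one}(3) to obtain two commuting vector fields $Y_1,Y_2$, both of module zero with respect to the hyperbolic metric $g$, such that every $Z\in V$ takes the form $Z=f^1_ZY_1+f^2_ZY_2$ with $Y_2f^1_Z=Y_1f^2_Z=0$. Since $Y_1\wedge Y_2\neq 0$ and $[Y_1,Y_2]=0$, Frobenius' theorem yields local coordinates $(x,y)$ in which $Y_1=\partial_x$ and $Y_2=\partial_y$. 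In these coordinates, $V$ consists of vector fields of the separable form $X=f^1(x)\partial_x+f^2(y)\partial_y$. Writing $g=a\,\mathrm{d}x\otimes\mathrm{d}x+b\,\mathrm{d}y\otimes\mathrm{d}y+c(\mathrm{d}x\otimes\mathrm{d}y+\mathrm{d}y\otimes\mathrm{d}x)$ and imposing $g(Y_1,Y_1)=g(Y_2,Y_2)=0$ forces $a=b=0$, so $g$ is conformal to the model hyperbolic metric $\mathrm{d}x\otimes\mathrm{d}y+\mathrm{d}y\otimes\mathrm{d}x$. Consequently $V$ is a Lie subalgebra of the conformal algebra of this metric, which is precisely $\mathrm{I}_{11}$ after identifying $(x,y)$ with $(\xi_1,\xi_2)$.

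It then remains to read off the GKO classification (Table \ref{table3}) and select the classes whose generators can be placed, after a diffeomorphism, in the separable form $f^1(x)\partial_x+f^2(y)\partial_y$, equivalently those classes that simultaneously preserve two transverse one-dimensional distributions generated by commuting vector fields. Inspecting the invariant distributions column of Table \ref{table3}, the only admissible classes are $\mathrm{I}_1$--$\mathrm{I}_4$, $\mathrm{I}_6$, $\mathrm{I}_8$, $\mathrm{I}_9$--$\mathrm{I}_{11}$, $\mathrm{I}^{r=1}_{14B}$, $\mathrm{I}^{r=1}_{15B}$, all of which manifestly embed into $\mathrm{I}_{11}$ in the coordinates above; the remaining imprimitive classes either admit a single invariant distribution, or their two invariant distributions cannot be simultaneously rectified by commuting generators of the required separable form, and the primitive classes on the plane (which would need to avoid Lemma \ref{one}(3) altogether) are ruled out because they do not preserve any one-dimensional distribution.

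The main obstacle is the bookkeeping step in the last paragraph: one must be careful when checking, class by class, that either the generators can indeed be brought into the required separable form and hence sit inside $\mathrm{I}_{11}$, or that no such diffeomorphism exists (for instance because the invariant distributions forced by $V$ do not admit commuting generators, or the isomorphism type of $V$ is incompatible with any subalgebra of $\mathfrak{so}(2,2)$). This verification is straightforward but tedious, and it is where Table \ref{table3}, together with the relations among the GKO classes summarized in Table \ref{table5}, does all the work.
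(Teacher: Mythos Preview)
Your overall plan matches the paper's: apply Lemma \ref{one}(3) to force the separable shape $f^1(x)\partial_x+f^2(y)\partial_y$, then read off from Table \ref{table3} which GKO classes admit such a description, and finally note that each of them sits inside $\mathrm{I}_{11}$. The paper's proof is literally these three sentences.

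There is, however, one genuine error in the middle of your argument. You write that ``$V$ is a Lie subalgebra of the conformal algebra of this metric, which is precisely $\mathrm{I}_{11}$.'' This is false: in dimension two the conformal Lie algebra of \emph{any} nondegenerate metric is infinite-dimensional. For the hyperbolic metric $\mathrm{d}x\otimes\mathrm{d}y+\mathrm{d}y\otimes\mathrm{d}x$ a direct computation shows that \emph{every} vector field of the form $f^1(x)\partial_x+f^2(y)\partial_y$ is conformal (with potential $(f^1)'(x)+(f^2)'(y)$), so the conformal algebra is the full infinite-dimensional space of such fields, of which $\mathrm{I}_{11}$ is only the six-dimensional polynomial part. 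The paper itself is careful to say only that $\mathrm{I}_{11}$ is a maximal Lie algebra of conformal \emph{polynomial} vector fields. So your attempted shortcut---deducing $V\subset\mathrm{I}_{11}$ directly from conformality---does not work.

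The fix is exactly what you sketch in your final two paragraphs: since $V$ is a \emph{finite-dimensional} Lie algebra of separable vector fields, it must appear (up to diffeomorphism) as one of the GKO classes whose basis elements can be put in that form; one then checks case by case that each such class embeds in $\mathrm{I}_{11}$. That bookkeeping step is not a shortcut around the classification---it \emph{is} the argument, and the incorrect sentence about the conformal algebra should simply be deleted.
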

\begin{proof} Lemma \ref{one} states that every Lie algebra $V$ of hyperbolic vector fields on $\mathbb{R}^2$ admits, at least, two different invariant distributions spanned by two commuting vector fields $Y_1,Y_2$ and
	every  $Z\in V$ can be brought into the form $Z=f_Z^1Y_1+f_Z^2Y_2$ for some  $f_Z^1,f_Z^2\in C^\infty(\mathbb{R}^2)$ with $Y_2f^1_Z=Y_1f^2_Z=0$.
	In view of Table \ref{table3}, the only options are those given in the corollary. Additionally, it can be proved immediately that these Lie algebras are Lie subalgebras of I$_{11}$ and they are therefore diffeomorphic to a Lie algebra of hyperbolic vector fields.
\end{proof}

\begin{proposition}\label{HypPro} A Vessiot--Guldberg Lie algebra of vector fields on $\mathbb{R}^2$ consists of projective hyperbolic vector fields if and only if it is diffeomorphic to a Lie subalgebra of {\rm I}$_4$ or {\rm I}$_9$.
\end{proposition}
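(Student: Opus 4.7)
The plan is to combine Proposition \ref{Hyp} with a direct embedding analysis into ${\rm P}_8\simeq \mathfrak{sl}(3,\mathbb{R})$, using Table \ref{table3} to dispose of each remaining case.

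For the easier implication, I would first verify that ${\rm I}_4$ and ${\rm I}_9$ are themselves diffeomorphic to Lie subalgebras of ${\rm P}_8$: from the explicit bases in Table \ref{table3} this is a routine check (for ${\rm I}_4$, identify the affine generators of ${\rm P}_8$; for ${\rm I}_9$, identify the standard $\mathfrak{sl}(2,\mathbb{R})\oplus\mathbb{R}$ block sitting inside $\mathfrak{sl}(3,\mathbb{R})$). Since Proposition \ref{Hyp} already guarantees that ${\rm I}_4$ and ${\rm I}_9$ are Lie algebras of hyperbolic vector fields, every Lie subalgebra of them is simultaneously hyperbolic and projective.

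For the converse, let $V$ be a Vessiot--Guldberg Lie algebra of projective hyperbolic vector fields on $\mathbb{R}^2$. By Proposition \ref{Hyp}, $V$ is diffeomorphic to a Lie subalgebra of ${\rm I}_{11}$, hence to a Lie subalgebra of some member of the list
\[
{\rm I}_1,\ {\rm I}_2,\ {\rm I}_3,\ {\rm I}_4,\ {\rm I}_6,\ {\rm I}_8,\ {\rm I}_9,\ {\rm I}_{10},\ {\rm I}_{11},\ {\rm I}^{r=1}_{14B},\ {\rm I}^{r=1}_{15B}.
\]
The idea is to eliminate every class in this list that is not diffeomorphic to a Lie subalgebra of ${\rm I}_4$ or ${\rm I}_9$, using the constraint that $V$ must also embed into ${\rm P}_8\simeq \mathfrak{sl}(3,\mathbb{R})$.

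The high-dimensional candidates ${\rm I}_{10}$, ${\rm I}_{11}$, and ${\rm I}^{r=1}_{15B}$ are ruled out by an abstract Lie-algebraic obstruction: ${\rm I}_{11}\simeq \mathfrak{so}(2,2)\simeq \mathfrak{sl}(2,\mathbb{R})\oplus \mathfrak{sl}(2,\mathbb{R})$ admits no faithful representation inside $\mathfrak{sl}(3,\mathbb{R})$ because $\mathfrak{sl}(3,\mathbb{R})$ has rank $2$ but its only semisimple subalgebras of rank $2$ are (the Cartan of) $\mathfrak{sl}(3,\mathbb{R})$ itself; an analogous rank/dimension argument disposes of ${\rm I}_{10}$ and ${\rm I}^{r=1}_{15B}$. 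For each of the remaining smaller classes ${\rm I}_1,{\rm I}_2,{\rm I}_3,{\rm I}_6,{\rm I}_8,{\rm I}^{r=1}_{14B}$, I would exhibit an explicit inclusion (up to diffeomorphism) into ${\rm I}_4$ or ${\rm I}_9$, following the pattern of inclusions recorded in Table \ref{table5}; these checks parallel the ones already carried out in the proofs of Propositions \ref{ConLie} and \ref{ProjEuc}.

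The main obstacle I expect is distinguishing which of the small classes embed in ${\rm I}_4$ versus ${\rm I}_9$, and doing so in a way compatible with the hyperbolic metric. Dimension count is not enough, because several classes can sit inside both ${\rm I}_4$ and ${\rm I}_9$ through different diffeomorphisms; the cleanest way to settle this is to use the one-dimensional invariant distributions from the last column of Table \ref{table3} together with the two commuting null generators $Y_1,Y_2$ produced in the proof of Lemma \ref{one}(3), so that the embedding automatically respects the hyperbolic structure.
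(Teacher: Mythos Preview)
Your partition of the list into ``rule out'' versus ``embed'' is misaligned with what actually happens, and this is where the argument would break down. You propose to rule out ${\rm I}^{r=1}_{15B}$ and to embed ${\rm I}_3$ and ${\rm I}_6$ into ${\rm I}_4$ or ${\rm I}_9$. Both moves are wrong. The algebra ${\rm I}^{r=1}_{15B}=\langle \partial_x,\partial_y,y\partial_y\rangle$ is three-dimensional and sits inside ${\rm I}_9$ directly (see Table \ref{table5}); there is no obstruction to it being projective. Conversely, ${\rm I}_3=\langle \partial_x,x\partial_x,x^2\partial_x\rangle\simeq\mathfrak{sl}(2)$ cannot be embedded into either ${\rm I}_4$ or ${\rm I}_9$: as a Lie algebra of vector fields it spans a rank-one distribution, while ${\rm I}_4$ spans rank two, so they are not diffeomorphic; and ${\rm I}_9\simeq\mathfrak{h}_2\oplus\mathfrak{h}_2$ is solvable, so it contains no copy of $\mathfrak{sl}(2)$ at all. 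The same obstruction propagates to ${\rm I}_6$, ${\rm I}_{10}$, ${\rm I}_{11}$, each of which contains a copy of ${\rm I}_3$.

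The paper's proof pivots on exactly this point: it shows that the only three-dimensional Lie subalgebra of ${\rm P}_8$ spanning a rank-one distribution is $\langle x\partial_x+y\partial_y,\,x(x\partial_x+y\partial_y),\,y(x\partial_x+y\partial_y)\rangle$, which is not isomorphic to $\mathfrak{sl}(2)$; hence ${\rm I}_3$ is not diffeomorphic to any Lie subalgebra of ${\rm P}_8$, and neither are ${\rm I}_6,{\rm I}_{10},{\rm I}_{11}$. Your abstract rank argument for ${\rm I}_{11}$ is salvageable (though the phrase ``the Cartan of $\mathfrak{sl}(3,\mathbb{R})$'' is not a semisimple subalgebra), but it does not extend to ${\rm I}_3$ or ${\rm I}_6$, since $\mathfrak{sl}(2)$ and $\mathfrak{gl}(2)$ do embed abstractly in $\mathfrak{sl}(3)$. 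The obstruction there is geometric (the rank of the associated distribution), not purely Lie-algebraic, and your plan does not account for it.
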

\begin{proof}

Proposition \ref{Hyp} establishes that the Lie algebras of hyperbolic vector fields are the Lie subalgebras of I$_{11}$. Let us determine which of them are diffeomorphic to Lie subalgebras of P$_8$. Such algebras can also be considered as Lie algebras of projective vector fields. Many of the Lie subalgebras of I$_{11}$ can be proved not to be diffeomorphic to a Lie subalgebra of P$_8$ using the following argument. A long but straightforward computation shows that the only three-dimensional Lie subalgebra of P$_8$  spanning a one-dimensional distribution is spanned by the vector fields
$$
x\partial_x+y\partial_y,\quad x(x\partial_x+y\partial_y),\quad y(x\partial_x+y\partial_y).
$$
Since the last two vector fields commute among themselves, these vector fields do not span a Lie algebra isomorphic to $\mathfrak{sl}(2)$. So, there is no Lie subalgebra isomorphic to $\mathfrak{sl}(2)$ in P$_8$ spanning a distribution of rank one. As a consequence, I$_3$ is not diffeomorphic to any Lie subalgebra of P$_8$.   
all other Lie subalgebras of I$_{11}$ containing a Lie subalgebra diffeomorphic to I$_3$, namely I$_6$, I$_{10}$ and I$_{11}$ as seen in Table \ref{table3}, are not  diffeomorphic to any Lie subalgebra of P$_8$ either.

The remaining Lie subalgebras of I$_{11}$ can be shown to be diffeomorphic to Lie subalgebras of P$_8$. In view of Table \ref{table3}, the Lie algebras I$_1$, I$_2$, I$_8$, I$_9$, I$^{r=1}_{14B}$, I$^{r=1}_{15B}$ are trivially contained in P$_8$. It was shown in the proof of Proposition \ref{ProjEuc} that I$_4$ is locally diffeomorphic to one of the Lie subalgebras of P$_8$. The Lie algebras I$^{r=1}_{14A}$ and I$^{r=1}_{15B}$ can be shown to be diffeomorphic to Lie subalgebras of P$_8$ using the change of variables $\bar y=e^{-cx}y,\bar x=x$.

All above-mentioned Lie algebras are diffeomorphic to Lie subalgebras of I$_4$ or I$_9$. As I$_4$ cannot be contained in I$_9$ because I$_9$ does not contain any Lie subalgebra isomorphic to $\mathfrak{sl}(2)$, it follows that every Lie algebra of projective hyperbolic vector fields on the plane is locally diffeomorphic to a Lie subalgebra of I$_4$ or I$_9$.

\end{proof}

%\begin{minipage}[c]{8cm}
\begin{table} [h!]{\footnotesize
 \noindent
\caption{{\small All Lie algebras of Euclidean and hyperbolic vector fields on $\mathbb{R}^2$ according to Propositions from \ref{ConLie} to \ref{HypPro}  and their inclusion relations. Lie algebras, which can also be considered as Lie algebras of projective vector fields, are highlighted in red. Arrows indicate all inclusion relations among Lie algebras. Such inclusion relations can easily be obtained after a long but simple computation. In bold are highlighted those Lie algebras of Hamiltonian vector fields relative to a symplectic structure (see \cite{BBHLS15} for details)}}
\label{table5}
\begin{minipage}{8cm}
\qquad\qquad\xymatrix@R=1.8mm@C=3.4mm{
                                         &{\rm P}_7    &      &  \\
                                         &                    &      &\\
                                         &\color{red}{\rm P}_4\ar[uu]    &     & \\
\color{red}{\rm \bf P}_3\ar[ruuu]    &\color{red}{\rm \bf P}_2\ar@/^{3mm}/[uuu]&\color{red}{\rm \bf P}_1\ar[lu]&\color{red}{\rm \bf I}^{\alpha=1}_8\ar@/_{1mm}/[ull]\\
                                        &  {\color{red}{\bf I}^{r=1}_{14A}\simeq \mathfrak{h}_2}\ar[u] \ar@/_{1mm}/[urr]            &{\color{red}{\rm \bf I}^{r=1}_{14B}\simeq \mathbb{R}^2\ar[ru]\ar[u]}}\end{minipage}\begin{minipage}{8cm}\qquad\qquad\xymatrix@R=1.8mm@C=3.4mm{
                                     &{\rm I}_{11}    &      &  \\
                                        & {\rm I}_{10}\ar[u]                   &      &\\
                                         &{\rm I}_6\ar[u]    &\color{red} {\rm I}_9\ar[lu]    & \\
 \color{red} {\rm \bf I}_4\ar@/^{3mm}/[uuur]&{\rm I}_3\ar[u]&\color{red}{\rm \bf I}_8\ar[u]&\color{red}{\rm  I}^{r=1}_{15B}\ar[ul]\ar@/_{1mm}/[ull]\\
                                        {\color{red}{\rm \bf I}^{r=1}_{14A}\simeq \mathfrak{h}_2\ar[rrru]\ar[rru]\ar[u]}&\color{red}{\rm I}_2\ar[u]\ar[urr]              &{\color{red}{\rm \bf I}^{r=1}_{14B}\simeq \mathbb{R}^2\ar[ru]\ar[u]}                        \\ 
%\dim\,1\rightarrow&&&&&I_1\ar[llu]\ar[u]\ar[rruu]\ar[lluu]\ar[rru]\ar@/_{2mm}/[uurrrrrr]\ar@/_{13mm}/[uuurrrrrrr]&&&&&&&\\
%\save "2,1"."7,8"*\frm<8pt>{.}\restore
%\save "1,1"."7,12"**\frm<8pt>{--}\restore
}\end{minipage}}
\end{table}
%\end{minipage}

\begin{table}[h!] {\footnotesize
 \noindent
\caption{{\small {Non-exhaustive tree of inclusion relations} between classes of the GKO classification. The diagram details all Lie subalgebras of I$_{11}$ and P$_7$.
We write $A\rightarrow %\dashedrightarrow 
B$ when a subclass of $A$ is diffeomorphic to a Lie subalgebra of $B$. Every Lie algebra includes I$_1$. In bold and italics are classes with Hamiltonian Lie algebras and rank one associated distributions, respectively. Colors help to distinguish the arrows.}}
\label{table2}
\medskip
\xymatrix@R=1.8mm@C=3.4mm{
 \dim\,>6             &             &              &{\rm P}_8&                         &&&&&{\rm I}_{20}\ar@{-->}[llllll]_{r=1}&&\\
 \dim\,6\rightarrow   &       &          {\rm P}_7 &{\rm P}_6\ar[u]         &          {\rm I}_{11}&&&         & {\rm I}_{19}\ar[ru]  &{\rm I}_{18}\ar@/_{4mm}/@{-->}[llllll]^{r=1}\ar[u]&      &\\
\dim\,5\rightarrow&          &           &{\rm \bf P}_5\ar[u]          &{\rm I}_{10}\ar[u]&&   &   &{\rm \bf I}_{16}\ar@/_{3mm}/@{-->}[lllll]_{\,\,\,\quad\alpha=-1,r=1}\ar@{--}[u]\ar[ru] &&{\rm I}_{15}\ar@{--}[ul]& & \\
\dim\,4\rightarrow& &{\rm P}_4\ar[ruu]\ar[uu]&     &{\rm I}_6\ar[u] &{\rm I}_9\ar@/^{1mm}/[lluu]\ar@/_{1mm}/[rrrruu]\ar[lu]&{\rm I}_7\ar@/^{4mm}/[uuurrr]\ar@/_{2mm}/[uulll]      &&{\rm I}_{17}\ar[u]&{\color{black} {\it I}_{13}}\ar@{--}[ur]&{\rm I}_{14}^{r> 1}\ar@{-->}@/_{2mm}/[lllllllu]\ar[u]&&\\
\dim\,3\rightarrow  &{\rm \bf P}_3\ar@/^{10mm}/[uuuurr]\ar[ruuu]    &{\rm \bf P}_2\ar@/_{1mm}/[uuuur]\ar@/^{3mm}/[uuu]&{\rm \bf P}_1\ar@{--}[uu]\ar[lu]&{\color{black}{\it I}_3}\ar[u]
&{\rm \bf I}_8\ar@{--}@/^{0mm}/[ulll]^{\alpha=1}\ar@{--}@/^{1mm}/[uull]_{\alpha=-1\,\,\,\,\,\,\,\,\,\,\,\,\,\,\,\,\,\,\,\,\,\,\qquad}\ar@/_{1mm}/[uurrr]\ar[u]\ar@{--}@/_{1mm}/[uuurrr]&{\rm \bf I}_5\ar@/_{0mm}/[uulll]\ar[u]\ar[uuurr]      &{\rm \bf I}_4\ar@/_{4mm}/[llluuu]\ar@/_{5mm}/[lllluuuu]&&{\color{black}{\it \bf I}_{12}^{r>1}}\ar@{--}[ul]\ar@{--}[ur]\ar[u]&&&\\
&&&&&&&&&& &\\
\dim\,2\rightarrow          &          &&  {\color{brown}{\bf I}^{r=1}_{14A}\simeq \mathfrak{h}_2}\ar@[brown]@/_{13mm}/[rrrrrrruuu]\ar@[brown]@/^{2mm}/[rruu]\ar@[brown]@/^{2mm}/[rrruu]\ar@[brown]@/_{2mm}/[rrrrruuu]\ar@[brown]@/^{2mm}/[rrrruu]\ar@[brown][luu]\ar@[brown]@/_{2mm}/[uuul]               &{\color{orange}{\it I}_2\simeq \mathfrak{h}_2}\ar@[orange]@/_{12mm}/[rrrrrruuuu]\ar@[orange]@/_{4mm}/[rrrruuuu]\ar@[orange]@/_{4mm}/[rrrrruuu]\ar@[orange][uu]\ar@[orange][uuur]\ar@[orange][uuurr]&&{\color{blue}{\rm \bf I}^{r=1}_{14B}\simeq \mathbb{R}^2\ar@[blue]@/_{1mm}/[uuurrrr]\ar@[blue]@/_{4mm}/[uuull]\ar@[blue]@/_{4mm}/[uuu]\ar@/^{2mm}/@[blue][llluuuu]\ar@[blue]@/^{2mm}/[llluu]\ar@[blue][rruuuuu]\ar@[blue]@/_{3mm}/[rruuu]\ar@[blue][luu]}& {\color{red}{\bf {\it I}}_{\bf 12}^{\bf r=1}\simeq\mathbb{R}^2}\ar@/^{1mm}/@[red][lllluuuu]\ar@/^{2mm}/@[red][ruuuuu]\ar@[red][ruuu]\ar@/_{2mm}/@[red][rruu]\ar@[red]@/_{7mm}/[rrruuu]&          &                     &&&&\\ 
%\dim\,1\rightarrow&&&&&I_1\ar[llu]\ar[u]\ar[rruu]\ar[lluu]\ar[rru]\ar@/_{2mm}/[uurrrrrr]\ar@/_{13mm}/[uuurrrrrrr]&&&&&&&\\
\save "2,1"."7,8"*\frm<8pt>{.}\restore
\save "1,1"."7,12"**\frm<8pt>{--}\restore
}}
\end{table}

\begin{table}[h!] {\tiny
 \noindent
\caption{{\small We here describe the so-called GKO classification of  the $8+20$  finite-dimensional real Lie algebras of vector fields on the plane and their most relevant
features. The first (one or two) vector fields which are written between brackets form a modular generating system.
The functions $\xi_1(x),\ldots,\xi_r(x)$ and $1$ are linearly independent and the functions $\eta_1(x),\ldots,\eta_r(x)$ form a basis of fundamental solutions for an $r$-order homogeneous differential equation with constant coefficients \cite[pp.~470--471]{GKP92}. Finally, $\mathfrak{g}=\mathfrak{g}_1\ltimes \mathfrak{g}_2$ means that $\mathfrak{g}$ is the direct sum (as linear subspaces) of $\mathfrak{g}_1$ and $\mathfrak{g}_2$, where $\mathfrak{g}_2$ is an ideal of $\mathfrak{g}$.}}
\label{table3}
\medskip
\noindent\hfill
 \begin{tabular}{ p{.2cm} p{2cm}    p{8.5cm} p{1cm}l}
\hline
&  &\\[-1.9ex]
\#&Primitive & Basis of vector fields $X_i$ &Domain&Inv. Dis\\[+1.0ex]
\hline
 &  &\\[-1.9ex]
P$_1$&$A_\alpha\simeq \mathbb{R}\ltimes \mathbb{R}^2$ & $  { {\partial_x} ,    {\partial_y} ,   \alpha(x\partial_x + y\partial_y)  +  y\partial_x - x\partial_y},\quad \ \alpha\geq 0$&$\mathbb{R}^2$&$-$ \\[+1.0ex]
P$_2$&$\mathfrak{sl}(2)$ & $ {\partial_x},   {x\partial_x  +  y\partial_y} ,   (x^2  -  y^2)\partial_x  +  2xy\partial_y$&$\mathbb{R}^2_{y\neq 0}$&$-$\\[+1.0ex]
P$_3$&$\mathfrak{so}(3)$ &${     { y\partial_x  -  x\partial _y},     { (1  +  x^2  -  y^2)\partial_x  +  2xy\partial_y} ,   2xy\partial_x  +  (1  +  y^2  -  x^2)\partial_y}$&$\mathbb{R}^2$&$-$\\[+1.0ex]
P$_4$&$\mathbb{R}^2\ltimes\mathbb{R}^2$ &$  {\partial_x},   {\partial_y},  x\partial_x + y\partial_y,   y\partial_x - x\partial_y$&$\mathbb{R}^2$&$-$\\[+1.0ex]
P$_5$&$\mathfrak{sl}(2 )\ltimes\mathbb{R}^2$ &${  {\partial_x},   {\partial_y},  x\partial_x - y\partial_y,  y\partial_x,  x\partial_y}$&$\mathbb{R}^2$&$-$\\[+1.0ex]
P$_6$&$\mathfrak{gl}(2 )\ltimes\mathbb{R}^2$ &${ {\partial_x},    {\partial_y},   x\partial_x,   y\partial_x,   x\partial_y,   y\partial_y}$&$\mathbb{R}^2$&$-$\\[+1.0ex]
P$_7$&$\mathfrak{so}(3,1)$ &${  {\partial_x},   {\partial_y},   x\partial_x\!+\! y\partial_y,   y\partial_x \!-\! x\partial_y,   (x^2 \!-\! y^2)\partial_x \!+\! 2xy\partial_y,  2xy\partial_x \!+\! (y^2\!-\!x^2)\partial_y}$  &$\mathbb{R}^2$&$-$\\[+1.0ex]
P$_8$&$\mathfrak{sl}(3 )$ &${  {\partial_x},    {\partial_y},   x\partial_x,   y\partial_x,   x\partial_y,   y\partial_y,   x^2\partial_x + xy\partial_y,   xy\partial_x  +  y^2\partial_y}$&$\mathbb{R}^2$&$-$\\[+1.5ex]
\hline
&  &\\[-1.5ex]
\#& Mono-imprimitive\!\! & Basis of vector fields $X_i$ &Domain&Inv. Dis\\[+1.0ex]
\hline
&  &\\[-1.5ex]
I$_5$&$\mathfrak{sl}(2 )$ &${ {\partial_x},    {2x\partial_x + y\partial_y},   x ^2\partial_x  +  xy\partial_y}$&$\mathbb{R}^2_{y\neq 0}$&$\partial_y$\\[+1.0ex]
I$_7$&$\mathfrak{gl}(2 )$ & ${  {\partial_x},   {y\partial_y} ,     x\partial_x,    x^2\partial_x +  xy \partial_y}$&$\mathbb{R}^2_{y\neq 0}$ &$\partial_y$\\[+1.0ex]
I$_{12}$&$\mathbb{R}^{r + 1}$ &$ {\partial_y} ,   \xi_1(x)\partial_y, \ldots , \xi_r(x)\partial_y,\quad   r\geq 1$&$\mathbb{R}^2$ &$\partial_y$\\[+1.0ex]
I$_{13}$&$\mathbb{R}\ltimes \mathbb{R}^{r + 1}$ &$  {\partial_y} ,   y\partial_y,    \xi_1(x)\partial_y, \ldots , \xi_r(x)\partial_y,\quad   r\geq 1$ &$\mathbb{R}^2$&$\partial_y$\\[+1.0ex]
I$_{14}$&$\mathbb{R}\ltimes \mathbb{R}^{r}$ & ${ {\partial_x},   {\eta_1(x)\partial_y} ,  {\eta_2(x)\partial_y},\ldots ,\eta_r(x)\partial_y},\quad (r> 1, \eta_1'(x)\neq\eta_1(x))$&$\mathbb{R}^2$&$\partial_y$\\[+1.0ex]
I$_{15}$&$\mathbb{R}^2\ltimes \mathbb{R}^{r}$ &  ${ {\partial_x},    {y\partial_y} ,    {\eta_1(x)\partial_y},\ldots, \eta_r(x)\partial_y},\quad  (r> 1, \eta_1'(x)\neq\eta_1(x))$&$\mathbb{R}^2$&$\partial_y$\\[+1.0ex]
I$_{16}$&$C_\alpha^r\!\simeq\! \mathfrak{h}_2\!\ltimes\!\mathbb{R}^{r + 1}$ & ${  {\partial_x},    {\partial_y} ,   x\partial_x  +  \alpha y\partial y,   x\partial_y, \ldots, x^r\partial_y},\quad   r\geq 1,\qquad \alpha\in\mathbb{R}$&$\mathbb{R}^2$&$\partial_y$\\[+1.0ex]
I$_{17}$&$\mathbb{R}\ltimes(\mathbb{R}\ltimes \mathbb{R}^{r})$ &$  {\partial_x},    {\partial_y} ,   x\partial_x  +  (ry  +  x^r)\partial_y ,   x\partial_y, \ldots,  x^{r - 1}\partial_y,\quad   r\geq 1$ &$\mathbb{R}^2$&$\partial_y$\\[+1.0ex]
I$_{18}$&$(\mathfrak{h}_2\!\oplus\! \mathbb{R})\!\ltimes\! \mathbb{R}^{r + 1}$ & $  {\partial_x},    {\partial_y},   x\partial_x,   x\partial_y,   y\partial_y,   x^2\partial_y, \ldots,x^r\partial_y,\quad r\geq 1$ &$\mathbb{R}^2$&$\partial_y$\\[+1.0ex]
I$_{19}$&$\mathfrak{sl}(2 )\ltimes \mathbb{R}^{r + 1}$ &  $  {\partial_x},    {\partial_y} ,   x\partial_y,    2x\partial _x  +  ry\partial_y,   x^2\partial_x  +  rxy\partial_y,   x^2\partial_y,\ldots, x^r\partial_y ,\quad   r\geq 1$&$\mathbb{R}^2$ &$\partial_y$\\[+1.0ex]
I$_{20}$&$\mathfrak{gl}(2 )\ltimes \mathbb{R}^{r + 1}$ &  $  {\partial_x},    {\partial_y} ,   x\partial_x,   x\partial_y,   y\partial_y,   x^2\partial_x  +  rxy\partial_y,   x^2\partial_y,\ldots, x^r\partial_y,\quad   r\geq 1$ &$\mathbb{R}^2$&$\partial_y$\\[1.5ex]
\hline
&  &\\[-1.5ex]
\#& Multi-imprimitive\!\! & Basis of vector fields $X_i$ &Domain&Inv. Dis\\[+1.0ex]
\hline
&  &\\[-1.5ex]
I$_1$&$\mathbb{R}$ &$  {\partial_x} $ & $\mathbb{R}^2$&$\partial_y,\partial_x+h(y)\partial_y$\\[+1.0ex]
I$_2$&$\mathfrak{h}_2$ & $  {\partial_x} ,  x\partial_x$& $\mathbb{R}^2$&$\partial_x,\partial_y$\\[+1.0ex]
I$_3$&$\mathfrak{sl}(2 )$  &$  {\partial_x},  x\partial_x,  x^2\partial_x$& $\mathbb{R}^2$&$\partial_x,\partial_y$\\[+1.0ex]
I$_4$&$\mathfrak{sl}(2 )$  & ${  {\partial_x  +  \partial_y},    {x\partial _x + y\partial_y},   x^2\partial_x  +  y^2\partial_y}$ &$\mathbb{R}^2_{x\neq y}$&$\partial_x,\partial_y$\\[+1.0ex]
I$_6$&$\mathfrak{gl}(2 )$ & ${ {\partial_x},    {\partial_y},   x\partial_x,   x^2\partial_x}$&$\mathbb{R}^2$&$\partial_x,\partial_y$\\[+1.0ex]
I$^{\alpha\neq 1}_8$&$B_{\alpha\neq 1}\simeq \mathbb{R}\ltimes\mathbb{R}^2$ &${  {\partial_x},    {\partial_y},   x\partial_x  +  \alpha y\partial_y},\quad  0<|\alpha|\leq 1,\alpha\neq 1$&$\mathbb{R}^2$&$\partial_x,\partial_y$\\[+1.0ex]
I$^{\alpha=1}_8$&$B_1\simeq \mathbb{R}\ltimes\mathbb{R}^2$ &${  {\partial_x},    {\partial_y},   x\partial_x  +  y\partial_y}$&$\mathbb{R}^2$&$\lambda_x\partial_x+\lambda_y\partial_y$\\[+1.0ex]
I$_9$&$\mathfrak{h}_2\oplus\mathfrak{h}_2$ &${ {\partial_x},    {\partial_y},   x\partial_x,  y\partial_y}$&$\mathbb{R}^2$&$\partial_x,\partial_y$\\[+1.0ex]
I$_{10}$&$\mathfrak{sl}(2 )\oplus \mathfrak{h}_2$ & ${ {\partial_x},    {\partial_y} ,   x\partial_x,  y\partial_y,  x^2\partial_x }$&$\mathbb{R}^2$&$\partial_x,\partial_y$\\[+1.0ex]
I$_{11}$&$\mathfrak{sl}(2 )\oplus\mathfrak{sl}(2 )$ &$  {\partial_x},    {\partial_y},   x\partial_x,   y\partial_y,   x^2\partial_x ,   y^2\partial_y $&$\mathbb{R}^2$&$\partial_x,\partial_y$\\[+1.0ex]
I$^{r=1}_{14A}$&$\mathbb{R}\ltimes \mathbb{R}$ & $ {\partial_x},    { e^{cx}\partial_y},\quad  c\in \mathbb{R}\backslash0$&$\mathbb{R}^2$&$e^{cx}\partial_y,\partial_x+cy\partial_y$\\[+1.0ex]
I$^{r=1}_{14B}$&$\mathbb{R}\ltimes \mathbb{R}$ & $ {\partial_x},    {\partial_y}$&$\mathbb{R}^2$&$\lambda_x\partial_x+\lambda_y\partial_y$\\[+1.0ex]
I$^{r=1}_{15A}$&$\mathbb{R}^2\ltimes \mathbb{R}$ &  $ {\partial_x},    {y\partial_y} ,    {e^{cx}\partial_y},\quad  c \in \mathbb{R}\backslash0$&$\mathbb{R}^2$&$e^{cx}\partial_y,\partial_x+cy\partial_y$\\[+1.0ex]
I$^{r=1}_{15B}$&$\mathbb{R}^2\ltimes \mathbb{R}$ &  $ {\partial_x},    {y\partial_y} ,    \partial_y$&$\mathbb{R}^2$&$\partial_x,\partial_y$\\[+1.0ex]
\hline
 \end{tabular}
\hfill}
\end{table}

\section{Second-order Riccati chain equations and conformal Riccati equations}

The findings of the previous section allow us to characterize when second-order Riccati chain equations can be related to hyperbolic and/or Euclidean Riccati equations. In this case, superposition rules depending on fewer particular solutions are available \cite{AW80}.
To prove these results, we make use of the following Lemmas \ref{8Lie} and \ref{RiccNoHam}.

\begin{lemma}\label{8Lie} The Lie algebra {\rm P}$_4$ is the only four-dimensional solvable Lie algebra $V$ of vector fields on $\mathbb{R}^2$ such that: a) the vector fields of the ideal $[V,V]$ span a distribution of rank two and $\dim [V,V]=2$ and b) $V$ acts irreducibly on  $[V,V]$ via the adjoint representation.
\end{lemma}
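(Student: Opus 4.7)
The plan is to reduce the classification to a classical problem about two-dimensional subalgebras of $\mathfrak{gl}_2(\mathbb{R})$. First I would show that $[V,V]$ must be abelian: as it is two-dimensional the only alternative is $[V,V]\simeq\mathfrak{h}_2$, but then the one-dimensional derived ideal $[[V,V],[V,V]]$ is an ideal of $V$ (since $[V,V]\triangleleft V$), hence a one-dimensional subspace of $[V,V]$ invariant under the adjoint action of $V$, contradicting condition (b). So $[V,V]\simeq\mathbb{R}^2$.

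Since $[V,V]$ is spanned by two commuting vector fields generating a rank-two distribution, Frobenius rectification furnishes local coordinates $(x,y)$ on $\mathbb{R}^2$ with $[V,V]=\langle\partial_x,\partial_y\rangle$. Every $X=f\partial_x+g\partial_y\in V$ must normalize $[V,V]$, and expanding $[X,\partial_x]$ and $[X,\partial_y]$ forces $f_x,f_y,g_x,g_y$ to be constants. Hence $V$ embeds into $\mathrm{P}_6=\langle\partial_x,\partial_y,x\partial_x,x\partial_y,y\partial_x,y\partial_y\rangle\simeq\mathfrak{gl}_2(\mathbb{R})\ltimes\mathbb{R}^2$. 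The kernel of the adjoint action of $V$ on $[V,V]$ is exactly $[V,V]$ itself, so $V/[V,V]$ embeds faithfully as a two-dimensional subalgebra of $\mathfrak{gl}_2(\mathbb{R})=\mathrm{P}_6/[V,V]$ acting on the standard module $\mathbb{R}^2$; by (b) this action is irreducible.

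The key step, classifying two-dimensional subalgebras of $\mathfrak{gl}_2(\mathbb{R})$ that act irreducibly on $\mathbb{R}^2$, is standard but constitutes the main work of the proof. A non-abelian such subalgebra is isomorphic to $\mathfrak{h}_2$ and so contains a non-scalar nilpotent element $B$ with a one-dimensional kernel preserved by the other generator (since the normalizer condition $[A,B]\in\mathbb{R}B$ forces $A(\ker B)\subset\ker B$), so it is reducible. An abelian two-dimensional subalgebra must equal the centralizer $\mathbb{R}[A]$ of some non-scalar $A$, and irreducibility forces $A$ to have non-real spectrum; all such centralizers are conjugate under $GL_2(\mathbb{R})$ to $\mathrm{span}(\mathrm{Id},J)$ with $J=\bigl(\begin{smallmatrix}0 & -1\\ 1 & 0\end{smallmatrix}\bigr)$.

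Realizing this $GL_2(\mathbb{R})$-conjugation by a linear change of variables in $(x,y)$, which preserves $\langle\partial_x,\partial_y\rangle$, I can choose as representatives of the two generators of $V/[V,V]$ the vector fields $x\partial_x+y\partial_y$ (accounting for $-\mathrm{Id}$) and $y\partial_x-x\partial_y$ (accounting for $-J$), obtaining $V=\langle\partial_x,\partial_y,\,x\partial_x+y\partial_y,\,y\partial_x-x\partial_y\rangle=\mathrm{P}_4$. The delicate point is the irreducibility analysis in the third paragraph, but it amounts to the classification of maximal tori in $\mathfrak{gl}_2(\mathbb{R})$ (split versus non-split) and is classical; the rest is bookkeeping with the Frobenius normal form and the normalizer computation.
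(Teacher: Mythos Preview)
Your argument is correct and takes a genuinely different route from the paper's. The paper relies on the GKO classification (Table~\ref{table3}): it lists the six classes of four-dimensional solvable Lie algebras of vector fields on $\mathbb{R}^2$, filters by condition (a) to reduce to $\mathrm{P}_4$, $\mathrm{I}_9$, and $\mathrm{I}_{16}^{r=1,\alpha=1}$, and then checks condition (b) directly for each survivor by exhibiting, for $\mathrm{I}_9$ and $\mathrm{I}_{16}$, a one-dimensional ideal inside the derived algebra. Your proof is classification-free: you first force $[V,V]$ to be abelian via the characteristic-ideal trick, straighten it to $\langle\partial_x,\partial_y\rangle$, embed $V$ in $\mathrm{P}_6\simeq\mathfrak{gl}_2(\mathbb{R})\ltimes\mathbb{R}^2$ by the normalizer computation, and then reduce the problem to the classical fact that the only two-dimensional subalgebra of $\mathfrak{gl}_2(\mathbb{R})$ acting irreducibly on $\mathbb{R}^2$ is the non-split Cartan $\langle\mathrm{Id},J\rangle$. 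The paper's approach is shorter if one accepts the GKO table as given; yours is more conceptual, explains \emph{why} $\mathrm{P}_4$ is singled out (it is the semidirect product by the non-split torus, whereas $\mathrm{I}_9$ corresponds to the split torus and $\mathrm{I}_{16}^{r=1,\alpha=1}$ to a Borel), and would generalize to higher-rank translation ideals without re-running a classification. One small point worth making explicit in your write-up: after conjugating $V/[V,V]$ to $\langle\mathrm{Id},J\rangle$, the lifts in $V$ of $\mathrm{Id}$ and $J$ may carry translation parts, but since $\partial_x,\partial_y\in V$ you can subtract these off and conclude $V=\mathrm{P}_4$ on the nose.
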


\begin{proof}  
In view of Table \ref{table3}, there exist six classes of solvable four-dimensional Lie algebras of vector fields on $\mathbb{R}^2$ spanning a distribution of rank two: P$_4$, I$_9$, I$_{14}$ for $r=3$, I$_{15}$ for $r=2$, I$_{16}$ for $r=1$ and I$_{17}$ for $r=2$. If we demand that their first derived Lie algebra  be two-dimensional and span a distribution of rank two, then the above list reduces to P$_4$,  I$_9$, I$_{16}$ for $r=1$ and $\alpha=1$. All of  these Lie algebras are of the form $\mathbb{R}^2\ltimes \mathbb{R}^2$, where the ideal is the first-derived series of the Lie algebra. Let us determine how elements of P$_4$ act on $[{\rm P}_4,{\rm P}_4]$ irreducibly under the adjoint representation. 

We see in   Table \ref{table3} that  P$_4$ admits a basis $X_1,X_2,X_3,X_4$. Let us fix a basis  $\mathcal{B}:=\{X_1,X_2\}$ for the first derived Lie algebra.
 The vector field $Z:=aX_3+bX_4+cX_1+dX_2$, with $a,b,c,d\in\mathbb{R}$,  acts on the Abelian ideal $\langle X_1,X_2\rangle=[{\rm P}_4,{\rm P}_4]$ as a morphism having  the following matrix in the chosen basis 
\begin{equation}\label{Pattern}
A_Z:=[{\rm ad}_Z]^\mathcal{B}_\mathcal{B}=\left(\begin{array}{cc}
-a&-b\\
b&-a
\end{array}\right).
\end{equation}
Therefore, P$_4$ acts on $[{\rm P}_4,{\rm P}_4]$ irreducibly (over $\mathbb{R}$). Meanwhile, we note by inspecting Table \ref{table3} that the Lie algebra I$_9$ admits a one-dimensional ideal spanned by $X_1\in [{\rm I}_9,{\rm I}_9]$ 
 and  the Lie algebra I$_{16}$ with $\alpha=1$ and $r=1$ admits a one-dimensional ideal spanned by $X_2\in [{\rm I}_{16},{\rm I}_{16}]$. Hence, I$_9$ and I$_{16}$, with $\alpha=1$ and $r=1$, do not act irreducibly on their first derived ideals. This finishes our proof.
 
 \end{proof}

Second-order Riccati chain equations are hereafter written as a first-order system related to an  $x$-dependent vector field  $X_2^{\rm RC}$. The irreducible Lie algebra $V_2$ of $X_2^{\rm RC}$ is spanned by linear combinations of the vector fields $\Gamma_0:=(X_2^{\rm RC})_{x_0}$ and $\{\Delta_{x,x_0}:=(X_2^{\rm RC})_x-(X_2^{\rm RC})_{x_0}\}_{x\in\mathbb{R}}$ and their successive Lie brackets for arbitrary $x_0\in \mathbb{R}$. Although $\Delta_{x,x_0}$ depends on the arbitrarily chosen $x,x_0$, the linear space $\Delta:=\langle \Delta_{x,x_0},x\in \mathbb{R}\rangle$ has an intrinsic meaning independent of them. We say that a second-order Riccati chain equation is {\it strictly non-autonomous} if $\Delta \neq 0$, i.e. $\Delta_{x,x_0}\neq 0$ for some $x,x_0\in\mathbb{R}$. The elements of $\Delta$ related to a strictly non-autonomous second-order Riccati chain equation span a generalized distribution of rank at most one at any point of ${\rm T}\mathbb{R}$. 

To simplify our presentation, we assume in what follows that $\Delta\neq 0$ and $X_2^{\rm RC}$ is written in the coordinate system $\{\xi_1:=x,\xi_2:=v+cu^2\}$ on $\mathbb{R}^2$ which allows us to consider $X_2^{\rm RC}$ in the form of a projective equation on $\mathbb{R}^2$. Additionally, we set $\partial_i:=\partial/\partial \xi_i$  for $i=1,2$. Observe that $\Delta_{x,x_0}=P(x,\xi){\partial_2}$ 
for a certain polynomial $P(x,\xi)$ in the variables $\xi_1,\xi_2$ with $x$-dependent coefficients. Evidently, $\Delta_{x_1,x_0}\wedge\Delta_{x_2,x_0}=0$ for arbitrary $x_0,x_1,x_2\in \mathbb{R}$. Moreover, we define $\Gamma_0:=(X^{\rm RC}_2)_{x_0}$ and $\Gamma_1:=\Delta_{x_1,x_0}$ for $x_1,x_0$ such that $\Delta_{x_1,x_0}\neq 0$. Hence, there exist  constants $c_0,c_1,c_2,\bar{c}_0,\bar{c}_1,\bar{c}_2\in \mathbb{R}$ such that  
\begin{equation}\label{Gen}
\Gamma_0:=\left[\bar{c}_0+\sum_{\alpha=1}^2\bar{c}_\alpha\xi_\alpha\right]\partial_2+\xi_2\partial_{1}-c\xi_1\sum_{\alpha=1}^2\xi_\alpha\partial_\alpha,\qquad
\Gamma_1:=\left[c_0+\sum_{\alpha=1}^2c_\alpha\xi_\alpha\right]\partial_2\neq0.
\end{equation}

%Observe that if $c=c_1=0$, then $\Gamma_1,\Gamma_0$ admit an integrable distribution. 

\begin{lemma}\label{RiccNoHam} Strictly non-autonomous second-order Riccati chain equations, written as a first-order system, do not admit any Vessiot--Guldberg Lie algebra of locally Hamiltonian vector fields relative to a symplectic structure on ${\rm T}\mathbb{R}$. A strictly non-autonomous second-order affine Riccati chain equation is Hamiltonian if and only if any pair of vector fields $\Gamma_0,\Gamma_1$ satisfies one of the following conditions
\begin{equation}\label{AffRiccHam}
a) \,\,c_2=\bar c_2=0,\qquad b)\,\, c_1=\bar c_1=0, c_2\bar c_0-c_0\bar c_1c_
2=0.
\end{equation}
\end{lemma}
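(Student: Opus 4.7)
My strategy is to reduce Hamiltonicity to a pair of first-order linear PDEs for the density of a symplectic form on ${\rm T}\mathbb{R}\simeq\mathbb{R}^2$ and then use polynomial comparison in $\xi_1,\xi_2$ to extract algebraic constraints on $c,c_i,\bar c_i$.

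First, I would pass to the projective coordinates $(\xi_1,\xi_2)=\phi_{2,c}(u,v)$ of Theorem~\ref{MT1}, in which the first-order form of the second-order Riccati chain equation becomes (\ref{RicSOr2}) and the generators $\Gamma_0,\Gamma_1$ of $V^{X_2^{\rm RC}}$ take the shape~(\ref{Gen}). A symplectic form on ${\rm T}\mathbb{R}$ is then any $\omega=f(\xi_1,\xi_2)\,{\rm d}\xi_1\wedge {\rm d}\xi_2$ with $f$ nowhere zero, and since ${\rm T}\mathbb{R}\simeq\mathbb{R}^2$ is simply connected, local and global Hamiltonicity coincide. If $X_2^{\rm RC}$ is Lie--Hamilton for $\omega$, then every $(X_2^{\rm RC})_x=\Gamma_0+\Delta_{x,x_0}$ preserves $\omega$; in particular so does $\Gamma_0$, and hence every $\Delta_{x,x_0}$. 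Thus, for any $\Gamma_1=\Delta_{x_1,x_0}\neq 0$, the pair $(\Gamma_0,\Gamma_1)$ preserves $\omega$, which by direct computation of $\mathcal{L}_{\Gamma_i}\omega$ amounts to
\begin{align}
(\xi_2-c\xi_1^2)\partial_1 f+(\bar c_0+\bar c_1\xi_1+\bar c_2\xi_2-c\xi_1\xi_2)\partial_2 f+(\bar c_2-3c\xi_1)f&=0,\label{PDEG0}\\
(c_0+c_1\xi_1+c_2\xi_2)\partial_2 f+c_2\,f&=0.\label{PDEG1}
\end{align}

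Second, I would solve~(\ref{PDEG1}) by splitting on $c_2$. If $c_2=0$, then (using $\Gamma_1\neq 0$, whence $c_0+c_1\xi_1\not\equiv 0$) equation~(\ref{PDEG1}) forces $\partial_2 f=0$ and $f=g(\xi_1)$. If $c_2\neq 0$, integration in $\xi_2$ yields $f=g(\xi_1)/(c_0+c_1\xi_1+c_2\xi_2)$ for some nowhere-vanishing $g$. Substituting each form into~(\ref{PDEG0}), clearing denominators where needed, and viewing the outcome as a polynomial identity in $\xi_1,\xi_2$, the top $\xi_2$-coefficient immediately forces $g'=0$ in both branches, so $g$ is a non-zero constant. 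In the first branch the residual identity $(\bar c_2-3c\xi_1)g=0$ gives $\bar c_2=0$ and $3c=0$. In the second branch the coefficients of $\xi_1\xi_2$, $\xi_2$, $\xi_1$ and $1$ give, in order, $cc_2=0$, $c_1=0$, $c_2\bar c_1+3cc_0=0$ and $c_0\bar c_2-c_2\bar c_0=0$.

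Third, the conclusion is then immediate. When $c\neq 0$, the first branch is obstructed by $3c=0$ and the second by $cc_2=0$ with $c_2\neq 0$, both contradicting $c\neq 0$; hence no symplectic form is preserved and the first assertion follows. When $c=0$, the first branch reduces to $c_2=\bar c_2=0$, that is condition~(a), and the second to $c_1=\bar c_1=0$ together with $c_0\bar c_2-c_2\bar c_0=0$, i.e.\ condition~(b). Sufficiency for each pair is obtained by reversing the substitution: the densities $f=g$ (respectively $f=g/(c_0+c_2\xi_2)$) define symplectic forms preserved by the given $\Gamma_0,\Gamma_1$ and, by the functoriality $\mathcal{L}_{[X,Y]}=[\mathcal{L}_X,\mathcal{L}_Y]$, by the Lie algebra they generate; compatibility across all other $\Delta_{x,x_0}$ is handled by reapplying the analysis of~(\ref{PDEG1}) to each of them, which yields no further restriction beyond (a) or (b).

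The main anticipated obstacle is bookkeeping in the polynomial expansion of the second branch, where multiplying~(\ref{PDEG0}) by $(c_0+c_1\xi_1+c_2\xi_2)^2$ mixes many terms in $\xi_1,\xi_2$; the decisive simplification is that the $\xi_2^2$-coefficient eliminates $g'$ at the very first step, after which the remaining identity is linear in the parameters and the extraction of (a) and (b) is routine.
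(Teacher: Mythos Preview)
Your approach is correct and very close to the paper's own proof. Both start from the same pair of Lie-derivative conditions $\mathcal{L}_{\Gamma_0}\omega=\mathcal{L}_{\Gamma_1}\omega=0$ for $\omega=f\,{\rm d}\xi_1\wedge{\rm d}\xi_2$, arriving at exactly your equations (\ref{PDEG0})--(\ref{PDEG1}). The only genuine difference is in how the compatibility is extracted: the paper rewrites the system as $\partial_i\log|f|=F_i$ and imposes closedness of $\theta=F_1\,{\rm d}\xi_1+F_2\,{\rm d}\xi_2$, then analyses the single large polynomial $\partial_2F_1-\partial_1F_2$; you instead solve (\ref{PDEG1}) first (branching on $c_2$), substitute into (\ref{PDEG0}), and read off the constraints from the $\xi_2$-coefficients. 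Your route is slightly more elementary and shorter, since the early case split on $c_2$ kills $g'$ immediately and leaves a linear polynomial in the parameters, whereas the paper has to disentangle a degree-three polynomial in $\xi_1,\xi_2$ all at once. The paper's route has the minor advantage of producing a single invariant object ($d\theta$) whose vanishing is the condition, which is conceptually tidy. Either way the algebraic output is identical: for $c\neq0$ both branches are obstructed, and for $c=0$ one recovers (a) and (b). (Note that the paper's displayed condition (b) contains a typo; your $c_0\bar c_2-c_2\bar c_0=0$ agrees with how the condition is actually used later in the proofs of Theorems~\ref{MT2} and~\ref{MT3}.)
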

\begin{proof} Let us proceed by contradiction. Assume that the irreducible Lie algebra $V_2$ associated with our strictly non-autonomous second-order chain Riccati equation consists of Hamiltonian vector fields relative to a symplectic structure.  Then, $\Gamma_0$ and $\Gamma_1$  must be Hamiltonian vector fields relative to a symplectic structure $\Omega:=f{\rm d}\xi_1\wedge {\rm d}\xi_2$ on $\mathbb{R}^2$ (see Table \ref{table2} or cf. \cite{BBHLS15}). This amounts to the fact $\mathcal{L}_{\Gamma_0}\Omega=\mathcal{L}_{\Gamma_1}\Omega=0$ for a non-vanishing function $f$. In coordinates, these conditions read 
	$$
	 \left[\bar c_0+\sum_{\alpha=1}^2\bar c_\alpha\xi_\alpha-c\xi_1\xi_2\right]\partial_2f+\left[\xi_2-c\xi_1^2\right]\partial_1f+(\bar c_2-3c\xi_1)f=\left[c_0+\sum_{\alpha=1}^2c_\alpha\xi_\alpha\right]\partial_2f+c_2f=0.
	$$
Writing the above as $\partial_1f=F_1(\xi)f,\partial_2f=F_2(\xi)f$, a locally non-vanishing defined solution $f$ exists if and only if $\partial_1\log|f|=F_1(\xi),\partial_2\log|f|=F_2(\xi)$ has a solution. This amounts to the fact that the one-form $\theta=F_1{\rm d}\xi_1+F_2{\rm d}\xi_2$ is closed. In coordinates,
\begin{equation}\label{RelHam}
\theta=\frac{1}{\xi_2-c\xi_1^2}\left(3c\xi_1-\bar c_2+\frac{c_2(\bar c_0+\sum_{\alpha=1}^2\bar c_\alpha\xi_\alpha -c\xi_1\xi_2)}{c_0+\sum_{\alpha=1}^2c_\alpha\xi_\alpha}\right){\rm d}\xi_1-\frac{c_2}{c_0+\sum_{\alpha=1}^2c_\alpha\xi_\alpha}{\rm d}\xi_2.
\end{equation}
A long but straightforward computation shows that ${\rm d}\theta=0$ if and only if 
\begin{multline*}
\partial_2F_1-\partial_1F_2=c_0c_2\bar c_0-c_0^2\bar c_2+(3cc_1^2-c^2c_0c_2-cc_2^2\bar c_1+cc_1c_2\bar c_2)\xi_1^3+(2c_2^2\bar c_0-2c_0c_2\bar c_2)\xi_2\\+c_1c_2\xi_2^2+\xi_1^2(6cc_0c_1-cc_2^2\bar c_0+c_1c_2\bar c_1-c_1^2\bar c_2+cc_0c_2\bar c_2+4cc_1c_2\xi_2)\\+\xi_1(3cc_0^2+c_1c_2\bar c_0+c_0c_2\bar c_1-2c_0c_1\bar c_2)+(6cc_0c_2+2c_2^2\bar c_1-2c_1c_2\bar c_2)\xi_1\xi_2+2cc_2^2\xi_1\xi_2^2=0.
\end{multline*}
Obviously, this happens if ond only if $\Gamma_1=0$, which contradicts our assumption $\Gamma_1\neq 0$, i.e. the second-order Riccati chain equation is strictly non-autonomous, and $V_2$ cannot consist of Hamiltonian vector fields relative to any symplectic structure.
	
In the case of strictly non-autonomous affine second-order Riccati chain equations, the conditions to admit a Vessiot--Guldberg Lie algebra of Hamiltonian vector fields relative to a symplectic structure read as in (\ref{RelHam}) but with $c=0$. Hence,
\begin{multline*}
\partial_2F_1-\partial_1F_2=c_0(c_2\bar c_0-c_0\bar c_2)+2c_2(c_2\bar c_0-c_0\bar c_2)\xi_2+c_1c_2\xi_2^2+\xi_1^2c_1(c_2\bar c_1-c_1\bar c_2)\\+\xi_1(c_1(c_2\bar c_0-c_0\bar c_2)+c_0(c_2\bar c_1-c_1\bar c_2))+2c_2(c_2\bar c_1-c_1\bar c_2)\xi_1\xi_2=0.
\end{multline*}
A necessary condition for this equality to hold is $c_1c_2=0$. So, the above equation is equivalent to
\begin{multline*}
c_0(c_2\bar c_0-c_0\bar c_2)+2c_2(c_2\bar c_0-c_0\bar c_2)\xi_2+c_1c_2\xi_2^2-\xi_1^2c_1^2\bar c_2\\+\xi_1(c_1(-c_0\bar c_2)+c_0(c_2\bar c_1-c_1\bar c_2))+2c_2^2\bar c_1\xi_1\xi_2=0.
\end{multline*}
Another necessary conditions are $c_1\bar c_2=0, c_2\bar c_1=0$. Hence, the previous equation has the same solutions as
\begin{equation*}
c_0(c_2\bar c_0-c_0\bar c_2)+2c_2(c_2\bar c_0-c_0\bar c_2)\xi_2+c_1c_2\xi_2^2-\xi_1^2c_1^2\bar c_2+2c_2^2\bar c_1\xi_1\xi_2=0.
\end{equation*}
The above equation has the same set of solutions as the system
$$
c_1(c_2^2+\bar c_2^2)=0,\qquad (c_0^2+c_2^2)(c_2\bar c_0-c_0\bar c_2)=0,\qquad \bar c_1 c_2=0.
$$
Let us write down all solutions by analyzing the first equations. There are two options $c_1=0$ or $c_2=\bar c_2=0$. If $c_2=\bar c_2=0$, then the above system of equations is satisfied and we obtain the case a) detailed in the present lemma. If $c_1=0$, then $c_2\bar c_0-c_0\bar c_2=0$ and $\bar c_1c_2=0$. The last condition gives two subcases: $\bar c_1=0$ or $c_2=0$. The subcase $\bar c_1=0$ gives the case b) detailed in the present. Meanwhile, $c_2=0$ leads to $c_0\neq 0$ and $\bar c_2=0$. Nevertheless, this case is a particular subcase of case a). Hence, ${\rm d}\theta=0$ in the case $c=0$ if and only if some of the two sets of conditions (\ref{AffRiccHam}) are satisfied.
\end{proof}

\begin{note} Observe that the conditions (\ref{AffRiccHam}) do not depend on the chosen $\Gamma_0$ and $\Gamma_1\neq 0$.
\end{note}
\begin{theorem}\label{MT2} A strictly non-autonomous second-order Riccati chain equation can be mapped through an autonomous diffeomorphism into a  Euclidean Riccati equation if and only if it takes the form:%\footnote{There exist another case which is not strictly speaking a Riccati chain equation given by $\alpha_1(x)=-1,\alpha_2(x)=0,c=0$ and arbitrary $\alpha_0(x)$}:
\begin{equation}\label{Class}
\dfrac{{\rm d}^2u}{{\rm d}x^2}=-3cu\dfrac{{\rm d}u}{{\rm d}x}-c^2u^3+f(x)c_0+c_0(c_1+\bar c_2)+\left[f(x)c_1+{c_1^2/2-1}\right]u+[f(x)c_2+\bar c_2]\left(cu^2+\dfrac{{\rm d}u}{{\rm d}x}\right). 
\end{equation}
for any non-constant $x$-dependent function $f(x)$, coefficients $c_1,c_0\in\mathbb{R}$ such that $c_1^2-4c_0c_2c<0$ with  $c_2\in \mathbb{R}\backslash\{0\}$, and arbitrary $\bar c_2\in \mathbb{R}$. A strictly non-autonomous second-order affine Riccati chain equation is diffeomorphic to a Euclidean Riccati equation if and only if %it takes the form (\ref{Class}) for $c_2\neq 0$ and $c=c_0=c_1=0$ or if it is a general second-order affine Riccati equation with 
$$
a)\,\,\alpha_1(x)=\alpha_2(x)=0, \,\,\bar c_2^2+4\bar c_1<0\qquad b)\,\,\alpha_0(x)=\alpha_1(x)=\bar c_0=\bar c_1=0.
$$
\end{theorem}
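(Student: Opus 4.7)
The plan is to combine Theorem \ref{MT1} with the classification results of Section 7. By Theorem \ref{MT1} the first-order system associated with the second-order Riccati chain equation is a projective Riccati equation on $\mathbb{R}^2$, and its irreducible Lie algebra $V_2$ sits inside $V^{\rm RC}_2\simeq \mathfrak{sl}(3)={\rm P}_8$, so $V_2$ consists of projective vector fields (Theorem \ref{ProjRicc}). Being mappable through an autonomous diffeomorphism to a Euclidean Riccati equation (\ref{EuRic}) means, by the Lie--Scheffers Theorem, that $V_2$ is diffeomorphic to a Lie subalgebra of the Vessiot--Guldberg Lie algebra ${\rm P}_7\simeq \mathfrak{so}(3,1)$ for Euclidean Riccati equations. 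Hence $V_2$ is, up to diffeomorphism, simultaneously projective and Euclidean, which by Proposition \ref{ProjEuc} forces $V_2$ into a Lie subalgebra of ${\rm P}_2$, ${\rm P}_3$ or ${\rm P}_4$.

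For the non-affine case ($c\neq 0$) I would then invoke Lemma \ref{RiccNoHam}: strictly non-autonomous second-order Riccati chain equations admit no Hamiltonian Vessiot--Guldberg Lie algebra. Since ${\rm P}_2\simeq \mathfrak{sl}(2)$ and ${\rm P}_3\simeq \mathfrak{so}(3)$ are Hamiltonian (Table \ref{table5}), the only surviving option is $V_2\subset {\rm P}_4$. Lemma \ref{8Lie} characterizes ${\rm P}_4$ by the structure (\ref{Pattern}), namely solvable of dimension $4$ with a two-dimensional abelian derived ideal of rank two on which ${\rm P}_4$ acts irreducibly. I would then compute from (\ref{Gen})
\begin{equation*}
[\Gamma_0,\Gamma_1]=-g\,\partial_1+\bigl[(c_2\bar c_0-c_0\bar c_2)+(c_2\bar c_1+cc_0-c_1\bar c_2)\xi_1+c_1\xi_2\bigr]\partial_2,\quad g:=c_0+c_1\xi_1+c_2\xi_2,
\end{equation*}
and impose that successive brackets $[\Gamma_0,[\Gamma_0,\Gamma_1]]$, $[\Gamma_1,[\Gamma_0,\Gamma_1]]$, \ldots{} close into at most four linearly independent vector fields of the pattern (\ref{Pattern}). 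Closure forces $\Delta$ to be one-dimensional, so that $\alpha_j(x)-\alpha_j(x_0)$ are mutually proportional with common factor $f(x)-f(x_0)$; writing $\alpha_j(x)=-f(x)c_j-k_j$ and substituting in (\ref{sRicc}) for $s=2$ produces the form (\ref{Class}), the constants $k_j$ being fixed by demanding that the action of $V_2/[V_2,V_2]$ on $[V_2,V_2]$ takes the matrix form (\ref{Pattern}). Finally, irreducibility of that action over $\mathbb{R}$ (no real eigenvalues in (\ref{Pattern})) translates, after diagonalizing the characteristic polynomial of $\mathrm{ad}_{\Gamma_0}|_{[V_2,V_2]}$, into the discriminant inequality $c_1^2-4cc_0c_2<0$.

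For the affine case ($c=0$), Lemma \ref{RiccNoHam} now allows Hamiltonian Vessiot--Guldberg Lie algebras precisely when (\ref{AffRiccHam}) holds, so $V_2$ may sit in any of ${\rm P}_2$, ${\rm P}_3$, ${\rm P}_4$. Case (a) of (\ref{AffRiccHam}), namely $c_2=\bar c_2=0$, corresponds to $\alpha_2(x)\equiv 0$; rerunning the closure argument of the previous paragraph for $c=0$ reproduces the reducibility/irreducibility dichotomy, whose Euclidean side is $\bar c_2^2+4\bar c_1<0$, yielding case~a) of the theorem. Case (b), $c_1=\bar c_1=0$ and $c_2\bar c_0-c_0\bar c_2=0$, forces $\alpha_1(x)\equiv 0$; the remaining algebraic condition combined with strict non-autonomy then forces $\alpha_0(x)\equiv 0$ together with $\bar c_0=\bar c_1=0$, which is exactly case~b). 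Sufficiency in every case is settled by exhibiting an explicit autonomous diffeomorphism of $\mathbb{R}^2$ that straightens the pair $(\Gamma_0,\Gamma_1)$ onto a standard basis of ${\rm P}_4$ (or of ${\rm P}_2$, ${\rm P}_3$ in the affine Hamiltonian subcases), whereupon the system takes the Euclidean form (\ref{EuRic}).

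The main obstacle I expect is Step 3, specifically the bookkeeping required to verify that the four-dimensional closure and the pattern (\ref{Pattern}) together with the discriminant inequality are \emph{exactly} equivalent to the explicit polynomial form (\ref{Class}); the algebraic reduction of the higher Lie brackets is lengthy and the interplay between the structural constants of ${\rm P}_4$ and the parameters $(c_0,c_1,c_2,\bar c_0,\bar c_1,\bar c_2)$ is delicate, as the sign of $c_1^2-4cc_0c_2$ is precisely what separates the Euclidean case treated here from the hyperbolic one of Theorem \ref{MT3}.
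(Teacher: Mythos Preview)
Your overall strategy matches the paper's: reduce to projective coordinates via Theorem~\ref{MT1}, use Proposition~\ref{ProjEuc} to restrict $V_2$ to a Lie subalgebra of ${\rm P}_2$, ${\rm P}_3$ or ${\rm P}_4$, eliminate the simple factors via Lemma~\ref{RiccNoHam}, and analyse ${\rm P}_4$ through Lemma~\ref{8Lie}. However, three points need correction.

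\emph{First}, the one-dimensionality of $\Delta$ is not a consequence of ``closure into at most four vector fields''; the paper obtains it up front from Lemma~\ref{con}. All the $\Delta_{x,x_0}$ are multiples of $\partial_2$, hence satisfy $\Delta_{x_1,x_0}\wedge\Delta_{x_2,x_0}=0$; since they are Euclidean by hypothesis, Lemma~\ref{con} forces them to span a one-dimensional space. This is what gives $\Delta_{x,x_0}=f(x)\Gamma_1$ and the single-function form (\ref{Class}) at the outset, \emph{before} any bracket computation.

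\emph{Second}, in the ${\rm P}_4$ branch the paper does not merely require that the successive brackets stay in a four-dimensional space; it uses the sharper fact that $\Upsilon_1=[\Gamma_1,\Gamma_0]$, $\Upsilon_2=[\Gamma_1,\Upsilon_1]$, $\Upsilon_3=[\Gamma_0,\Upsilon_1]$ all lie in the \emph{abelian} ideal $[{\rm P}_4,{\rm P}_4]$ and must therefore commute pairwise. Imposing $[\Upsilon_i,\Upsilon_j]=0$ is what produces the explicit coefficient constraints (\ref{CoeCon}), which then drive the case split (I.a versus I.b, affine versus non-affine). Your ``pattern (\ref{Pattern})'' enters only after this, in computing the irreducibility discriminant.

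\emph{Third}, your affine analysis is mis-routed. You try to read off the theorem's cases a) and b) from the Hamiltonian conditions (\ref{AffRiccHam}), i.e.\ from the ${\rm P}_2/{\rm P}_3$ branch. In the paper both a) and b) arise from the ${\rm P}_4$ branch: case a) is exactly Case I.b ($c=c_1=c_2=0$, primitivity forcing $\bar c_2^2+4\bar c_1<0$), and case b) is Case I.a.2 ($c=0$, $c_2\neq 0$, where Lemma~\ref{con} applied to $Y_3-c_1Y_1$ and $Y_2$ forces $c_0=c_1=0$, whence $\bar c_0=\bar c_1=0$). The paper's Case II (the Hamiltonian ${\rm P}_2/{\rm P}_3$ analysis) is purely an \emph{elimination} step: it shows that under (\ref{AffRiccHam}) the algebra $V_2$ is never diffeomorphic to ${\rm P}_2$ or ${\rm P}_3$ (one lands in ${\rm I}_5$ or in cases already covered by ${\rm P}_4$), so no additional Euclidean equations appear there.
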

\begin{proof}

Let $V_2$ be the irreducible Lie algebra of $X^{\rm RC}_2$. If $X^{\rm RC}_2$ can be mapped into a Euclidean Riccati equation on $\mathbb{R}^2$, then $V_2$ must consist of Euclidean vector fields relative to some Euclidean metric on $\mathbb{R}^2$. The Lie algebra $V_2$ is generated by $\Gamma_0$, the vector fields $\{\Delta_{x,x_0}\}_{x\in\mathbb{R}}$, and their successive Lie brackets. Let us determine under which conditions $\Gamma_0$ and $\{\Delta_{x,x_0}\}_{x\in\mathbb{R}}$ generate a Lie algebra $V_2$ of Euclidean vector fields relative to a Euclidean metric on $\mathbb{R}^2$.

 In view of Lemma \ref{con}, every two linearly independent (over $\mathbb{R}$) Euclidean vector fields $X_1,X_2$  satisfy $X_1\wedge X_2\neq 0$. As the vector fields $\{\Delta_{x,x_0}\}_{x\in\mathbb{R}}$ are Euclidean by assumption, $\Delta_{x_1,x_0}\wedge\Delta_{x_2,x_0}=0$ for arbitrary $x_1,x_2\in \mathbb{R}$, and there is one non-zero $\Delta_{x,x_0}$ because the second-order Riccati chain equation is strictly non-autonomous, we see that the $\{\Delta_{x,x_0}\}_{x\in\mathbb{R}}$ must all generate a one-dimensional linear space. Hence, there exists an $x$-dependent function $f(x)$ such that $\Delta_{x,x_0}=f(x)\Gamma_1$ for $\Gamma_1\neq 0$.
Since $X^{\rm RC}_2$ is not autonomous, we get that $f(x)$ is non-constant and the $\{(X^{\rm RC}_2)_x\}_{x\in\mathbb{R}}$ generate, at least, a two-dimensional Lie algebra spanned by $\Gamma_0,\Gamma_1$ and their successive Lie brackets. Since $\Gamma_0\wedge \Gamma_1\neq 0$, the Lie algebra $V_2$ gives rise to a distribution of rank two.

Theorem \ref{ProjRicc} ensures that $\Gamma_0$ and $\Gamma_1$ are contained in the Vessiot--Guldberg Lie algebra P$_8\simeq \mathfrak{sl}(3)$. If we additionally  require $\Gamma_0$ and $\Gamma_1$ to be Euclidean, then Proposition \ref{ProjEuc}  states that $V_2$ must be diffeomorphic to a Lie subalgebra of P$_2$, P$_3$ or P$_4$. %Hence, the irreducible Lie algebra $V_2$ of $X$ must be diffeomorphic to a Lie subalgebra of P$_4$.
Let us study divide our analysis into  the case when $V_2$ is diffeomorphic to  Lie subalgebras of P$_4$ and when it is not.

$\diamond$ {\bf I) The Lie algebra $V_2$ is diffeomorphic to Lie subalgebras of {\rm P}$_4$}

If $V_2$ is diffeomorphic to a Lie subalgebra of P$_4\simeq \mathbb{R}^2\ltimes\mathbb{R}^2$, then any Lie bracket involving  elements of $V_2$, e.g. $\Gamma_0$ and $\Gamma_1$, must belong to an Abelian ideal of $V_2$. For instance, the vector fields $\Upsilon_1:=[\Gamma_1,\Gamma_0], \Upsilon_2:=[\Gamma_1,\Upsilon_1],\Upsilon_3:=[\Gamma_0,\Upsilon_1]$, with coordinate expressions
$$
\begin{aligned}
\Upsilon_1&=(c_0+c_{1}\xi_{1}+c_2\xi_2)\partial_ {1}+[c_0\bar{c}_2-c_2\bar{c}_0+(-cc_0+c_1\bar c_2-c_2\bar{c}_{1})\xi_{1}-c_{1}\xi_2]{\partial_2}\neq0,\\
\Upsilon_2&=c_2(c_0+c_{1}\xi_{1}+c_2\xi_{2}){\partial_{1}}\!-\![c_0(2c_{1}+c_2\bar c_2)\!-\!c_2^2\bar c_0+(2c_{1}^2\!-\!cc_0c_2\!-\!c_2^2\bar c_{1}\!+\!c_{1}c_2\bar c_2)\xi_{1}+c_{1}c_2\xi_2]\partial_2,\\
\Upsilon_3&=[2c_2\bar{c}_0-\bar{c}_{2}c_0+cc_1\xi_1^2+(3cc_0+2c_2\bar c_1-c_{1}\bar c_{2}+cc_2\xi_2)\xi_{1}+(2c_{1}+c_2\bar c_2)\xi_2]\partial_1+\\
&[-c_{1}\bar c_0+c_2\bar c_0\bar c_2-c_0(\bar c_{1}+\bar c_2^2)+(-2\bar c_{1} c_{2}+c_1\bar c_{2}+cc_2\xi_2)\xi_2\\&+(-cc_2\bar c_0-2c_1\bar c_1+2cc_{0}\bar c_2+c_2\bar c_1\bar c_2-c_1\bar c_2^2+cc_1\xi_2)\xi_{1}]\partial_2
\end{aligned}
$$
must satisfy the relations $[\Upsilon_1,\Upsilon_2]=[\Upsilon_1,\Upsilon_3]=[\Upsilon_2,\Upsilon_3]=0$ giving rise to restrictions on the form of $\Gamma_0$ and $\Gamma_1$.  For example, we have to impose 
\begin{multline*}
0=[\Upsilon_1,\Upsilon_2]=\left[2 c_{2}(c_0 c_{1}+c_{0} c_2\bar c_{2}-c_{2}^2 \bar c_{0})+2c_2(c_{1}^2 -cc_0c_2+c_{1} c_{2} \bar c_{2} -c_{2}^2\bar c_1)\xi_{1}\right]\partial_{1}\\
[-2(2c_{0}c_1^2-cc_0^2c_2-c_1c_2^2\bar c_0- c_0c_2^2\bar c_1+2c_{1}c_0c_2\bar c_2) \\-4c_1(c^2_{1} -cc_0c_2-c_2^2\bar c_{1}+c_1\bar c_2c_2) \xi_{1}-2 c_2(c^2_{1}-cc_0c_2-c_2^2\bar c_{1}+c_{1}c_2\bar c_2) \xi_{2}]{\partial_{2}}.
\end{multline*}
The above vanishes, along with the Lie brackets $[\Upsilon_ 1,\Upsilon_3],[\Upsilon_2,\Upsilon_3]$, if and only if 
%one of the following conditions holds: 
\begin{equation}\label{CoeCon}
%I)\quad \,\,c_0\neq0\,\,,c_{1}=c_2=c=0,\qquad 
a)\,\,c_2\neq 0, \,\,\bar c_0=\frac{c_0}{c^2_2}(c_{1}+c_2\bar c_2),\,\,\bar c_{1}=\frac{c_1^2-cc_0c_2+c_{1}c_2\bar c_2}{c^2_2},\qquad b)\,\,c=c_1=c_2=0.
\end{equation}
%Additionally, we study separately the case IIa), where we assume $c\neq 0$ and the case IIb) with $c=0$.
%%$$
%% III)\,\,c_2\neq 0, c_0=0,\bar c_0=0,\bar c_{1}=\frac{c_{1}}{c^2_2}(c_{1}+c_2\bar c_2),\qquad IV)\,\, c_2\neq 0, c_{1}=0,\bar c_0=\frac{c_0\bar c_2}{c_2},\bar c_{1}=\frac{cc_0}{c_2}.
%%$$
%Each of the previous conditions ensures that the vector fields $\Gamma_0,\Gamma_1,\Upsilon_1,\Upsilon_2,\Upsilon_3$ span a different Lie algebra of vector fields. 
%For each set of values of the constants in $\Gamma_0,\Gamma_1$, the $x$-dependent vector field $X^{\rm RC}_2$ must generate different Lie algebras that must be  isomorphic to  a Lie subalgebra of $\mathbb{R}^2\ltimes\mathbb{R}^2$. This is a necessary condition to ensure that $X^{\rm RC}_2$ consists  of Euclidean projective vector fields. 

{\bf Case I.a:} Since $c_2\neq 0$, the vector field $\Gamma_1$ can be rescaled and $c_2$ can be assumed to be equal to one without varying $V_2$. A set of generators for the Lie algebra $V_2$  reads:
%\begin{itemize}
%\item Case I: $Y_1:=\xi_2\partial_{1}+(\bar c_0+\bar c_1\xi_1+\bar c_2\xi_2)\partial_2,Y_2:=\partial_{2}, Y_3:=\partial_{1}+\bar c_2\partial_{2}$,
%\item Case IIa): 
\begin{equation}\label{Bas}
\begin{gathered}
Y_1:=(c_0+c_{1}\xi_{1}+\xi_{2})\partial_2,\quad Y_3:=(c_0+c_{1}\xi_{1}+\xi_2)[\partial_{1}-c_{1}\partial_2],\\
Y_2:=(-c\xi_1^2+\xi_2)\partial_{1}+\{c_0(c_{1}+\bar c_2)+\bar c_2\xi_2+\xi_1[-cc_0+c_1(c_1+\bar c_{2})-c\xi_{2}]\}\partial_2, \\ Y_4:=\{c_0(2c_1+\bar c_2)+cc_1\xi_1^2+(2c_1+\bar c_2)\xi_2+\xi_1[cc_0+c_1(2c_1+\bar c_2)+c\xi_2]\}\partial_1\\+\{cc_0^2-c_0c_1(2c_1+\bar c_2)-[-2cc_0+c_1(2c_1+\bar c_2)]\xi_2+c\xi_2^2+\xi_1[cc_0c_1-c_1^2(2c_1+\bar c_2)+cc_1\xi_2]\}\partial_2.
\end{gathered}
\end{equation}
Indeed, the commutation relations between $Y_1,Y_2,Y_3,Y_4$ read:
\begin{equation}
\begin{gathered}\label{ComY}
\left[Y_1,Y_2\right]=Y_3,\qquad  [Y_1,Y_3]=Y_3,\qquad [Y_2,Y_3]=Y_4,\qquad [Y_1,Y_4]=Y_4,\qquad [Y_3,Y_4]=0,\\
[Y_2,Y_4]=(3c_1+2\bar c_2)Y_4-(cc_0+(c_1+\bar c_2)(2c_1+\bar c_2))Y_3.
\end{gathered}
\end{equation}
Let us assume the affine, $c=0$, and non-affine, $c\neq 0$, subcases.

{\it I.a.1) Non-affine subcase:} The vector fields $Y_1,\ldots, Y_4$ become a basis for $V_2$. Indeed, if $\sum_{\alpha=1}^4\mu_\alpha Y_\alpha=0$, then
$$
\begin{gathered}
\left[\partial_1,\left[\partial_2,\sum_{\alpha=1}^4\mu_\alpha Y_\alpha\right]\right]=\mu_4c(\partial_1+c_1\partial_2)-c\mu_2\partial_1=0, \quad\left[\partial_2,\left[\partial_2,\sum_{\alpha=1}^3\mu_\alpha Y_\alpha\right]\right]=2c\mu_4\partial_2=0.
\end{gathered}
$$
Hence, $\mu_2=\mu_4=0$ and, from here, it follows that $\mu_1Y_1+\mu_3Y_3=0$ implies that $\mu_1=\mu_3=0$.
Therefore, the Lie algebra $V_2$ has an Abelian two-dimensional ideal given by $\langle Y_3,Y_4\rangle$ and 
$$
Y_3\wedge Y_4=c(c_0+c_1\xi_1+\xi_2)^3\partial_1\wedge\partial_2
\neq 0.$$ 
The elements $Y_1,Y_2-Y_3$ act on the ideal $\langle Y_3,Y_4\rangle$ according to the matrices in the basis $\{Y_3,Y_4\}$ given by
\begin{equation}\label{action}
[{\rm ad}_{Y_1}]=\left(\begin{array}{cc}
1&0\\
0&1
\end{array}\right),\qquad
[{\rm ad}_{Y_2-Y_3}]=\left(\begin{array}{cc}
0& -(cc_0+3c_1\bar c_2+\bar c_2^2+2c_1^2)\\
1&3 c_1+2\bar c_2
\end{array}\right).
\end{equation}
The elements of $V_2$ act irreducibly on $\langle Y_3,Y_4\rangle$ if and only if 
$$
0>(3c_1+2\bar c_2)^2-4(cc_0+3c_1\bar c_2+\bar c_2^2+2c_1^2)=c_1^2-4c_0c.
$$
In view of Lemma \ref{8Lie}, $V_2$ is  diffeomorphic to P$_4$ under the assumed conditions. It is only left to recall that the above condition applies to a rescaled $\Gamma_1$ with $c_2=1$. Hence, the condition for a general $\Gamma_1$ with no rescaled coefficients reads $c_1^2-4c_0c_2c<0$. Indeed, recall that the fact $X^{\rm RC}_2$ is associated with a conformal Riccati equation is independent of the chosen $\Gamma_1$ and that this vector field can be determined up to a proportional constant without varying $V_2$.

{\it I.a.2) Affine subcase}: We can choose among $\Gamma_0,\Gamma_1$ and the vector fields of (\ref{Bas}) a set of generators of $V_2$  of the form
\begin{equation}\label{Gen4}
Y_1:=(c_0+c_{1}\xi_{1}+\xi_{2})\partial_2,\quad Y_3:=(c_0+c_{1}\xi_{1}+\xi_2)\partial_{1},\quad
Y_2:=\xi_2(\partial_{1}-c_1\partial_2). 
\end{equation}
Since $(Y_3-c_1Y_1)\wedge Y_2=0$, Lemma \ref{con} ensures that $V_2$ is not a Lie algebra of Euclidean vector fields unless $Y_3-c_1Y_1$ and $Y_2$ are linearly dependent. This only happens for $c_0=c_1=0$, which in view of (\ref{CoeCon}) implies that $\bar c_0=\bar c_1=0$. 
%Their commutator relations read $[Y_1,Y_2]=Y_3-c_1Y_1,[Y_2,Y_3]=0,[Y_1,Y_3]=Y_3-c_1Y_1$. Hence, $\dim V_2\leq 3$. The Lie algebra $V_2$ is two-dimensional for $c_0=c_1=0$ and three-dimensional otherwise. If $V_2$ is three-dimensional, then $V_2\simeq \langle Y_1,Y_2-Y_3\rangle \ltimes \langle Y_3-c_1Y_1\rangle\simeq \mathbb{R}^2\ltimes \mathbb{R}$ and $Y_1\wedge(Y_2-Y_3)\neq 0$. If $c_1\neq 0$, then $V_2\simeq {\rm I}_8^{\alpha=1}\subset {\rm P}_4$. 
In this case, $V_2$ is non-Abelian, two-dimensional and it spans a distribution of rank two. Hence, it becomes diffeomorphic to a Lie subalgebra of P$_4$.

%Let us analyze the case $c=c_1=c_2=0$. In this case, the Lie algebra is at most three-dimensional with an ideal making $V_2$ to be imprimitive. In view of Table \ref{table2}, there exists no three-dimensional Lie algebra of projective Euclidean vector fields within P$_4$. Hence, we must impose $V_2$ to be two-dimensional, i.e. we have $c_0=c_1=0$. 

{\bf Case I.b:} If $c=c_1=c_2=0$, then the condition $\Gamma_1\neq 0$ allows us to assume, by rescaling $\Gamma_1$, that $c_0=1$ without changing $V_2$. Hence, $V_2$ possesses a basis
\begin{equation}\label{Bas2}
\begin{gathered}
Y_1:=\partial_2,\quad Y_2:=\xi_2\partial_{1}+[\bar c_0+\bar c_1\xi_1+\bar c_2\xi_2]\partial_2,\quad
Y_3:=\partial_1,
\end{gathered}
\end{equation}
which admits an invariant distribution generated by $2\partial_1+(\bar c_2\pm \sqrt{\bar c_2^2+4\bar c_1})\partial_2$ for $\bar c_2^2+4\bar c_1\geq 0$ and it is primitive otherwise. Proposition \ref{ConLie} implies that the only three-dimensional  imprimitive Euclidean Lie algebra of vector fields is I$^{\alpha=1}_8$. Nevertheless, $\dim [{\rm I}^{\alpha=1}_8, {\rm I}^{\alpha=1}_8]=2$ and every element of  $[{\rm I}^{\alpha=1}_8, {\rm I}^{\alpha=1}_8]$ is an ideal of I$^{\alpha=1}_8$. As $[V_2,V_2]= \langle\partial_1+\bar c_2\partial_2,\bar c_1\partial_2\rangle$, the Lie algebra $V_2$ can only be diffeomorphic to I$_8^{\alpha=1}$ for $\bar c_1\neq 0$. But in this case $Y_1\in [V_2,V_2]$ and  $Y_1$ does not span an ideal of I$^{\alpha=1}_8$. Hence, $V_2$ is not diffeomorphic to I$^{\alpha=1}_8$. In view of Table \ref{table3}, the only three-dimensional primitive Lie algebra of vector fields on $\mathbb{R}^2$ is P$_1$. Then, $V_2$ is diffeomorphic to P$_1$ for $\bar c_2^2+4\bar c_1<0$.

$\Diamond$ {\bf II) The Lie algebra $V_2$ is not diffeomorphic to Lie subalgebras of {\rm P}$_4$}

Let us assume that  $\Gamma_0$ and $\Gamma_1$ generate a Lie algebra $V_2$ of Euclidean projective vector fields that is not diffeomorphic to a Lie subalgebra of P$_4$. In view of Proposition \ref{ProjEuc} and Table \ref{table5},  the Lie algebra $V_2$ must be diffeomorphic to a Lie subalgebra of P$_2$ or P$_3$, which are Lie algebras of Hamiltonian vector fields (cf. \cite{BBHLS15}). Lemma \ref{RiccNoHam} ensures that $c=0$ and $\Gamma_0$ and $\Gamma_1$ satisfy the conditions (\ref{AffRiccHam}). We have two cases:

%$\bullet$ If $c_0=c_1=0$, then the assumption $\Gamma_1\neq0$ allows us to assume that $c_2=1$ and $V_2$ is generated by
%$Y_0:=\Gamma_0=\xi_2\partial_1+(\bar c_0+\bar c_1\xi_1+\bar c_2\xi_2)\partial_2$, $Y_1:=\Gamma_1=\xi_2\partial_2$, 
%$Y_2:=\xi_2\partial_1$, $Y_3:=(\bar c_0+\bar c_1\xi_1)\partial_1$. If $\bar c_0^2+\bar c_1^2\neq 0$, then $\dim V_2>3$. Hence, $V_2$ can only be a Lie subalgebra of P$_2$ or P$_3$ for $\bar c_0=\bar c_1=0$. In this case, $V_2$ becomes a two-dimensional Lie algebra of vector fields diffeomorphic to a Lie subalgebra of P$_2$. As $V_2$ becomes also diffeomorphic to a Lie subalgebra of P$_4$, this case already appeared in previous subsection.

{\bf Case II.a:} A first option is $c_2=\bar c_2=0$ . Hence, $V_2$ is generated by  
\begin{equation}\label{Gen3}
\begin{gathered}
Y_0:=\Gamma_0=\xi_2\partial_1+(\bar c_0+\bar c_1\xi_1)\partial_2, \qquad Y_1:=\Gamma_1=(c_0+c_1\xi_1)\partial_2,\\
Y_2:=(c_0+c_1\xi_1)\partial_1-c_1\xi_2\partial_2, \qquad Y_3:=(c_0\bar c_1-c_1\bar c_0)\partial_2.
\end{gathered}
\end{equation}
The above vector fields span a Lie algebra of dimension bigger than three for $c_1(c_0\bar c_1-c_1\bar c_0)\neq 0$. Therefore, $V_2$ can be a Lie subalgebra of P$_2$ or P$_3$ provided $c_1(c_0\bar c_1-c_1\bar c_0)=0$. If $c_1=0$, then $[Y_1,Y_2]=0$ and $V_2$  cannot be isomorphic to a Lie subalgebra of P$_2$ or P$_3$. Meanwhile, assuming that $c_0\bar c_1-c_1\bar c_0=0$, with $c_1\neq 0$, leads to the following commutation relations
$$
[Y_0,Y_1]=-Y_2,\qquad [Y_1,Y_2]=-2c_1Y_1,\qquad [Y_0,Y_2]=2c_1Y_0-4\bar c_1Y_1.
$$
By making use of the Killing form related to this Lie algebra, we find that it is non-degenerate of signature $(2,1)$. Hence, $V_2$ is isomorphic to $\mathfrak{sl}(2)$. The Casimir element related to it is, up to a proportional constant, $Y_1\otimes Y_0+Y_0\otimes Y_1-\frac{2\bar c_1}{c_1}Y_1\otimes Y_1+\frac{1}{2c_1}Y_2\otimes Y_2$. Its determinant is zero. In view of Theorem 4.4 in \cite{BBHLS15}, the Lie algebra $V_2$ is diffeomorphic to I$_5$ and it is not diffeomorphic to P$_2$.

{\bf Case II.b:} The following case is given by $c_1=\bar c_1=0$ where $(c_2,c_0)$ and $(\bar c_2,\bar c_0)$ are linearly dependent. The Lie algebra is spanned by vector fields of the form 
\begin{equation}\label{Gen}
Y_0:=\xi_2\partial_1,\qquad Y_1:=(c_0+c_2\xi_2)\partial_2,\qquad Y_2:=(c_0+c_2\xi_2)\partial_1.
\end{equation}
If these vector fields span a three-dimensional Lie algebra, i.e. $c_0\neq 0$, then they admit a two-dimensional Abelian ideal $\langle Y_1,Y_2\rangle$ and $V_2$ is neither not isomorphic to P$_2$ nor to P$_3$. If $\dim V_2=2$, namely $c_0=0$, then $V_2$ may be diffeomorphic to a two-dimensional Lie algebra of type I$^{r=1}_{14A}$ that is diffeomorphic to a Lie subalgebra of P$_4$ and already appeared in the previous subsection. %If $c_0c_2\neq 0$, then $\dim V_2=3$ and $V_2$ admits a one-dimensional ideal. Hence, $V_2$ is not isomorphic neither to P$_2$ nor P$_3$. If $c_0c_2=0$, then $V_2$ is diffeomorphic to a two-dimensional Lie subalgebra of P$_4$. This case already appeared in previous subsection.

\end{proof}

Finally, we characterize strictly non-autonomous second-order Riccati chain equations that can be related to hyperbolic Riccati equations. 

\begin{theorem}\label{MT3} A strictly non-autonomous second-order Riccati chain equation is diffeomorphic, as a first-order system, to a hyperbolic Riccati equation when it takes the form
\begin{equation}\label{Class2}
\dfrac{{\rm d}^2u}{{\rm d}x^2}=-3cu\dfrac{{\rm d}u}{{\rm d}x}-c^2u^3+f(x)c_0+c_0(c_1+\bar c_2)+\left[f(x)c_1+{c_1^2/2-1}\right]u+[f(x)c_2+\bar c_2]\left(cu^2+\dfrac{{\rm d}u}{{\rm d}x}\right). 
\end{equation}
where $f(x)$ is a non-constant function, $c_1^2-4c_0c_2c>0$ with $c,c_2\neq 0$, and $\bar c_2\in \mathbb{R}$. A strictly non-autonomous affine second-order Riccati chain equation is diffeomorphic to a hyperbolic Riccati equation if and only if it takes the form (\ref{Class2})  for $c=0$ and $c_2\neq 0$ or it satisfies the following conditions
$$
\begin{gathered}
a)\,\,\alpha_1(x)=\alpha_2(x)=0, \,\,\bar c_2^2+4\bar c_1>0,\qquad b)\,\,\alpha_0(x)=\alpha_1(x)=\bar c_0=\bar c_1=0,\\
c)\,\, c_1=\bar c_1=0,c_1\neq 0, c_0\bar c_2-c_2\bar c_0=0.
\end{gathered}
$$
\end{theorem}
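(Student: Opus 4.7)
The proof strategy parallels that of Theorem~\ref{MT2}, adapted from the Euclidean to the hyperbolic setting. Let $V_2$ denote the irreducible Lie algebra of the $x$-dependent vector field $X_2^{\rm RC}$ associated with the given strictly non-autonomous second-order Riccati chain equation. An autonomous diffeomorphism mapping $X_2^{\rm RC}$ into a hyperbolic Riccati equation exists if and only if $V_2$ consists of conformal vector fields relative to some flat metric of signature $(1,1)$ on $\mathbb{R}^2$. Since Theorem~\ref{ProjRicc} places $V_2$ inside a Lie algebra of projective vector fields, Proposition~\ref{HypPro} then forces $V_2$ to be diffeomorphic to a Lie subalgebra of I$_4$ or I$_9$. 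I exploit the structural setup already developed in the proof of Theorem~\ref{MT2}: the generators $\Gamma_0, \Gamma_1$ and their iterated commutators $\Upsilon_1=[\Gamma_1,\Gamma_0]$, $\Upsilon_2=[\Gamma_1,\Upsilon_1]$, $\Upsilon_3=[\Gamma_0,\Upsilon_1]$ supply the algebraic constraints distinguishing the possible classes of $V_2$.

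First I address the case where $V_2$ embeds in P$_4$, i.e.\ where the relations~(\ref{CoeCon}) hold. In subcase~I.a with $c,c_2\neq 0$, the matrix $[\mathrm{ad}_{Y_2-Y_3}]$ representing the adjoint action of $V_2$ on its Abelian ideal $\langle Y_3,Y_4\rangle$ has characteristic polynomial with discriminant proportional to $c_1^2-4c_0c_2c$ (once the rescaling of $\Gamma_1$ is undone). Whereas Theorem~\ref{MT2} isolated the elliptic case of negative discriminant and obtained the Euclidean algebra P$_4$, here the positive discriminant produces two distinct real eigenvalues, hence two linearly independent invariant directions for the ideal; applying Lemma~\ref{one} point~3 one identifies $V_2$ with I$_9$, which is a Lie algebra of hyperbolic vector fields. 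The affine subcase $c=0$ with $c_2\neq 0$ is handled unconditionally: the generators~(\ref{Gen4}) already span a solvable Lie algebra diffeomorphic to a Lie subalgebra of I$_9$. Finally, the subcase $c=c_1=c_2=0$ yields the basis~(\ref{Bas2}), whose imprimitive Lie algebra admits two real invariant directions precisely when $\bar c_2^2+4\bar c_1>0$, giving condition~a).

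It remains to treat the case where $V_2$ is not diffeomorphic to any Lie subalgebra of P$_4$. Lemmas~\ref{8Lie} and~\ref{RiccNoHam} together force $c=0$ and restrict to the conditions~(\ref{AffRiccHam}); combined with Proposition~\ref{HypPro}, the only possibility compatible with hyperbolicity is $V_2$ diffeomorphic to a Lie subalgebra of I$_4\simeq\mathfrak{sl}(2,\mathbb{R})$. Reconsidering the two sub-branches from the proof of Theorem~\ref{MT2}: sub-case~II.a ($c_2=\bar c_2=0$ with $c_1(c_0\bar c_1-c_1\bar c_0)=0$) produces an $\mathfrak{sl}(2,\mathbb{R})$-realization whose Casimir element and Killing form determine condition~b) after verification that the induced invariant bilinear form is non-degenerate of signature $(1,1)$ rather than definite; sub-case~II.b ($c_1=\bar c_1=0$ with $c_0\bar c_2-c_2\bar c_0=0$) is treated analogously and delivers condition~c). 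The principal obstacle is exactly this last step. In Theorem~\ref{MT2} these sub-cases were discarded because the invariant symmetric tensor was degenerate or of the wrong signature to yield a Euclidean metric; here one must instead sift them carefully, using the Casimir tensor attached to the $\mathfrak{sl}(2,\mathbb{R})$-structure and Theorem~4.4 of~\cite{BBHLS15}, to identify those degenerate-discriminant or split situations that do correspond to a genuine flat hyperbolic metric on an open subset of $\mathbb{R}^2$ and therefore to a hyperbolic Riccati equation.
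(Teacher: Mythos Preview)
Your overall architecture matches the paper's: reduce via Proposition~\ref{HypPro} to the dichotomy ``$V_2$ embeds in I$_9$'' versus ``$V_2$ embeds in I$_4$ (and not in I$_9$)'', then recycle the algebraic machinery from Theorem~\ref{MT2}. But two points need correction.

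First, a framing issue. You organise Case~I as ``$V_2$ embeds in P$_4$, i.e.\ relations~(\ref{CoeCon}) hold''. In the hyperbolic theorem the relevant container is I$_9$, not P$_4$. The constraints~(\ref{CoeCon}) do reappear, but for a different reason: $[{\rm I}_9,{\rm I}_9]$ is two-dimensional Abelian just as $[{\rm P}_4,{\rm P}_4]$ is, so requiring $\Upsilon_1,\Upsilon_2,\Upsilon_3$ to commute yields the same equations. Also, invoking Lemma~\ref{one} point~3 to ``identify $V_2$ with I$_9$'' is circular: that lemma lists properties an algebra of hyperbolic vector fields must have, whereas here you must \emph{prove} $V_2$ is hyperbolic. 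The paper instead argues that for $c_1^2-4c_0c_2c>0$ the adjoint action on $\langle Y_3,Y_4\rangle$ diagonalises over $\mathbb{R}$, so $[V_2,V_2]$ splits as a direct sum of one-dimensional ideals; consulting Table~\ref{table3} then pins $V_2$ down as I$_9$.

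Second, and more substantively, your Case~II is wrong. You claim sub-case~II.a yields condition~b) and sub-case~II.b yields condition~c). In the paper both sub-cases are \emph{eliminated}: in~II.a with $c_1\neq 0$ and $c_0\bar c_1-c_1\bar c_0=0$ the resulting $\mathfrak{sl}(2)$ has Casimir tensor of determinant zero, so $V_2$ is diffeomorphic to I$_5$, which is not hyperbolic (cf.\ Table~\ref{table5}); in~II.b with $c_0\neq 0$ the algebra contains a two-dimensional Abelian subalgebra, hence is not isomorphic to I$_4$, and $c_0=0$ collapses to a case already covered in~I.b. Thus the I$_4$ branch contributes nothing new. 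Conditions~a) and~b) of the theorem both arise inside Case~I.b (the affine $c=0$ subcase of the I$_9$ branch): condition~a) from $c_1=c_2=0$ with the basis~(\ref{Bas2}) and $\bar c_2^2+4\bar c_1>0$, and condition~b) from $c_2\neq 0$, $c_0=c_1=0$ (forcing $\bar c_0=\bar c_1=0$ via~(\ref{CoeCon})), where $V_2$ is the two-dimensional algebra I$_{14A}^{r=1}$. Your last paragraph therefore points the analysis in the wrong direction; the ``principal obstacle'' you describe does not exist, because the I$_4$ sub-cases are ruled out rather than refined.
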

\begin{proof}
If $X^{\rm RC}_2$ is diffeomorphic to a hyperbolic Riccati equation, then its irreducible Lie algebra consists of projective and hyperbolic vector fields. Hence, Proposition \ref{HypPro} states that $X^{\rm RC}_2$ admits an irreducible Vessiot--Guldberg Lie algebra diffeomorphic to a Lie subalgebra of I$_4$ or I$_9$. As a consequence, further analysis is divided into two cases: $V_2$ diffeomorphic to a Lie subalgebra of ${\rm I}_9$, and $V_2$ diffeomorphic to a Lie subalgebra of ${\rm I}_4$ not described in the previous case.

$\Diamond$ {\bf I) The Lie algebra $V_2$ is diffeomorphic to a Lie subalgebra of {\rm I}$_9$}

 If  $V_2$ is diffeomorphic to a Lie subalgebra of I$_9$, then $\Upsilon_1:=[\Gamma_1,\Gamma_0]$, $\Upsilon_2:=[\Gamma_1,\Upsilon_1]$, $\Upsilon_3:=[\Gamma_0,\Upsilon_1]$ belong to $[{\rm I}_9,{\rm I}_9]\simeq \mathbb{R}^2$ and commute among themselves, i.e. $[\Upsilon_1,\Upsilon_2]=[\Upsilon_1,\Upsilon_3]=[\Upsilon_2,\Upsilon_3]=0$. This establishes conditions on the coefficients of $\Gamma_0$ and $\Gamma_1$, namely the conditions (\ref{CoeCon}) found in Theorem \ref{MT2}. We investigate the subcases $c=0$ and $c\neq 0$.

{\it I.a) Non-affine subcase:} If we assume $c\neq 0$, then $\Gamma_1$ and $\Gamma_0$ must satisfy condition a) in (\ref{CoeCon}). It was already showed in the subcase I.a.1) of the proof of Theorem \ref{MT2} that the Lie algebra $V_2$ related to this case has a basis $Y_1,\ldots,Y_4$ given by (\ref{Bas}). Since $\dim {\rm I}_9=4$ and $V_2$ is assumed to be diffeomorphic to a Lie subalgebra of ${\rm I}_9$, it follows that $V_2$ is diffeomorphic to I$_9\simeq\mathfrak{h}_2\oplus\mathfrak{h}_2$. It was also proved in the subcase I.a.1) of the proof of Theorem \ref{MT2} that $V_2$ is diffeomorphic to P$_4$ for $c_1^2-4c_0c_2c<0$. Following the process given there, we obtain that if $c_1^2-4c_0c_2c>0$, then $[V_2,V_2]$ can be written as a non-trivial direct sum of subspaces invariant under the adjoint action of $V_2$. In view of Table \ref{table3} and recalling that the Lie algebra $[V_2,V_2]=\langle Y_3,Y_4\rangle$ is two-dimensional and it spans a distribution of rank two,  $V_2$ becomes diffeomorphic to I$_9$. If $c_1^2-4c_0c_2c=0$, then not every element of $V_2$ diagonalizes when acting on $[V_2,V_2]$ as it happens for I$_9$ acting on $[{\rm I}_9,{\rm I}_9]$. So, $V_2$ is not diffeomorphic to I$_9$.
 
 %Since $Y_3\wedge Y_4=0$ and the elements of $[{\rm I}_9,{\rm I}_9]$ does not, we obtain that $V_2$ cannot be diffeomorphic to I$_9$.
 
{\it I.b) Affine subcase:} Let us consider both subcases given in conditions (\ref{CoeCon}) for $c=0$.
%In view of Lemma \ref{RiccNoHam}, the Lie algebra $V_2$ cannot be diffeomorphic to a Lie algebra of Hamiltonian vector fields. Using this fact along with the results of  Table \ref{table2}, we obtain that $V_2$ can only be diffeomorphic to I$_9$.
%According to Table \ref{table1}, the Lie algebra  I$_9$ is isomorphic to $\mathfrak{h}_2\oplus\mathfrak{h}_2$. 
Consider the first set of conditions in (\ref{CoeCon}). It was proved in subcase I.a.2) of the proof of Theorem \ref{MT2} that $V_2$ admits a set of generators $Y_1,Y_2,Y_3$ of the form (\ref{Gen4}). In fact,
$$[Y_1,Y_2]=Y_3-c_1Y_1,\qquad [Y_2,Y_3]=0,\qquad [Y_1,Y_3]=Y_3-c_1Y_1.
$$ 
In view of (\ref{Gen4}),  $\dim V_2=3$ if and only if $c_0^2+c_1^2\neq0$. Assume that $\dim V_2=3$. In this case, $Y_2-Y_3$ belongs to the center of the Lie algebra and $(Y_3-c_1Y_1)\wedge Y_2=0$. In view of Table \ref{table3} and Table \ref{table5},  the Lie algebra $V_2\simeq \langle Y_2-Y_3\rangle \oplus \langle Y_3-c_1Y_1,V_2\rangle$ is diffeomorphic to I$^{r=1}_{15B}$ and $V_2$ becomes diffeomorphic to a hyperbolic Lie algebra on $\mathbb{R}^2$. 
%As in the proof of Theorem \ref{MT2}, we obtain that for $c_2\neq 0$, $c_0^2+c_1^2\neq0$ and $c=0$, the Lie algebra $V_2$ is diffeomorphic to I$_8^{\alpha=1}$, which is a Lie algebra of hyperbolic and Euclidean projective vector fields. 
If $c_0=c_1=0$, then $V_2$ is non-Abelian,  two-dimensional and spans a distribution of rank two. In view of Table \ref{table3}, the Lie algebra $V_2$ becomes diffeomorphic to a Lie algebra of hyperbolic projective vector fields: ${\rm I}^{r=1}_{14A}$, which is not a Lie subalgebra. Due to the relations (\ref{CoeCon}), this case leads to $\bar c_0=\bar c_1=0$. This shows that second-order Riccati equations (\ref{Class2}) with $c_2\neq 0$ and $c=0$ are locally diffeomorphic to a hyperbolic Riccati equation. 

Let us assume the second set of conditions in (\ref{CoeCon}), namely $c_2=c_1=c=0$. As shown in subcase I.b) of Theorem \ref{MT2}, the Lie algebra $V_2$ is a three-dimensional Lie algebra with a basis  (\ref{Bas2}) and $V_2$ will be imprimitive provided that $\bar c_2^2+4\bar c_1\geq 0$. Lemma \ref{one} ensures that hyperbolic Lie algebras are imprimitive, so condition $\bar c_2^2+4\bar c_1\geq 0$ must be satisfied. Observe that $V_2$ has an ideal $[V_2,V_2]=\langle Y_1,Y_3\rangle$ and every element of $V_2$ diagonalize when acting on it for $\bar c_2^2+4\bar c_1>0$. Additionally two elements of $V_2$ are proportional at each point if and only if $\bar c_1=0$. In view of Lemma \ref{8Lie}, the Lie algebra $V_2$ is diffeomorphic to I$_8$ for $\bar c_1\neq 0$.  If $\bar c_1=0$, then $V_2$ is diffeomorphic to ${\rm I}_{15B}^{r=1}$.
%Since $c=0$, then $Y_1,Y_2,Y_3$ span $V_2$, the vector field $Y_3$ spans an ideal of $V_2$ and 
%$$
%[Y_1,Y_2-Y_3]=0,\qquad [Y_1,Y_3]=Y_3,\qquad [Y_2-Y_3,Y_3]=(2c_1+\bar c_2)Y_3.
%$$
%If the vector fields are linearly independent, namely $c_0^2+c_1^2\neq 0$, they span a Lie algebra with a one-dimensional first-derived Lie algebra. Since $\dim [{\rm I}_8,{\rm I}_8]=1$, the Lie algebra $V_s$ is not diffeomorphic to I$_8$. If $c_1=c_0=0$, then $Y_1,Y_2,Y_3$ span a Lie algebra of vector fields diffeomorphic to I$^{r=1}_{14A}$ and $V_2$ satisfies the required conditions. 

$\Diamond$ {\bf II) The Lie algebra $V_2$ is diffeomorphic to a Lie subalgebra of {\rm I}$_4$ and it is not diffeomorphic to a Lie subalgebra of {\rm I}$_9$}

Since I$_4$ is three-dimensional. Hence, $V_2$ is a three or two-dimensional Lie algebra, which in view of Table \ref{table5} implies that $V_2$ must be a Lie algebra of Hamiltonian vector fields with $c=0$ satisfying conditions (\ref{AffRiccHam}): %Therefore, we obtain the same cases given for $c=0$ in Theorem \ref{MT2}:
\begin{itemize}
\item Subcase $c_2=\bar c_2=0$.  It was proved in the case II.a of the proof of Theorem \ref{MT2} that $V_2$ is spanned  by the vector fields (\ref{Gen3}). So, $\dim V_2<4$ if and only if $\bar c_1(c_0\bar c_1-c_1\bar c_0)=0$. If $c_1=0$, then $[Y_1,Y_2]=0$ and $V_2$ cannot be isomorphic to I$_4\simeq\mathfrak{sl}(2)$. If $c_1\neq 0$ and $c_0\bar c_1-c_1\bar c_0=0$, then $V_2$ becomes isomorphic to I$_5$ (see again case II.a in Theorem \ref{MT2}.
\item Subcase $c_1=\bar c_1=0$ with $c_0\bar c_2-c_2\bar c_0=0$. It was proved in case II.b of the proof of Theorem \ref{MT2} that $V_2$ is spanned by the vector fields (\ref{Gen}). If $c_0\neq0$, then $V_2$ is three-dimensional and $Y_1,Y_2$ commute. Hence, $V_2$ is not isomorphic to I$_4$. If $c_0=0$, then $\bar c_0=0$ also and $V_2$ is diffeomorphic to I$^{r=1}_{14A}$ that was already described in part I.a.2 of this proposition. 
%\item In the remaining cases, $V_2$ becomes three-dimensional with a one-dimensional ideal and it is not diffeomorphic to a Lie subalgebra of I$_4$, or $\dim V_2=2$ and $V_2$ becomes diffeomorphic to I$^{r=1}_{14A}$, which has already appeared before.
\end{itemize}
%Again as in Theorem \ref{MT2}, these conditions ensure that the vector fields $Y_1,Y_2,Y_3$ span $V_2$.  The Lie algebra I$_9$ is such that every element of $[{\rm I}_9,{\rm I}_9]$ spans an ideal of I$_9$. In view of the commutation relations (\ref{ComY}), we have $[Y_2,Y_3]=Y_4$ with $Y_3\in [{\rm I}_9,{\rm I}_9]$ and the Lie algebra $V_2$ does not match these conditions. Hence, it is not a Vessiot--Guldberg Lie algebra of hyperbolic projective vector fields. 
% Assume that $V_2$ is not diffeomorphic to a Lie subalgebra of I$_9$. Then, it must be diffeomorphic to a Lie subalgebra of I$_4$. This Lie algebra is Hamiltonian and the coefficients of $\Gamma_0$ and $\Gamma_1$ must satisfy one of the conditions given in 
%\cite{BBHLS15}. Since the vector fields must generate, at most, a three-dimensional Lie algebra, we obtain the same conditions as before. As the Lie algebra is imprimitive, it must be diffeomorphic to I$_4$, which completes our classification.

\end{proof}

\section{Applications of the Riccati hierarchy to partial differential equations}
In this section we discuss the second-order Riccati chain equation or, equivalently, an associative projective Riccati equation in order to study B\"acklund transformations for the Sawada--Kotera (SK) and Kaup-Kupershmidt (KK) equations. 

Let $u$ be a real function on $\mathbb{R}^2$. The Sawada--Kotera  \cite{SK74} and Kaup--Kupershmidt \cite{KK80,FG80}  equations take the form
$$
\begin{gathered}
u_t+(u_{4x}+30uu_{xx}+60u^3)_x=0,\\
u_t+\left(u_{4x}+30uu_{xx}+\frac{45}{2}u_x^2+60u^3\right)_x=0,
\end{gathered}
$$
respectively. Both partial differential equations are related to the linear spectral problem \cite{KK80}
\begin{equation}\label{lsp}
\Psi_{xxx}+6u\Psi_x+(6R-\lambda)\Psi=0,
\end{equation}
where $\Psi,R:\mathbb{R}\rightarrow \mathbb{R}$ are $x$-dependent functions and $\lambda$ is a spectral parameter. More specifically, the linear spectral problem (\ref{lsp}) for the SK equations  has $R=0$ and $R=u_x/2$ for the KK equations. The linear spectral problem gives rise to the Darboux transformations \cite{LR88}
\begin{equation}\label{Back}
({\rm SK})\qquad \bar u-u=\partial_x^2\log \Psi,\qquad\qquad ({\rm KK})\qquad \bar u-u=\frac 12\partial_x^2\log(\Psi\Psi_{xx}-\frac 12 \Psi_x^2+3u\Psi^2).
\end{equation}

Let us prove that Riccati equations of projective and conformal type can be used to study the above B\"acklund transformations. A third-order differential equation can be considered as a first-order system on the second-order tangent bundle, ${\rm T}^2\mathbb{R}\simeq\mathbb{R}^3$, by adding new variables $v:={\rm d}\Psi/{\rm d}x$ and $a:={\rm d}v/{\rm d}x$. Therefore, (\ref{lsp}) can be studied through the linear system
\begin{equation}\label{LSPfirst}
\left\{\begin{aligned}
\frac{{\rm d}\Psi}{{\rm d}x}&=v\\
\frac{{\rm d}v}{{\rm d}x}&=a\\
\frac{{\rm d}a}{{\rm d}x}&=-6uv+(\lambda-6R)\Psi\\
\end{aligned}\right.\Longleftrightarrow \qquad \frac{{\rm d}}{{\rm d}x}\left[\begin{array}{c}\Psi\\v\\a\end{array}\right]=
\left[\begin{array}{ccc}0&1&0\\0&0&1\\\lambda-6R&-6u&0\\
\end{array}\right]\left[\begin{array}{c}\Psi\\v\\a\end{array}\right].
\end{equation}
This is a linear system of differential equations associated with the Vessiot--Guldberg Lie algebra spanned by the vector fields $X_{ij}=x_i\partial/\partial x_j$ with $i\neq j$. The linear function $\Psi:V\rightarrow \mathfrak{sl}(3)$ mapping each vector field $X_{ij}$ into the traceless $n\times n$ matrix $M_{ij}$ with coefficients $(M_{ij})_k^l:=-\delta_k^i\delta^l_j$ is a Lie algebra isomorphism.

 If we set $\mathcal{O}:=\{(\Psi,v,a)\in {\rm T}^2\mathbb{R}\simeq\mathbb{R}^3:\Psi\neq 0\}$, we can define the projection $\pi:\mathcal{O}\ni (\Psi,v,a)\mapsto (y_1,y_2):=(v/\Psi,a/\Psi)\in \mathbb{R}^2$. Since the system (\ref{LSPfirst}) is linear and every solution can be multiplied by a constant to get another solution, we can perform a reduction which is equivalent to applying the projection $\pi$. Indeed, all elements of $V$ are projectable onto $\mathbb{R}^2$. The kernel of the Lie algebra morphism $\pi_*:X\in V\rightarrow \pi_*X\in \pi_*V$ is an ideal of $V$. Since $V\simeq\mathfrak{sl}(3)$ is simple, it has only the trivial ideals $0$ and $V$. But if $\pi_*$ is not identically zero, then $\ker \pi_*=\{0\}$ and $\pi_*V\simeq V$. Since the linear system (\ref{LSPfirst}) is associated with an $x$-dependent vector field taking values in $V$, its projection is determined by an $x$-dependent vector field taking values in the Vessiot--Guldberg Lie algebra $\pi_*V\simeq\mathfrak{sl}(3)$. It is known that every Lie algebra of vector fields on the plane isomorphic to $\mathfrak{sl}(3)$ is diffeomorphic to P$_8$ and an $x$-dependent vector field taking values in P$_8$ gives rise, up to a change of variables, to a projective Riccati equation (cf. \cite{BBHLS15}). In our case, the projection of (\ref{LSPfirst}) consists exactly of the following projective Riccati equation
$$
\left\{\begin{aligned}
\frac{{\rm d}y_1}{{\rm d}x}&=y_2-y_1^2,\\
\frac{{\rm d}y_2}{{\rm d}x}&=-6uy_1+(6-\lambda R)-y_1y_2,\\
\end{aligned}\right.
$$
which is related to a Vessiot--Guldberg Lie algebra of vector fields P$_8\simeq\mathfrak{sl}(3)$. Recall that Theorem \ref{MT1} tells us that this system is indeed equivalent to a second-order Riccati chain equation.

It is interesting that the B\"acklund transformations  (\ref{Back}) can be recast in the form
$$
\bar u=u+y_1-y_2^2,\qquad \bar u=u+\frac 12\partial_x\left[\frac{y_2+3\dot u+6 u y_1}{y_2-y_1^2/2+3u}\right].
$$
This shows that B\"acklund transformations for the KK and KS equations do not really depend on the linear spectral problem, but rather on the associated Riccati projective equations which contain all the necessary information for their description. 

Although the above procedure has been applied to the SK and KK equations, most of the above arguments can be applied to many other PDEs, such as the Boussinesq equation \mbox{%DIFAUXCMD
\cite{F83}
}%DIFAUXCMD
or the  Fitzhugh-Nagumo equations \cite{Ab08}, giving rise to similar results.

\section{Superposition rules for Gambier equations}
In this section we show how conformal Riccati equations can help in studying different types of Gambier equations. 

The second-order differential equation
\begin{equation}\label{G5}
\frac{{\rm d}^2y}{{\rm d}x^2}-\frac3{4y}\left(\frac{{\rm d}y}{{\rm d}x}\right)^2+\frac 32y^2\frac{{\rm d}y}{{\rm d}x}+\frac 14y^3+6uy-2\lambda=0,
\end{equation}
where $\lambda\in\mathbb{R}$ and $u$ is an arbitrary $x$-dependent function, belongs to the class of Gambier differential equation G25  \cite{Gambier}. For an arbitrary $x$-dependent function $u(x)$, this is not a Lie system when written as a first-order system by adding a new variable $v:={\rm d}y/{\rm d}x$. Indeed, consider the $x$-dependent vector field related to such a system
$$
X=v\partial_y+[3v^2/(4y)-3y^2v/2+y^3/4-2\lambda)\partial_v+6uy\partial_v.
$$
When $u(x)$ is not a constant function, the irreducible Lie algebra related to $X$ is spanned by the vector fields 
$$
X_1=v\partial_y+[3v^2/(4y)-3y^2v/2+y^3/4-2\lambda]\partial_v,\qquad X_2=y\partial_v,
$$
and their successive Lie brackets. It is a long but straightforward computation to show that $X_3:=[X_1,X_2] $ allows us to generate six vector fields $X_{k+1}:=[X_1,X_{k}]$, with $k=3,\ldots,8$. These vector fields are linearly independent over $\mathbb{R}$. It can be proved that they generate an infinite-dimensional Lie algebra of vector fields.

Nevertheless, the contact transformation \cite{Gambier}
$$
y:=\frac{\lambda}{{\rm d}z/{\rm d}x+z^2/2+3u}
$$
maps (\ref{G5}) into
$$
\frac{{\rm d}^2z}{{\rm d}x^2}+3z\frac{{\rm d}z}{{\rm d}x}+z^3+6uz+3\frac{{\rm d}u}{{\rm d}x}-\lambda=0,
$$
namely a second-order Riccati chain equation. This implies that every particular solution of G25 can be described through particular solutions of different types of second-order Riccati chain equations. We already proved that second-order Riccati equations are Lie systems when written as first-order systems. Hence, (\ref{G5}) allows us to use a Lie system in order to study a non-Lie system.

Lie systems can also be employed to study G27 \cite{Gambier}, namely
\begin{equation}\label{cc}
\frac{{\rm d}^2y}{{\rm d}x^2}=\frac{1}{2y}\left(\frac{{\rm d}y}{{\rm d}x}\right)^2-2cy\left(\frac{{\rm d}y}{{\rm d}x}\right)-c^2\frac{y^3}{2}-a(x)y-\frac{1}{2y}.
\end{equation}
The differential equation (\ref{cc}) can be written as a first-order system
\begin{equation}\label{G27First}
\left\{\begin{aligned}
\frac{{\rm d}y}{{\rm d}x}&=v,\\
\frac{{\rm d}v}{{\rm d}x}&=\frac{v^2}{2y}-2cyv-c^2\frac{y^3}{2}-a(x)y-\frac{1}{2y},
\end{aligned}\right.
\end{equation}
by adding a new variable $v:={\rm d}y/{\rm d}x$. Let us perform a change of variables given by
$$
y=y_1^{-1},\qquad v=-\frac{c+y_1y_2}{y_1^2}.
$$
This maps the Gambier equation (\ref{cc}), written as a first-order system, into a Riccati conformal equation of the form
$$\left\{
\begin{aligned}
\frac{{\rm d}y_1}{{\rm d}x}&=c+y_1y_2,\\
\frac{{\rm d}y_2}{{\rm d}x}&=a(x)+\frac 12(y_1^2+y_2^2).
\end{aligned}\right.
$$
Indeed, it is easy to prove that the vector field $X_x=(c+y_1y_2)\partial_{y_1}+[a(x)+(y_1^2+y_2^2)/2]\partial_{y_2}$ is, for every fixed $x\in \mathbb{R}$, a conformal vector field relative to the metric $dy_1\otimes dy_1-dy_2\otimes dy_2$. In this case, we have used an $x$-independent change of variables to map G27 into a Lie system, which shows that (\ref{G27First}) is a Lie system.
\section{Lax pair associated with the Sturm-Liouville problem}

Let us consider the Sturm-Liouville problem (SLP) for the function $w(x,\lambda)$ with a given potential function $u(x)$
\begin{equation}
\dfrac{{\rm d}^2}{{\rm d}x^2}w(x,\lambda)-\alpha\left(u(x),\lambda\right)w(x,\lambda)=0,\hspace{5mm}\lambda\in\mathbb{C}.
\label{eq1}
\end{equation}
The Sturm-Liouville problem appears in the analysis of relevant integrable systems, e.g. for the Darboux-Treibich-Verdier potentials \cite{Ve11}, which are the following rational trigonometric and elliptic potentials respectivaly:
$$
\alpha(z)=\frac{\alpha_1}{z^2}+\alpha_0,\quad \alpha=\frac{\alpha_1^2a^2}{\sin^2(az)}+\frac{\alpha_2a^2}{\cos^2(az)}+\alpha_0,\quad \alpha(z)=\sum_{i=1}^3\alpha_i\mathcal{P}(z-\omega_i)+\alpha_4+\alpha_4 \mathcal{P}(z)+\alpha_0.
$$
Here, $\alpha_1,\ldots,\alpha_4,a$ are arbitrary real constants and $\mathcal{P}(z)$ is the Weierstrass elliptic functions with periods $2\omega_1,2\omega_2$ and $\omega_3=\omega_1+\omega_2$.
The matrix linear problem for the wavefunction $\Phi\in SL(2,\mathbb{C})$ associated with (\ref{eq1}) has the form
\begin{equation}
\partial_x\Phi=L([u],\lambda)\Phi,\hspace{5mm}\mbox{where}\hspace{5mm}L([u],\lambda)=\begin{pmatrix} 0 & 1 \\ \alpha\left(u(x),\lambda\right) & 0 \end{pmatrix}\in \mathfrak{sl}(2,\mathbb{C}).
\label{eq2}
\end{equation}
We look for an $\mathfrak{sl}(2,\mathbb{C})$-valued matrix $M([u],\lambda)$ such that the Lax pair
\begin{equation}
\partial_xM+[M,L]=0
\label{eq3}
\end{equation}
holds and is equivalent to equation (\ref{eq1}). Here, we use the abbreviated notation of the jet space $[u]=(x,u,u_x,u_{xx},\ldots)$. The Lax pair (\ref{eq3}) can be regarded as the compatibility conditions of a linear spectral problem (LSP) of the form \cite{GP12}
\begin{equation}
%\begin{split}
 \partial_x\Phi([u],\lambda,y)=L([u],\lambda)\Phi([u],\lambda,y),\qquad
 \partial_y\Phi([u],\lambda,y)=M([u],\lambda)\Phi([u],\lambda,y),
%\end{split}
\label{eq4}
\end{equation}
where the matrices $M$ and $L$ are independent of the auxiliary variable $y$, i.e. $\partial_yL=\partial_yM=0,$
%\begin{equation}
%
%\label{eq5}
%\end{equation}
while the wavefunction $\Phi$ depends on $[u]$, $\lambda$ and the auxiliary variable $y$. Then the wavefunction $\Phi$ can be given in  the factored form
\begin{equation}
\Phi=\begin{pmatrix} w_1 & w_2 \\ \frac{{\rm d}}{{\rm d}x}w_1 & \frac{{\rm d}}{{\rm d}x}w_2 \end{pmatrix}\begin{pmatrix} e^{ay} & 0 \\ 0 & e^{-ay} \end{pmatrix}\in SL(2,\mathbb{C}),\hspace{5mm}a\in\mathbb{C},
\label{eq6}
\end{equation}
where $w_1$ and $w_2$ are two linearly independent particular solutions of the LSP (\ref{eq1}) which can be parametrized by a function $m(x)$ as follows
\begin{equation}
\begin{gathered}
 w_1=k_1m^{1/2}\exp\left(a\int\limits_{x_0}^x\frac{ds}{m}\right),\hspace{5mm}k_1\in\mathbb{C},\\
 w_2=k_1m^{1/2}\exp\left(a\int\limits_{x_0}^x\frac{ds}{m}\right)\left[k_2-\frac{1}{2ak_1^2}\exp\left(-2a\int_{x_0}^x\frac{ds}{m}\right)\right],\hspace{5mm}k_2\in\mathbb{C}.
\end{gathered}
\label{eq7}
\end{equation}
The general form of the $\mathfrak{sl}(2,\mathbb{C})$-valued matrix function $M$ is given by
\begin{equation*}
M=\begin{pmatrix} -\frac{1}{2}m_x & m \\ \frac{4a^2-m_x^2}{4m} & \frac{1}{2}m_x \end{pmatrix},
\label{eq8}
\end{equation*}
where the function $m$ satisfies the Gambier equation written as the linear third-order differential equation \cite[p. 27]{Gambier}
\begin{equation}
(\partial_x^3-4\alpha\partial_x-2\alpha_x)m=0.
\label{eq9}
\end{equation}
The contact transformation
$$
y_1:=\frac{m_x}{m},\qquad y_2:=\frac{m_{xx}}m
$$ 
maps G25 onto the projective Riccati equations
$$\left\{
\begin{aligned}
\frac{{\rm d}y_1}{{\rm d}x}&=y_2-y_1^2,\\
\frac{{\rm d}y_2}{{\rm d}x}&=4\alpha y_1+2\alpha_x-y_2y_1.\\
\end{aligned}\right.
$$
For arbitrary $x$-dependent coefficient $\alpha$, the previous Lie system is related to the Vessiot--Guldberg Lie algebras spanned by the vector fields
$$
Z_1:=(y_2-y_1^2)\partial_{y_1}-y_2y_1\partial_{y_2},\qquad Z_2:=y_1\partial_{y_2},\qquad Z_3:=\partial_{y_2}.
$$
In view of Table \ref{table3}, these vector fields are related to a Vessiot--Guldberg Lie algebra isomorphic to $\mathfrak{sl}(3)$.

Finally, it is worth noting that equation (\ref{eq9}) admits a first integral
\begin{equation}\label{eqqqq}
2mm_{xx}-m_x^2-4\alpha m^2+K=0,\qquad K\in \mathbb{R}.
\end{equation}
Equation (\ref{eqqqq}) can be written as a first-order system by adding a new variable $v=m_x$
$$
\left\{\begin{aligned}
\frac{{\rm d}m}{{\rm d}x}&=v,\\
\frac{{\rm d}v}{{\rm d}x}&=\frac{v^2}{2m}+2\alpha m.\\
\end{aligned}\right.
$$
This differential equations is related to an $x$-dependent vector field $X=2\alpha X_1+X_3$, where
$$
X_1:=v\frac{\partial}{\partial m}+\frac{v^2}{2m}\frac{\partial}{\partial v},\qquad X_2:=m\frac{\partial}{\partial m},\qquad X_3:=m\frac{\partial}{\partial v}
$$
have commutation relations
$$
[X_1,X_2]=X_1,\qquad [X_1,X_3]=-X_2,\qquad [X_2,X_3]=X_3.
$$
By using the Killing form, we obtain that the Lia algebra $V=\langle X_1,X_2,X_3\rangle$ is isomorphic to $\mathfrak{sl}(2)$.  In view of Table \ref{table3} and using that this Lie algebra spans a distribution of rank two, this Lie algebra must be diffeomorphic to one of the classes I$_4$, I$_5$, P$_2$. To determine exactly to which class $V$ is diffeomorphic to, we make use of of \cite[Theorem 4.4]{BHLS15}. The Casimir tensor field for this case reads
$$
\mathcal{R}=Y_1\otimes Y_3+Y_3\otimes Y_1+Y_2\otimes Y_2=m^2\partial_m\otimes \partial_m+v^2\partial_v\otimes\partial_v+vm(\partial_m\otimes \partial_v+\partial_v\otimes \partial_m).
$$
Hence, the determinant of the coefficients is zero and in view of the above-mentiond theorem $V$ is locally diffeomorphic to I$_5$. In view of the Table \ref{table2} and Table \ref{table3}, this is a Lie algebra of projective vector fields. We see from Table \ref{table5} that it is not related to a Lie algebra of conformal vector fields.
%If follows that 
%Equation (\ref{eq9}) admits the first integral
%\begin{equation*}
%2mm_{xx}-m_x^2-4\alpha^2+K=0,\hspace{5mm}K\in\mathbb{C}.
%\label{eq10}
%\end{equation*}
%In the case when $K=0$ we obtain the equation
%\begin{equation*}
%m_{xx}-\frac{m_x^2}{2m}-\frac{2\alpha^2}{m}=0.
%\label{eq11}
%\end{equation*}
%If $w_1$ and $w_2$ are two linearly independent solutions of (\ref{eq1}) then $(c_1w_1+c_2w_2)^2$ satisfies equation (\ref{eq9}), so $w_1^2$, $w_2^2$ and $w_1w_2$ are three particular solutions of (\ref{eq9}). Equation (\ref{eq9}) is therefore integrated by the second-order differential equation (\ref{eq1}).

\section{Conclusions and Outlook}
The main objective of this study is to prove that the members of the Riccati hierarchy are equivalent to projective Riccati equations. This allows us to identify Riccati chain equations as the equations described by an $x$-dependent vector field taking values in the Lie algebra of projective vector fields of a flat Riemannian metric. The change of variables mapping the flat Riemannian metric into a diagonal form is the change of variables mapping Riccati chain equations into projective Riccati equations. As an application, we have derived superposition rules for all Riccati chain equations. 

We have studied the relations between the conformal and projective vector fields on $\mathbb{R}^2$ relative to different metrics. As additional results, we have proved several  propositions concerning the relations of inclusion between finite-dimensional Lie algebras of vector fields on the plane given in  \cite{BBHLS15}. Moreover, we found that the non-exhaustive relations of inclusion between Lie algebras given in that work are indeed all the relations that can be obtained in the case of projective and conformal vector fields. %This supplements the information given in \cite{BBHLS15} on this topic.

Finally, several applications of Riccati chain equations to the Sawada-Kotera and Kaup-Kupershmidt PDEs have been described. In addition, new relations between Gambier equations, Sturm-Liouville problems and the Riccati hierarchy have been established. 

In the future, we aim to show that most integrable PDEs can be studied through Lie systems. Additionally, we plan to use contact transformations to map Painlev\'e equations onto Lie systems. We also aim to study which differential equations can be mapped onto Lie systems through such transformations. Further exploration of relations between these two systems and their various properties are planned in our future work. This could increase the range of solvability of Lie systems by the technique described in this paper.

\section*{Acknowledgements}

The research of J. de Lucas was partially financed by the project MAESTRO under project number DEC-2012/06/A/ST1/00256. A.M. Grundland's work was supported by a research grant from the National Sciences and Engineering Research Council of Canada. J. de Lucas would also like to thank the Centre de Recherches Math\'ematiques (CRM) of the Universit\'e de Montr\'eal for its hospitality and attention during the research stay which gave rise to this work. Fruitful discussions on the present paper with P. Winternitz are also acknowleged.

%%
%% Start line numbering here if you want
%%
% \linenumbers

%% main text

%% The Appendices part is started with the command \appendix;
%% appendix sections are then done as normal sections
%% \appendix

%% \section{}
%% \label{}

%% References
%%
%% Following citation commands can be used in the body text:
%% Usage of \cite is as follows:
%%   \cite{key}         ==>>  [#]
%%   \cite[chap. 2]{key} ==>> [#, chap. 2]
%%

%% References with bibTeX database:

\end{document}